\documentclass[a4paper]{article}

\usepackage[english]{babel}
\usepackage[utf8x]{inputenc}
\usepackage[T1]{fontenc}

\usepackage[a4paper,top=2cm,bottom=2cm,left=2cm,right=2cm,]{geometry}

\usepackage{soul}
\usepackage{amsmath}
\usepackage{amsthm}
\usepackage{amssymb}
\usepackage{graphicx}
\usepackage[colorinlistoftodos]{todonotes}
\usepackage[colorlinks=true, allcolors=blue]{hyperref}
\usepackage[linesnumbered]{algorithm2e}
\usepackage{tikz}
\usetikzlibrary{shapes.geometric}
\usetikzlibrary{arrows.meta,arrows}
\usepackage{mathtools}
\usepackage{enumitem}
\usepackage{bm}
\usepackage[normalem]{ulem}

\parskip=1em

\newcommand{\todoLeo}[1]{}
\newcommand{\leo}[1]{#1}
\newcommand{\leoo}[1]{#1}
\newcommand{\leooo}[1]{#1}
\newcommand{\leoooo}[1]{#1}
\newcommand{\charles}[1]{#1}

\newcommand{\delete}[1]{}
\newcommand{\deletee}[1]{}

\newcommand{\todoRemie}[1]{}
\newcommand{\remiee}[1]{#1}
\newcommand{\remie}[1]{#1}

\newcommand{\yuki}[1]{#1}

\newcommand{\yukiii}[1]{#1}
\newcommand{\yukim}[1]{#1}
\newcommand{\yukimm}[1]{#1}
\newcommand{\todoYuki}[1]{}

\newcommand{\markj}[1]{#1}

\usepackage{color}
\newcommand{\blue}[1]{#1}
\newcommand{\revL}[1]{#1}
\newcommand{\latest}[1]{#1}
\newcommand{\may}[1]{#1}
\newcommand{\shortened}[1]{#1}

\newcommand{\red}[1]{#1}

\newcommand{\marknew}[1]{#1}

\newtheorem{theorem}{Theorem}
\newtheorem{lemma}{Lemma}
\newtheorem{proposition}{Proposition}
\newtheorem{corollary}{Corollary}

\theoremstyle{definition}
\newtheorem{definition}{Definition}

\title{Orienting undirected phylogenetic networks\footnote{This research is based on discussions during and directly after the workshop ``Distinguishability in Genealogical Phylogenetic Networks'' held at the Lorentz Center, Universiteit Leiden, The Netherlands, 2018.}}
\author{Katharina T. Huber\footnote{School of Computing Sciences, University of East Anglia, NR4 7TJ, Norwich, United Kingdom, \{K.Huber, V.Moulton\}@uea.ac.uk} \and Leo van Iersel\footnote{Corresponding author, L.J.J.vanIersel@tudelft.nl.}\, \footnote{Delft Institute of Applied Mathematics, Delft University of Technology, Van Mourik Broekmanweg 6, 2628 XE, Delft, The Netherlands, \{L.J.J.vanIersel, R.Janssen-2, M.E.L.Jones, Y.Murakami\}@tudelft.nl. Research funded in part by the Netherlands Organization for Scientific Research (NWO), including Vidi grant 639.072.602, and partly by the 4TU Applied Mathematics Institute.} \and Remie Janssen\footnotemark[4] \and Mark Jones\footnotemark[4] \and Vincent Moulton\footnotemark[2] \and Yukihiro Murakami\footnotemark[4] \and Charles Semple\footnote{School of Mathematics and Statistics, University of Canterbury, Private Bag 4800, Christchurch 8140, New Zealand, charles.semple@canterbury.ac.nz. Charles Semple was supported by the New Zealand Marsden Fund.}
      }

\begin{document}

\maketitle
           
\begin{abstract}
This paper studies the relationship between undirected (unrooted) and directed (rooted) phylogenetic networks. We describe a polynomial-time algorithm for deciding whether an undirected \latest{nonbinary} phylogenetic network, given the locations of the root and reticulation vertices, can be oriented as a directed \latest{nonbinary} phylogenetic network. Moreover, we characterize when this is possible and show that, in such instances, the resulting directed \latest{nonbinary} phylogenetic network is unique. In addition, without being given the location of the root and the reticulation vertices, we describe an algorithm for deciding whether an undirected binary phylogenetic network~$N$ can be oriented as a directed binary phylogenetic network of a certain class. The algorithm is fixed-parameter tractable (FPT) when the parameter is the level of~$N$ and is applicable to \shortened{classes of directed phylogenetic networks that satisfy certain conditions. As an example, we show that the well-studied class of binary tree-child networks satisfies these conditions.}
\end{abstract}

\section{Introduction}

Phylogenetic networks are graphs \blue{which are} used to describe, for example, the evolutionary relationships of extant species~\cite{huson2010phylogenetic}. \blue{Such networks} generalize the more widely-known \blue{concept of} phylogenetic trees. The leaves of such a \blue{phylogenetic} network represent extant species, while the interior vertices represent hypothetical ancestors.


Phylogenetic networks are usually rooted \blue{acyclic} directed graphs, where the vertices and arcs combine to represent evolutionary events (e.g\remiee{.,} hybridization or \blue{horizontal} gene transfer). However, unrooted undirected graphs have also been studied which still aim to describe an explicit evolutionary history, but do not include directions on the edges~\cite{morrison2005networks}. Reasons for not including \blue{directions} can be uncertainty about the location of the root \blue{and} uncertainty about the order in which reticulate events occurred, that is, events where species or lineages merge. Moreover, it can be unclear which vertices represent reticulate events and which \blue{vertices} represent speciation events or ``vertical'' descent. See Figure~\ref{fig:intro} for an example of an undirected and a directed phylogenetic network which illustrates these differences in perspective. Note that unrooted networks are also used as a tool to display patterns within data (e.g.\remiee{,} split networks~\cite{neighbornet}) but, as these networks do not aim to explicitly represent the evolution of the underlying species, we do not focus on them here.

\blue{In addition} to directed \blue{and} undirected phylogenetic networks, a third option is partly-directed phylogenetic networks, \blue{that is, phylogenetic networks in which only some of the edges are oriented.} Such networks make sense in light of the discussion above and, indeed, several published \blue{phylogenetic} networks in the biological literature are partly-directed, e.g., \blue{of} grape \blue{cultivars}~\cite{grapes} and \blue{of the evolutionary history of Europeans}~\cite{humans}, or contain bi-directed arcs, e.g., \blue{of} bears~\cite{bears}. Also, the popular software tool SNAQ produces partly-directed phylogenetic networks~\cite{SNAQ}. \blue{Despite these publications,} partly-directed phylogenetic networks have yet to be studied from a mathematical perspective, even though this was suggested by David Morrison in 2013~\cite{blog}: ``Perhaps the possibility of partly directed phylogenetic networks needs more consideration.''

In this paper, we \blue{study two} fundamental questions regarding the relationship between undirected and directed phylogenetic networks. \blue{In the first part of the paper, we investigate the following.} Suppose we \blue{are given} the underlying undirected phylogenetic network of some directed (\leoo{nonbinary}) phylogenetic network \blue{$N$} as well as the location of the root \blue{of $N$} and the \yukim{desired in-degrees of the} reticulation vertices (the vertices where lineages merge) \blue{of $N$}. Does this give us enough information to uniquely reconstruct \blue{$N$}? We show that this is indeed the case. Moreover, given the locations of the root and the desired in-degrees of the reticulation vertices, we \blue{characterize} when an undirected phylogenetic network \blue{$N'$} can be oriented as a directed phylogenetic network (see Theorem~\ref{thm:nonbinary_characterization}). For an example of an \blue{undirected} binary phylogenetic network where this is \emph{not} possible, see Figure~\ref{fig:notorientable}. Following this, we give a linear-time algorithm in the number of edges of \blue{$N'$} to find such an orientation.
We also show how to apply the algorithm to partly-directed networks. In particular, we show how one can decide \blue{in quadratic time in the number of edges} whether a given partly-directed network is a semi-directed network, i.e., whether it can be obtained from some directed phylogenetic network by suppressing the root and removing all directions from non-reticulation edges \blue{(see Corollary~\ref{cor:semidirected})}.

\begin{figure}[t]
\centering
 \begin{tikzpicture}
	 \tikzset{edge/.style={thick}}
     \tikzset{arc/.style={-Latex,thick}}
     \tikzset{revarc/.style={Latex-,thick}}
	 \begin{scope}[xshift=0cm,yshift=0cm]
    \draw[thick, fill, radius=0.06] (0,0) circle node[below] {\small T. turgidum};
    \draw[thick, fill, radius=0.06] (2,0) circle node[below] {\small T. aestivum};
    \draw[thick, fill, radius=0.06] (0,.5) circle;
    \draw[thick, fill, radius=0.06] (2,.5) circle;
    \draw[thick, fill, radius=0.06] (3,1.5) circle;
    \draw[edge] (2,.5) -- (3,1.5);
    \draw[thick, fill, radius=0.06] (3.5,1) circle node[below right] {\small A. tauschii};
    \draw[thick, fill, radius=0.06] (3,2.5) circle;
    \draw[edge] (3,2.5) -- (3,1.5);
    \draw[thick, fill, radius=0.06] (3.5,2.5) circle node[right] {\small A. sharonensis};
    \draw[thick, fill, radius=0.06] (3,3.5) circle;
    \draw[edge] (3,2.5) -- (3,3.5);
    \draw[thick, fill, radius=0.06] (2,2.5) circle;
    \draw[edge] (2,2.5) -- (3,3.5);
    \draw[thick, fill, radius=0.06] (1,3.5) circle;
    \draw[edge] (2,2.5) -- (1,3.5);
    \draw[thick, fill, radius=0.06] (-1,3.5) circle;
    \draw[edge] (-1,3.5) -- (1,3.5);
    \draw[thick, fill, radius=0.06] (-1.5,4) circle node[above] {\small T. monococcum \quad};
    \draw[edge] (-1,3.5) -- (-1.5,4);
    \draw[thick, fill, radius=0.06] (-1,2.5) circle;
    \draw[edge] (-1,3.5) -- (-1,2.5);
    \draw[thick, fill, radius=0.06] (-1.5,2.5) circle node[left] {\small T. uartu};
    \draw[edge] (-1.5,2.5) -- (-1,2.5);
    \draw[thick, fill, radius=0.06] (-1,1.5) circle;
    \draw[thick, fill, radius=0.06] (2,1.5) circle;
    \draw[thick, fill, radius=0.06] (1.5,2) circle node[above left] {\small A. speltoides};
    \draw[edge] (2,1.5) -- (-1,1.5);
    \draw[edge] (1.5,2) -- (2,1.5);
    \draw[edge] (2,1.5) -- (2,2.5);
    \draw[edge] (2,1.5) -- (1.5,1.5);
    \draw[edge] (-1,1.5) -- (-1,2.5);
    \draw[edge] (-1,1.5) -- (0,.5);
    \draw[edge] (3,3.5) -- (1,3.5);
    \draw[edge] (3,2.5) -- (3.5,2.5);
    \draw[edge] (3.5,1) -- (3,1.5);
    \draw[edge] (0,0) -- (0,.5);
    \draw[edge] (2,0) -- (2,.5);
    \draw[edge] (0,.5) -- (2,.5);
    \end{scope}
    \begin{scope}[xshift=8cm,yshift=0cm]
    \draw[thick, fill, radius=0.06] (0,0) circle node[below] {\small T. turgidum};
    \draw[thick, fill, radius=0.06] (2,-.5) circle node[below] {\small T. aestivum};
    \draw[thick, fill, radius=0.06] (0,.5) circle;
    \draw[thick, fill, radius=0.06] (2,0) circle;
    \draw[thick, fill, radius=0.06] (3,1.5) circle;
    \draw[revarc] (2,0) -- (3,1.5);
    \draw[thick, fill, radius=0.06] (3.5,1) circle node[below right] {\small A. tauschii};
    \draw[thick, fill, radius=0.06] (3,2.5) circle;
    \draw[arc] (3,2.5) -- (3,1.5);
    \draw[thick, fill, radius=0.06] (3.5,2) circle node[below right] {\small A. sharonensis};
    \draw[thick, fill, radius=0.06] (3,3.5) circle;
    \draw[arc] (3,3.5) -- (3,2.5);
    \draw[thick, fill, radius=0.06] (2,4.5) circle;
    \draw[arc] (2,4.5) -- (3,3.5);
    \draw[thick, fill, radius=0.06] (0,4.5) circle;
    \draw[thick, fill, radius=0.06] (1,5) circle;
    \draw[arc] (1,5) -- (0,4.5);
    \draw[arc] (1,5) -- (2,4.5);
    \draw[thick, fill, radius=0.06] (-1,3.5) circle;
    \draw[revarc] (-1,3.5) -- (0,4.5);
    \draw[thick, fill, radius=0.06] (-1.5,3) circle node[left] {\small T. monococcum};
    \draw[arc] (-1,3.5) -- (-1.5,3);
    \draw[thick, fill, radius=0.06] (-1,2.5) circle;
    \draw[arc] (-1,3.5) -- (-1,2.5);
    \draw[thick, fill, radius=0.06] (-1.5,2) circle node[below left] {\small T. uartu};
    \draw[revarc] (-1.5,2) -- (-1,2.5);
    \draw[thick, fill, radius=0.06] (-1,1.5) circle;
    \draw[thick, fill, radius=0.06] (1,3.5) circle;
    \draw[thick, fill, radius=0.06] (1.5,3) circle node[below] {\small A. speltoides};
    \draw[arc] (1,3.5) -- (1.5,3);
    \draw[arc] (1,3.5) -- (-1,1.5);
    \draw[revarc] (1,3.5) -- (2,4.5);
    \draw[revarc] (-1,1.5) -- (-1,2.5);
    \draw[arc] (-1,1.5) -- (0,.5);
    \draw[revarc] (3,3.5) -- (0,4.5);
    \draw[arc] (3,2.5) -- (3.5,2);
    \draw[revarc] (3.5,1) -- (3,1.5);
    \draw[revarc] (0,0) -- (0,.5);
    \draw[arc] (2,0) -- (2,-.5);
    \draw[arc] (0,.5) -- (2,0);
    \end{scope}
	\end{tikzpicture}
\caption{\label{fig:intro} An undirected phylogenetic network (left) and a directed phylogenetic network (right) \leoo{based on~\cite{blogwheat,wheat}}.
Note that the directed phylogenetic network can be obtained from the undirected phylogenetic network by adding a root vertex and orienting the edges.}
\end{figure}

\begin{figure}[h]
\centering
 \begin{tikzpicture}
	 \tikzset{edge/.style={thick}}
     \tikzset{arc/.style={-Latex,thick}}
	 \begin{scope}[xshift=0cm,yshift=0cm]
    \draw[arc] (1,0.4) -- (1,0);
	\draw[thick, fill, radius=0.06] (0,0) circle;
	\draw[thick, fill, radius=0.06] (-1,0) circle node[left] {$x$};
	\draw[edge] (-1,0) -- (0,0);
	\draw[fill] (1.9,-.1) rectangle (2.1,.1);
    \draw[edge] (0,0) -- (2,0);
    \draw[fill] (.9,-1.1) rectangle (1.1,-.9);
    \draw[edge] (0,0) -- (1,-1);
    \draw[edge] (2,0) -- (1,-1);
    \draw[thick, fill, radius=0.06] (3,-1) circle;
    \draw[edge] (1,-1) -- (3,-1);
    \draw[edge] (2,0) -- (3,-1);
    \draw[thick, fill, radius=0.06] (4,-1) circle node[right] {$y$};
    \draw[edge] (3,-1) -- (4,-1);
    \end{scope}
    \begin{scope}[xshift=6cm,yshift=0cm]
	\draw[thick, fill, radius=0.06] (0,0) circle;
	\draw[thick, fill, radius=0.06] (-1,0) circle node[left] {$x$};
	\draw[edge] (-1,0) -- (0,0);
	\draw[thick, fill, radius=0.06] (2,0) circle;
    \draw[edge] (0,0) -- (2,0);
    \draw[thick, fill, radius=0.06] (1,-1) circle;
    \draw[edge] (0,0) -- (1,-1);
    \draw[edge] (2,0) -- (1,-1);
    \draw[thick, fill, radius=0.06] (3,-1) circle;
    \draw[edge] (1,-1) -- (3,-1);
    \draw[edge] (2,0) -- (3,-1);
    \draw[thick, fill, radius=0.06] (4,-1) circle node[right] {$y$};
    \draw[edge] (3,-1) -- (4,-1);
    \end{scope}
	\end{tikzpicture}
\caption{\label{fig:notorientable} Left, an undirected \blue{binary} phylogenetic network with specified reticulation vertices (indicated by squares) and root location (indicated with an arrow) that has no orientation as a directed phylogenetic network. Right, the same undirected \blue{binary} phylogenetic network but with no information about the root and reticulation vertices. This \blue{latter undirected binary phylogenetic} network can be oriented as a \blue{binary} stack-free network, but not as a \blue{binary} tree-child network.}
\end{figure}

In the second part of the paper, we study \blue{the following question}. \blue{Given an} undirected binary phylogenetic network \blue{$N$}, can \blue{$N$} be oriented to become a directed \blue{binary} phylogenetic network of a \blue{given} class (with no information about the location of the root or the reticulation vertices). \blue{Again see} Figure~\ref{fig:notorientable} for an example. We give an algorithm \blue{for this task} that is fixed-parameter tractable (FPT), where the \emph{level} of \blue{$N$} is the parameter \blue{(see Algorithm~\ref{alg:C_orientation})}. \blue{The level of $N$ is a measure of its tree-likeness. (A formal definition is given in the next section.)} The algorithm can be applied to a wide range of classes of \blue{directed binary phylogenetic} networks, including the \blue{well-studied} classes of tree-child, tree-based, reticulation-visible, and stack-free networks, \blue{as well as} the recently-introduced classes of valid networks~\cite{murakami2018reconstructing} and orchard networks~\cite{erdos2019orchards,janssen2018cpn} (see Section~\ref{sec:classesintro} for definitions). \shortened{We include the proof for the class tree-child as an example (see Section~\ref{sec:Classes}) since this is one of the most well-studied classes of phylogenetic networks. The proofs for the other classes, following a similar approach, can be found in Appendix~\ref{app}.} \latest{To obtain this algorithm, we first describe an FPT algorithm where the number of reticulation vertices is the parameter (see Algorithm~\ref{alg:binary_blob_orientation_TC_SF}). The final FPT algorithm (Algorithm~\ref{alg:C_orientation}, which relies on Algorithm~\ref{alg:binary_blob_orientation_TC_SF}) may scale better because it has the level as the parameter, which is always smaller or equal to the reticulation number.} \blue{All of the algorithms in the paper have been} implemented and \blue{are} publicly available~\cite{implementation}.


To the best of our knowledge, the questions \blue{investigated} in this paper have not been studied \blue{previously}. \blue{To date}, most publications consider either directed or undirected phylogenetic networks, but do not study how they are related. Exceptions are a paper studying how to optimally root unrooted trees as to minimize their hybridization number~\cite{van2018unrooted} and papers about orienting split networks~\cite{nanuq,huson2005reconstruction,strimmer2000likelihood}. Also see~\cite{uprooted} which looks into the relationship between undirected phylogenetic networks and Buneman graphs. There is also a large body of literature on orienting graphs (see, e.g.,~\cite{Asahiro2016,alon1992colorings}), but such papers are not applicable to our situation because, for example, they do not require the orientation to be acyclic (one exception being~\cite{atallah1983graph} which is discussed later) or they do not have our degree restrictions. \remiee{Lastly, there are two papers that provide results on the orientability of genealogical phylogenetic networks. However, these only provide such results as sidenotes to their main purpose: rearranging networks~\cite{janssen2018exploring}, and characterizing \blue{undirected (unrooted)} tree-based networks~\cite{Francis2018}.}


\section{Preliminaries}\label{sec:Preliminaries}

\yuki{Throughout the paper,~$X$ denotes a non-empty finite set. Biologically speaking, $X$ can be \blue{viewed} as a set of extant taxa.} \leoo{An \emph{undirected  phylogenetic network} \blue{$N$} on $X$ is an undirected connected (simple) graph, in which no vertex has degree~$2$, and \blue{the set of} vertices of degree~$1$ (the \emph{leaves}) \blue{is $X$}. \blue{We say $N$ is} \emph{binary} if each non-leaf vertex has degree~$3$.} \yuki{An undirected phylogenetic network \leoo{with no cycles} \blue{is} an \emph{undirected phylogenetic tree}}. \yukiii{The \emph{reticulation number} of an undirected phylogenetic network is the number of edges that need to be removed to obtain, after suppressing degree-2 vertices, an undirected phylogenetic tree.}


A \emph{directed phylogenetic network} \blue{$N'$} on~$X$ is a directed acyclic graph \blue{with no parallel arcs} \remiee{in which exactly one vertex has in-degree~$0$ and this vertex has out-degree~$2$ (the {\em root}), no vertices have in-degree~$1$ and out-degree~$1$, and \blue{the set of} vertices of out-degree~$0$ is $X$ \blue{and all such vertices} have in-degree~$1$. \blue{The vertices of out-degree~$0$ are the {\em leaves} of $N'$.} \blue{We say $N'$ is} \emph{binary} if all non-root non-leaf vertices either have in-degree~$1$ and out-degree~$2$, or \latest{have} in-degree~$2$ and out-degree~$1$. Vertices with in-degree at least~$2$ are \emph{reticulations}, \blue{while} vertices with in-degree~$1$ are \emph{tree vertices}.} \blue{Arcs directed into a reticulation are called {\em reticulation arcs}.} \blue{Furthermore,} an arc \blue{of $N'$} is \emph{pendant} if it is incident to a leaf. If $(u,v)$ is an arc \blue{of $N'$}, then~$u$ is a \emph{parent} of~$v$, and~$v$ is a \emph{child} of~$u$. \yuki{A directed (binary) phylogenetic network with no reticulations is a \emph{directed (binary) phylogenetic tree.}}

\blue{To avoid ambiguity, when the need arises we will say a ``nonbinary phylogenetic network'' to mean a phylogenetic network that is not necessarily binary.}
Furthermore, we note that in the phylogenetics literature the terms \emph{rooted} and \emph{unrooted} \blue{phylogenetic} network are often used. However, since the location of the root does not \remiee{necessarily} imply the direction of all the arcs, we will use \emph{directed} and \emph{undirected} instead of \emph{rooted} and \emph{unrooted}, respectively.



\yukim{Two undirected \blue{phylogenetic} networks~$N$ and~$M$ on~$X$ are \emph{isomorphic} if there exists a bijection $f$ from the vertex set of $N$ to the vertex set of $M$ such that $f(x)=x$ for all $x\in X$, and such that~$\{u,v\}$ is an edge of~$N$ if and only if~$\{f(u), f(v)\}$ is an edge of~$M$.} Given an undirected \blue{phylogenetic} network $N$ on $X$ and a directed \blue{phylogenetic} network $N'$ on $X$, we say that $N$ is the \emph{underlying network} of $N'$ and \blue{that} $N'$ is an \leo{\emph{orientation}} of $N$ if the \blue{undirected phylogenetic} network \blue{obtained} from~$N'$ by replacing all directed arcs with undirected edges and suppressing \remiee{its degree-$2$ root} is isomorphic to~$N$. We say that~$N$ is \emph{orientable} if it has at least one orientation.

\remie{A \emph{biconnected component} of a directed or undirected \blue{phylogenetic} network is a maximal subgraph that cannot be disconnected by deleting a single vertex. A \remiee{biconnected component} is called a \emph{blob} if it contains at least three vertices.}
An \leoo{undirected \blue{phylogenetic} network} is \emph{level-$k$} if, by deleting at most~$k$ edges from each blob, \blue{the resulting graph is} a tree, \blue{that is, has no cycles}. A \leoo{directed \blue{phylogenetic} network} is \emph{level-$k$} if its underlying network is level-$k$. \latest{Hence, a directed binary phylogenetic network is level-$k$ if and only if each blob contains at most~$k$ reticulations.}

\blue{A graph is {\em mixed} if it contains} both undirected and directed edges. A \emph{partly-directed} \blue{phylogenetic} network is a mixed graph \blue{that is} obtained from an undirected \blue{phylogenetic} network by orienting a subset of its edges. \blue{An {\em orientation} of a partly-directed phylogenetic network $N$ on $X$ is a directed phylogenetic network on $X$ that is obtained from $N$ by inserting the root along a \may{directed or undirected edge, and orienting all undirected edges}.} A \emph{semi-directed} \blue{phylogenetic} network is a mixed graph obtained from a directed \blue{phylogenetic} network by unorienting all non-reticulation arcs and suppressing the root. \red{If the root, $\rho$ say, is incident with the arcs $(\rho, u)$ and $(\rho, v)$, where $u$ is a tree vertex and $v$ is a reticulation, then this process replaces $(\rho, u)$ and $(\rho, v)$ with the arc $(u, v)$. Note that, as the root has out-degree~$2$, it is not the parent of two reticulations.} \blue{Such networks are of interest because they are used in practical software}~\cite{SNAQ}. A semi-directed \blue{phylogenetic} network is a partly-directed \blue{phylogenetic} network but the converse is not true in general, see Figure~\ref{fig:notsemidirected}.

\begin{figure}
\centering
 \begin{tikzpicture}
	 \tikzset{edge/.style={thick}}
     \tikzset{arc/.style={-Latex,thick}}
	 \begin{scope}[xshift=0cm,yshift=0cm]
	\draw[thick, fill, radius=0.06] (0,0) circle;
	\draw[thick, fill, radius=0.06] (-.5,.5) circle node[above left] {$x$};
	\draw[edge] (-.5,.5) -- (0,0);
	\draw[thick, fill, radius=0.06] (2,0) circle;
    \draw[arc] (0,0) -- (2,0);
    \draw[thick, fill, radius=0.06] (0,-2) circle;
    \draw[edge] (0,0) -- (0,-2);
    \draw[thick, fill, radius=0.06] (1,-1) circle;
    \draw[arc] (1,-1) -- (0,-2);
    \draw[arc] (1,-1) -- (2,0);
    \draw[edge] (1,-1) -- (2,-2);
    \draw[edge] (1,-1) -- (1.5,-1.5);
    \draw[thick, fill, radius=0.06] (2,-2) circle;
    \draw[edge] (1,-2) -- (2,-2);
    \draw[arc] (1,-2) -- (0,-2);
    \draw[arc] (2,0) -- (2,-1);
    \draw[arc] (2,-2) -- (2,-1);
    \draw[thick, fill, radius=0.06] (2,-1) circle;
    \draw[thick, fill, radius=0.06] (2.5,-1) circle node[right] {$y$};
    \draw[edge] (2,-1) -- (2.5,-1);
    \draw[thick, fill, radius=0.06] (1,-2) circle;
    \draw[thick, fill, radius=0.06] (1,-2.5) circle node[below] {$z$};
	\draw[edge] (1,-2) -- (1,-2.5);
    \end{scope}
    \begin{scope}[xshift=5cm,yshift=0cm]
	\draw[thick, fill, radius=0.06] (0,0) circle;
	\draw[thick, fill, radius=0.06] (-.5,.5) circle node[above left] {$x$};
	\draw[edge] (-.5,.5) -- (0,0);
	\draw[thick, fill, radius=0.06] (2,0) circle;
    \draw[arc] (0,0) -- (2,0);
    \draw[thick, fill, radius=0.06] (0,-2) circle node[left] {$q$};
    \draw[arc] (0,0) -- (0,-2);
    \draw[thick, fill, radius=0.06] (1,-1) circle  node[above] {$r$};
    \draw[edge] (1,-1) -- (0,-2);
    \draw[arc] (1,-1) -- (2,0);
    \draw[edge] (1,-1) -- (2,-2);
    \draw[edge] (1,-1) -- (1.5,-1.5);
    \draw[thick, fill, radius=0.06] (2,-2) circle  node[right] {$s$};
    \draw[edge] (1,-2) -- (2,-2);
    \draw[arc] (1,-2) -- (0,-2);
    \draw[arc] (2,0) -- (2,-1);
    \draw[arc] (2,-2) -- (2,-1);
    \draw[thick, fill, radius=0.06] (2,-1) circle;
    \draw[thick, fill, radius=0.06] (2.5,-1) circle node[right] {$y$};
    \draw[edge] (2,-1) -- (2.5,-1);
    \draw[thick, fill, radius=0.06] (1,-2) circle node[above] {$p$};
    \draw[thick, fill, radius=0.06] (1,-2.5) circle node[below] {$z$};
	\draw[edge] (1,-2) -- (1,-2.5);
    \end{scope}
	\end{tikzpicture}
\caption{\label{fig:notsemidirected} \remiee{\blue{Left, a partly-directed phylogenetic network that} is semi-directed (it can be rooted \blue{along} the pendant edge \blue{incident with} $z$). \blue{Right, a partly-directed phylogenetic network that} is not semi-directed. If it were \blue{semi-directed}, \blue{then a directed phylogenetic network from which it is obtained} would have to be rooted \blue{along either} the pendant edge \blue{incident with} $x$ or one of the arcs incident to the neighbour of $x$; \red{otherwise, there is no directed path from the root to $x$}. This makes $(p,z)$ an arc, which implies that $p$ has the incoming arc $(s,p)$. For similar reasons, the orientation must include $(r,s)$ and $(q,r)$. \blue{But then}, together with $(p,q)$, these arcs form a directed cycle, a contradiction.}}
\end{figure}







We emphasize that we do not allow parallel edges or parallel arcs in (\blue{undirected and} directed) \blue{phylogenetic} networks. \blue{However,} replacing directed arcs of a directed \blue{phylogenetic} network by undirected edges and suppressing the root may create parallel edges. We do not consider this case explicitly because it can be dealt with easily. \blue{In particular,} if \blue{an undirected phylogenetic} network has more than one pair of parallel edges, it cannot be oriented; \blue{since} the oriented \blue{phylogenetic} network would contain \blue{either a} pair of parallel \blue{arcs} \remiee{or a directed cycle of length~$2$}. If there is exactly one pair of parallel edges, then, for the same reason, one of these edges needs to be subdivided with the root to obtain an orientation.



\blue{Lastly, for an (undirected) graph $G=(V, E)$, let $E'$ and $V'$ be subsets of $E$ and $V$, respectively. The graph obtained from $G$ by deleting each of the edges in $E'$ is denoted by $G\backslash E'$. Similarly, the graph obtained from $G$ by deleting each of the vertices in $V'$ is denoted by $G\backslash V'$. On the other hand, if $A$ and $B$ are sets, the set obtained from $B$ by deleting each of the elements in $A\cap B$ is denoted by $B-A$.}

\section{\remiee{Orienting an \latest{undirected} phylogenetic network given the root and the desired in-degrees}}\label{sec:OrientingWithConstraints}

\leoo{Suppose} that~$N$ is an undirected binary \blue{phylogenetic} network, with a designated edge~$e_\rho$, and~$R$ is a subset of the vertices \blue{of $N$}. Does there exist an orientation $N^r$ of $N$ whose set of reticulations is $R$ and whose root subdivides $e_{\rho}$? In this section, we characterize precisely when \blue{there exists such an orientation}. \blue{Furthermore, we} prove that if \blue{an orientation exists, then it} is unique, and \may{we} present a \remiee{linear}-time algorithm that finds~$N^r$.

\leoo{We start by discussing \blue{nonbinary phylogenetic} networks, which then allows us to treat binary \blue{phylogenetic} networks as a special case. In directed \blue{nonbinary phylogenetic} networks, vertices may have both their in-degree and out-degree greater than~$1$, \blue{in which case} knowing the locations of the root and the reticulations may not guarantee a unique orientation of the network~(see Figure~\ref{fig:nonbinary}). Therefore, in addition to knowing which vertices are reticulations, we also need to know their desired in-degrees.
See Section~\ref{sec:discussion} for a discussion on nonbinary networks in which reticulations are required to have out-degree~1.}

\begin{figure}[h]
\centering
 \begin{tikzpicture}
	 \tikzset{edge/.style={thick}}
     \tikzset{arc/.style={-Latex,thick}}
	 \begin{scope}[xshift=0cm,yshift=0cm]
	\draw[fill] (-.1,.9) rectangle (.1,1.1);
    \draw[fill] (.9,.9) rectangle (1.1,1.1);
    \draw[thick, fill, radius=0.06] (0,2) circle;
    \draw[thick, fill, radius=0.06] (1,2) circle;
    \draw[thick, fill, radius=0.06] (.5,3) circle;
    \draw[thick, fill, radius=0.06] (0,0) circle node[below] {$x$};
    \draw[thick, fill, radius=0.06] (1,0) circle node[below] {$y$};
    \draw[arc] (0,1) -- (0,0);
    \draw[arc] (1,1) -- (1,0);
    \draw[arc] (0,1) -- (1,1);
    \draw[arc] (1,2) -- (1,1);
    \draw[arc] (0,2) -- (0,1);
    \draw[arc] (1,2) -- (0,1);
    \draw[arc] (0,2) -- (1,1);
    \draw[arc] (.5,3) -- (1,2);
    \draw[arc] (.5,3) -- (0,2);
    \end{scope}
    \begin{scope}[xshift=5cm,yshift=0cm]
    \draw[fill] (-.1,.9) rectangle (.1,1.1);
    \draw[fill] (.9,.9) rectangle (1.1,1.1);
    \draw[thick, fill, radius=0.06] (0,2) circle;
    \draw[thick, fill, radius=0.06] (1,2) circle;
    \draw[thick, fill, radius=0.06] (.5,3) circle;
    \draw[thick, fill, radius=0.06] (0,0) circle node[below] {$x$};
    \draw[thick, fill, radius=0.06] (1,0) circle node[below] {$y$};
    \draw[arc] (0,1) -- (0,0);
    \draw[arc] (1,1) -- (1,0);
    \draw[arc] (1,1) -- (0,1);
    \draw[arc] (1,2) -- (1,1);
    \draw[arc] (0,2) -- (0,1);
    \draw[arc] (1,2) -- (0,1);
    \draw[arc] (0,2) -- (1,1);
    \draw[arc] (.5,3) -- (1,2);
    \draw[arc] (.5,3) -- (0,2);
    \end{scope}
	\end{tikzpicture}
\caption{\label{fig:nonbinary} Two \blue{non-isomorphic} directed \blue{phylogenetic} networks that are both orientations of the same undirected \blue{phylogenetic} network with the same \blue{root location and the same} set of reticulations.}
\end{figure}

\leoo{In what follows, let $N=(V,E,X)$ denote an undirected \blue{nonbinary phylogenetic} network on~$X$ with vertex set~$V$ and edge set~$E$. In addition, \blue{let} $e_\rho$ denote a designated edge \blue{of $N$} where we want to insert the root and, \blue{for all $v\in V$, let} $d^-_N(v)$ \blue{and $d_N(v)$} denote the \emph{desired in-degree} \blue{and the total degree} of $v$, where~$1\leq d^-_N(v) \leq d_N(v)$, \blue{respectively}. We say that $(N,e_\rho,d^-_N)$ is \emph{orientable} and that~$N^r$ is a \emph{orientation} of~$(N,e_\rho,d^-_N)$ if there exists an orientation~$N^r$ of~$N$ such that its root subdivides~$e_\rho$ and each~$v\in V$ has in-degree~$d^-_N(v)$ in~$N^r$.} 
\remiee{\may{Observe} that $(N,e_{\rho},d_N^-)$ is not orientable if $d_N^-(v)=d_N(v)$ for \blue{some non-leaf} vertex $v$ of $N$, or if $d_N^-(l)\neq 1$ for \blue{some} leaf $l$ of $N$.} \remiee{This leads to the following decision problem.}

\remiee{\fbox{
\parbox{0.8\linewidth}{
{\sc Constrained Orientation}\\
{\bf Input:} An undirected \blue{nonbinary phylogenetic} network~$N=(V,E,X)$, a \red{distinguished edge} $e_{\rho}\in E$, and a map $d_N^-:V\to\mathbb{N}$ assigning a desired in-degree to each vertex of $N$.\\
{\bf Output:} An orientation of $(N,e_{\rho},d_N^-)$ if it exists, and NO otherwise.}
}}



\subsection{\latest{Characterizing the orientability of undirected nonbinary phylogenetic networks}}
\label{subsec:NonbinaryCharacterization}





We start by introducing the notion of a degree cut, \leoo{which will be the key ingredient for characterizing orientability.}

\begin{definition} 
Let~$N = (V,E,X)$ be an undirected \blue{nonbinary phylogenetic} network with~$e_\rho\in E$ a distinguished edge, and let~$N_\rho = (V_\rho,E_\rho,X)$ 
be the graph obtained from~$N$ by subdividing~$e_\rho$ by a new vertex~$\rho$.
Given
the desired in-degree~$d^-_N(v)$ of each vertex~$v\in V$, a \emph{degree cut} for~$(N,e_\rho,d^-_N)$ is a pair~$(V',E')$ with~$V'\subseteq V$ and~$E'\subseteq E_\rho$ such that the following hold \blue{in $N_{\rho}$}:
\begin{itemize}
\item $E'$ is an edge cut of~$N_\rho$;
\item $\rho$ is not in the same connected component of \blue{$N_{\rho}\backslash E'$} as any~$v\in V'$;
\item each edge in $E'$ is incident to exactly one element of~$V'$; and
\item each vertex~$v\in V'$ is incident to \may{at least one and} at most $d^-_N(v)-1$ edges in~$E'$.
\end{itemize}
\end{definition}
\noindent The notion of a degree cut is illustrated in \blue{Figure~\ref{fig:retcut1}}. \red{Observe that if the desired in-degree of each vertex in $V$ is at most one, then $(N, e_{\rho}, d^-_N)$ has no degree cut.} \yukiii{We say that a degree cut~$(V',E')$ for $(N,e_\rho, d^-_N)$ is \emph{minimal} if for any edge~$e\in E'$, we have that~$(V',E'-\{e\})$ is not a degree cut for~$(N,e_\rho, d^-_N)$.}

\begin{figure}[h]
\centering
 \begin{tikzpicture}
	 \tikzset{edge/.style={thick}}
     \tikzset{arc/.style={-Latex,thick}}
     \tikzset{vertex/.style={thick, fill, radius=0.06}}
     \tikzset{uvertex/.style={thick, fill=white, radius=0.08}}
	 \begin{scope}[xshift=0cm,yshift=0cm]
    \draw[vertex] (1,0) circle node[above] {$\rho$};
	\draw[vertex] (0,0) circle;
	\draw[vertex] (-1,0) circle node[left] {$x$};
	\draw[edge] (-1,0) -- (0,0);
	\draw[edge] (0,0) -- (1,0);
    \draw[edge,dotted] (1,0) -- (2,0);
    \draw[vertex] (1,-1) circle;
    \draw[edge] (0,0) -- (1,-1);
    \draw[edge,dotted] (2,0) -- (1,-1);
    \draw[vertex] (3,-1) circle;
    \draw[edge] (2,0) -- (3,-1);
    \draw[vertex] (4,-1) circle node[right] {$y$};
    \draw[edge] (3,-1) -- (4,-1);
    \draw[edge,dotted] (2,-2) -- (1,-1);
    \draw[edge] (2,-2) -- (3,-1);
    \draw[edge] (2,-2) -- (2,0);
    \draw[uvertex] (2,-2) circle node[below right] {$2$};
    \draw[uvertex] (2,0) circle node[above right] {$3$};
    \end{scope}
	\end{tikzpicture}
\caption{\label{fig:retcut1} Illustration of a degree cut. Shown is the graph~$N_\rho$ obtained from an undirected \blue{phylogenetic} network~$N$ by subdividing an edge~$e_\rho$ by a new vertex~$\rho$. Each vertex~$v$ with $d_N^-(v)>1$, represented by an unfilled vertex, is labelled by~$d_N^-(v)$. A degree cut~$(V',E')$ for~$(N,e_\rho,d^-_N)$ is indicated by taking~$V'$ to be the set of unfilled vertices and~$E'$ to be the set of dashed edges.}
\end{figure}


We will show \latest{in Theorem~\ref{thm:nonbinary_characterization}} that the non-existence of a degree cut \blue{for} $(N,e_{\rho},d^-_N)$ \blue{together with a \latest{condition on the desired in-degrees} is equivalent to $(N,e_{\rho},d^-_N)$ being orientable. One direction \latest{of this theorem} \blue{is} established in Proposition~\ref{prop:nonbinary_necessary}\latest{(i) and (ii)}.}

\begin{proposition}\label{prop:nonbinary_necessary}
Let~$N=(V,E,X)$ be an undirected \blue{nonbinary phylogenetic} network,~$e_\rho\in E$ be a distinguished edge, and~$d^-_N(v)$ be the desired in-degree of each vertex~$v\in V$, with~$d^-_N(v)=1$ if~$v$ is a leaf and~$1\leq d^-_N(v)< d_N(v)$ otherwise. If $(N,e_\rho,d^-_N)$ is orientable, then \blue{each of} the following \blue{holds}:
\begin{enumerate}[label=\textup{(\roman*)}]
\item $(N,e_\rho,d^-_N)$ has no degree cut;\label{prop:nn1}
\item $\sum_{v\in V}d^-_N(v)=|E|+1$;
\item $N\backslash R$ is a forest, where $R$ is the set of vertices in $V$ with desired in-degree at least two.
\end{enumerate}
\end{proposition}

\begin{proof}
\blue{To prove (i),} suppose, for a contradiction, that $(N,e_\rho,d^-_N)$ has a degree cut~$(V',E')$. Consider an orientation~$N^r$ of $(N,e_\rho,d^-_N)$. In this orientation, each vertex~$v\in V'$ is incident to at most $d^-_N(v)-1$ arcs corresponding to edges in $E'$. Hence, each vertex in~$V'$ is incident to at least one incoming arc that does not correspond to an edge in $E'$. \yukimm{Let~$v\in V'$ be an arbitrarily chosen vertex, and let~$e$ be an incoming arc of~$v$ that does not correspond to an edge in~$E'$. Since there is a directed path from~$\rho$ to \blue{$v$ via $e$} in~$N^r$, and since~$(V',E')$ is a degree cut of~$(N,e_\rho,d^-_N)$, it must be the case that, \blue{prior to $e$, this path traverses an} arc that corresponds to an edge in~$E'$. In particular, this means that \blue{there is a directed path from} some other vertex in $V'$ to $v$. Observe that this property holds for all vertices in $V'$, that is, \blue{for each $v'\in V'$, there is a directed path from} some other vertex in $V'$ \blue{to $v'$}. Since $V'$ is finite, this implies that~$N^r$ contains a cycle, a contradiction.}

  
  

\blue{For (ii),} the total in-degree in an orientation is $\sum_{v\in V}d^-_N(v)$.
\blue{Since an} orientation has $|E|+1$ edges as edge~$e_\rho$ of~$N$ is subdivided by~$\rho$, \blue{it follows that} $\sum_{v\in V}d^-_N(v)=|E|+1$.

\blue{To prove (iii),} suppose $N\backslash R$ contains a cycle~$C=(v_1,v_2,\ldots,v_1)$. Then, in an orientation \blue{$N^r$} of $(N,e_\rho,d^-_N)$, each vertex of~$C$ has one incoming and at least two outgoing arcs. Without loss of generality, suppose that~$\{v_1,v_2\}$ is oriented from~$v_1$ to~$v_2$ \blue{in $N^r$}. Then all other edges incident to~$v_2$ are oriented away from~$v_2$ \blue{in $N^r$}, so~$\{v_2,v_3\}$ is oriented from~$v_2$ to~$v_3$. By repeating this argument, it follows that \blue{$N^r$ has} a directed cycle $(v_1,v_2,\ldots,v_1)$. \blue{This contradiction completes the proof of (iii) and the proposition}.
\end{proof}


We will show later in Corollary~\ref{cor:nonbinary_yuki} that \blue{(iii)} in Proposition~\ref{prop:nonbinary_necessary} is implied by \blue{(i) and (ii)}. We next prove a lemma which will be \blue{used} in several proofs. \remiee{See Figure~\ref{fig:trEdgeDeletionLemma} for an example.}

\begin{figure}
    \centering
    \resizebox{\columnwidth}{!}{
     \begin{tikzpicture}
	 \tikzset{edge/.style={thick}}
     \tikzset{arc/.style={-Latex,thick}}
     \tikzset{vertex/.style={thick, fill, radius=0.06}}
     \tikzset{uvertex/.style={thick, fill=white, radius=0.08}}
	 \begin{scope}[xshift=0cm,yshift=0cm]
    \draw[vertex] (1,0) circle node[above] {$\rho$};
	\draw[vertex] (-1,0) circle node[left] {$x$};
	\draw[edge] (-1,0) -- (0,0);
	\draw[edge] (0,0) -- (1,0);
    \draw[edge] (1,0) -- (2,0);
    \draw[edge] (0,0) -- (1,-1);
    \draw[edge, dashed] (2,0) -- (1,-1);
    \draw[vertex] (3,-1) circle;
    \draw[edge] (2,0) -- (3,-1);
    \draw[vertex] (4,-1) circle node[right] {$y$};
    \draw[vertex] (2.5,-1.5) circle;
    \draw[vertex] (3, -2) circle node [right] {$w$};
    \draw[edge] (2.5,-1.5) -- (3,-2);
    \draw[edge] (3,-1) -- (4,-1);
    \draw[edge] (1.5,-1.5) -- (1,-1);
    \draw[edge, dotted] (2,-2) -- (1.5,-1.5);
    \draw[vertex] (1.5,-1.5) circle;
    \draw[vertex] (1,-2) circle node[left] {$z$};
    \draw[edge] (1.5,-1.5) -- (1,-2);
    \draw[edge] (3,-1) -- (2.5,-1.5);
    \draw[edge, dashed] (2,-2) -- (2.5,-1.5);
    \draw[edge, dashed] (2,-2) -- (2,0);
    \draw[uvertex] (2,-2) circle node[below right] {$2$};
    \draw[uvertex] (0,0) circle node[above left] {$2$};
    \draw[uvertex] (1,-1) circle node[below left] {$2$};
    \draw[vertex] (2,0) circle;
    \end{scope}
    
    \begin{scope}[shift = {(6,0)}]
    \draw[vertex] (1,0) circle node[above] {$\rho$};
	\draw[vertex] (-1,0) circle node[left] {$x$};
	\draw[edge] (-1,0) -- (0,0);
	\draw[edge] (0,0) -- (1,0);
    \draw[edge] (1,0) -- (2,0);
    \draw[edge] (0,0) -- (1,-1);
    \draw[edge] (2,0) -- (1,-1);
    \draw[vertex] (3,-1) circle;
    \draw[edge] (2,0) -- (3,-1);
    \draw[vertex] (4,-1) circle node[right] {$y$};
    \draw[vertex] (3, -2) circle node [right] {$w$};
    \draw[edge] (3,-1) -- (3,-2);
    \draw[edge] (3,-1) -- (4,-1);
    \draw[edge] (1.5,-1.5) -- (1,-1);
    \draw[vertex] (1.5,-1.5) circle;
    \draw[vertex] (1,-2) circle node[left] {$z$};
    \draw[edge] (1,-2) -- (1.5,-1.5);
    \draw[edge] (1.5,-1.5) -- (2,0);
    \draw[uvertex] (0,0) circle node[above left] {$2$};
    \draw[uvertex] (1,-1) circle node[below left] {$2$};
    \draw[vertex] (2,0) circle;
    \end{scope}
    
    \begin{scope}[shift = {(12,0)}]
    \draw[vertex] (1,0) circle node[above] {$\rho$};
	\draw[vertex] (-1,0) circle node[left] {$x$};
	\draw[edge] (-1,0) -- (0,0);
	\draw[edge, dashed] (0,0) -- (1,0);
    \draw[edge] (1,0) -- (2,0);
    \draw[edge] (0,0) -- (1,-1);
    \draw[edge, dashed] (2,0) -- (1,-1);
    \draw[vertex] (3,-1) circle;
    \draw[edge] (2,0) -- (3,-1);
    \draw[vertex] (4,-1) circle node[right] {$y$};
    \draw[vertex] (2.5,-1.5) circle;
    \draw[vertex] (3, -2) circle node [right] {$w$};
    \draw[edge] (2.5,-1.5) -- (3,-2);
    \draw[edge] (3,-1) -- (4,-1);
    \draw[edge] (1,-2) -- (1,-1);
    \draw[vertex] (1,-2) circle node[left] {$z$};
    \draw[edge] (3,-1) -- (2.5,-1.5);
    \draw[edge] (2.5,-1.5) -- (2,0);
    \draw[uvertex] (0,0) circle node[above left] {$2$};
    \draw[uvertex] (1,-1) circle node[below left] {$2$};
    \draw[vertex] (2,0) circle;
    \end{scope}
    
	\end{tikzpicture}
	}
    \caption{\remiee{An illustration of Lemma~\ref{lem:nonbinary_yuki}. Left, the \blue{graph} $N_\rho$ obtained \blue{from an undirected nonbinary phylogenetic network $N$} by subdividing the edge $e_\rho$ by $\rho$. \blue{Each vertex with $d^-_N(v) > 1$, represented by an unfilled vertex, is labelled by $d^-_N(v)$.}
    \blue{The triple $(N, e_{\rho}, d^-_N)$} has no degree cut. \blue{If $V$ denotes the vertex set of $N$ and $R$ denotes the set of unfilled vertices,} the dashed edges are those \blue{edges with end-vertices in $V-R$ and $R$} that can be reached from~$e_\rho$ \blue{without traversing an unfilled vertex} (`$\{t,r\}$ edges' in the setting of Lemma~\ref{lem:nonbinary_yuki}). The dotted edge \blue{is an edge with end-vertices in $V-R$ and $R$} that cannot be reached from $e_\rho$ \blue{without traversing an unfilled vertex}. Middle, the \blue{graph $\latest{N'_\rho}$} obtained by \blue{deleting} the dashed edge that is incident to the neighbour of~$w$ from~$N_\rho$ \blue{and suppressing the resulting degree-two vertices}. Observe that \blue{$(N', e_{\rho}, d^-_{N'})$, as defined in Lemma~\ref{lem:nonbinary_yuki}}, has no degree cut. 
    Right, the graph \blue{$N''$} obtained by \blue{deleting} the dotted edge from~$N_\rho$ \blue{and suppressing the resulting degree-two vertices}. \blue{Here, $(N'', e_{\rho}, d^-_{N''})$} has a degree cut (the unfilled vertices \blue{together with} the dashed edges).
    }}
    \label{fig:trEdgeDeletionLemma}
\end{figure}


\begin{lemma}\label{lem:nonbinary_yuki}
Let $N=(V,E,X)$ be an undirected \blue{nonbinary phylogenetic} network, $e_\rho\in E$ be a distinguished edge, and $d^-_N(v)$ be the desired in-degree of each vertex~$v\in V$, with~$d^-_N(v)=1$ if~$v$ is a leaf and~$1\leq d^-_N(v)< d_N(v)$ otherwise.
Let $R \may{ = \{v\in V : d^-_N(v) \geq 2\}}$ denote the set of all vertices \may{of $N$} 
with desired in-degree at least two. Suppose that $(N,e_\rho,d^-_N)$ has no degree cut \red{and $R\neq\emptyset$}. Then \blue{the following hold:}
\begin{enumerate}[label=\textup{(\roman*)}]
\item There exists \blue{an} edge~$\{t,r\}\latest{\neq e_\rho}$ in \latest{$N$}, \red{where}~$t\in V-R$ \red{and} $r\in R$, \red{such that} there \blue{is} a path \blue{from} \latest{an endpoint of~$e_\rho$} \blue{to} $t$ not \blue{traversing} any vertex \blue{in} $R$.

\item \latest{For any such edge~$\{t,r\}$ in (i),} $(N',e'_\rho,d_{N'}^-)$ has no degree cut, where
\begin{enumerate}[label=\textup{(\Roman*)}]
\item $N'$ is the \blue{undirected nonbinary phylogenetic} network obtained from~$N$ by deleting~$\{t,r\}$ and suppressing \blue{any resulting degree-two vertices},

\item $e'_{\rho}=e_{\rho}$ unless $e_{\rho}=\{\latest{p,q}\}$ and, \latest{$p$ say,} is suppressed \red{(and so $q$ is not suppressed as $\{t, r\}\neq e_{\rho}$)}, in which case, $e'_{\rho}=\latest{\{q,s\}}$, \latest{where $s$ is the neighbour of~$p$ that is not in~$\{q,r,t\}$}, and

\item $d_{N'}^-$ is the desired in-degrees of the vertices of~$N'$ \blue{with}
\[ d_{N'}^-(v) = 
    \begin{cases}
      d^-_N(v)-1, & \text{if~$v=r$;}\\
      d^-_N(v), & \text{otherwise,}
    \end{cases}
\]
for all vertices~$v$ in~$N'$.
\end{enumerate}
\end{enumerate}
\end{lemma}

\begin{proof} \latest{Let $N_{\rho}$ be the graph obtained from $N$ by subdividing $e_{\rho}$ with a vertex $\rho$.}
If, \blue{in $N_{\rho}$}, both \blue{vertices adjacent to $\rho$} are in~$R$, then these vertices together with the two edges incident with $\rho$ form a degree cut \blue{for $(N, e_{\rho}, d^-_N)$}, a contradiction. It follows that at least one vertex adjacent to $\rho$ is not in $R$. Avoiding $\rho$, take a path from such a vertex to a vertex~$r\in R$, such that no other vertices of the path except~$r$ are in~$R$.  \may{To show that such a path exists, assume it does not. This is only possible when exactly one neighbour of~$\rho$ is in~$R$ and there is no path avoiding~$\rho$ between the neighbours of~$\rho$, i.e., the edges incident to~$\rho$ are cut-edges. In this case, the neighbour of~$\rho$ that is in~$R$ together with the edge between this vertex and~$\rho$ form a degree cut for $(N, e_{\rho}, d^-_N)$, a contradiction. Hence, there exists a path from a neighbour of~$\rho$ that is not in~$R$ to a vertex~$r\in R$, such that this path does not contain~$\rho$ and does not contain any vertices from~$R$ except~$r$.}
Then the last edge on this path is an edge~$\{t,r\}$ with~$t\in V-R$ and~$r\in R$ for which there \blue{is} a path \blue{from $\rho$ to} $t$ not using any vertex from~$R$.
\latest{Note that~$t\neq\rho$ since we started the path at a neighbour of~$\rho$. Also note that $r\neq\rho$ since~$r\in R$. Hence, the edge~$\{t,r\}$ is not incident to~$\rho$ and so it is an edge of~$N$. Since it is also an edge of~$N_\rho$, it is not equal to~$e_\rho$.} This establishes (i).

\blue{To prove (ii)}, consider any such edge~$\{t,r\}$, and \blue{let $P$ be a path in $N_{\rho}$ from $\rho$ to $t$ avoiding vertices in $R$.} Suppose $(N',e'_\rho,d^-_{N'})$ has a degree cut~$(V',E')$. \latest{Let~$N'_\rho$ be the graph obtained from $N'$ by subdividing $e'_{\rho}$ with a vertex $\rho$. Observe that~$N'_\rho$ can be obtained from~$N_\rho$ by deleting~$\{t,r\}$ and suppressing any resulting degree-2 vertices (except~$\rho$). Also note that we can obtain~$N'$ from~$N'_\rho$ by suppressing~$\rho$ and that~$e_\rho'$ is the edge created by suppressing~$\rho$. In this proof, we will work with~$N_\rho$ and $N'_\rho$ (rather than with~$N$ and~$N'$) because degree cuts may contain edges incident to~$\rho$.}

\blue{If $t$ is suppressed when obtaining} \latest{$N'_\rho$ from~$N_\rho$}, \blue{let $e_t=\{u, v\}$ denote the resulting edge \latest{in~$N'_\rho$}, where $u\in P$} \latest{(possibly~$u=\rho$)}. 
\blue{Similarly, if $r$ is suppressed when obtaining \latest{$N'_\rho$ from $N_\rho$}, let $e_r=\{u', v'\}$ denote the resulting edge \latest{in~$N'_\rho$ (possibly, $\rho\in\{u',v'\}$)}.}

Let~$S$ be the subgraph of $\latest{N'_\rho}\backslash E'$ consisting of all connected components containing at least one element of~$V'$, and let~$S_\rho$ be the subgraph of $\latest{N'_\rho}\backslash E'$ consisting of \blue{the remaining} connected components \blue{ of $\latest{N'_\rho}\backslash E'$}. \blue{Since $(V', E')$ is a degree cut of $(N', e'_{\rho}, d^-_{N'})$, the subgraph} $S_\rho$ contains~$\latest{\rho}$.
\blue{Furthermore, as} \blue{$P$ contains no vertices in $R$} \may{(so for all $p\in P$, we have $d^-_N(p)\leq 1$),} \red{it follows by the fourth property of a degree cut that} \may{no $p\in P$ is an element of $V'$. Hence, the path $P$ cannot contain any edges of $E'$, so any node on $P$ is in the component $S_{\rho}$ and, in particular,} either $t$ or, \blue{if $t$ is suppressed,} $u$ is contained in $S_{\rho}$. 

We \may{now derive a contradiction by} distinguishing three cases \blue{depending on $r$}.

\blue{For the first case, assume} that either $r$ or, if $r$ is suppressed, $e_r$ is contained in $S_{\rho}$. \blue{If either $t$ is not suppressed or $t$ is suppressed and $e_t\not\in E'$, then} $(V', E')$ is a degree cut of $(N, e_{\rho}, d^-_N)$, \blue{a contradiction}. \latest{Now suppose that} $t$ is suppressed and $e_t\latest{=\{u,v\}}\in E'$. \latest{Since~$u\in S_\rho$ we have~$v\in S$. Then,} $(V', (E'-\{e_t\})\cup \{\{t, v\}\})$ is a degree cut of $(N, e_{\rho}, d^-_N)$, \blue{a} contradiction.

\blue{For the second case, assume} that either $r$ or, if $r$ is suppressed, $e_r$ is contained in $S$.
\latest{If} either $t$ is not suppressed or $t$ is suppressed and $e_t\not\in E'$, then $(V'\cup \{r\}, E'\cup \{\{t, r\}\})$ is a degree cut of $(N, e_{\rho}, d^-_N)$, \blue{a contradiction}. \blue{Furthermore,} if $t$ is suppressed and $e_t\in E'$, \blue{then} $(V'\cup \{r\}, (E'-\{e_t\})\cup \{\{t, v\}, \{t, r\}\})$ is a degree cut of $(N, e_{\rho}, d^-_N)$, a contradiction.

\blue{For} the last case, \blue{assume} that $r$ is suppressed and $e_r\latest{=\{u',v'\}}\in E'$. Without loss of generality, \blue{say} $v'\in V'$. \blue{If either $t$ is not suppressed or $t$ is suppressed and $e_t\not\in E'$,} then $(V', (E'-\{e_r\})\cup \{\{r, v'\}\})$ is a degree cut of $(N, e_{\rho}, d^-_N)$, \blue{a contradiction}. If $t$ is suppressed and $e_t\latest{=\{u,v\}}\in E'$, \blue{then} \latest{we have, as before, that~$u\in S_\rho$ and $v\in S$. In this case,} $(V', (E'-\{e_t, e_r\})\cup \{\{t, v\}, \{r,v'\}\})$ is a degree cut of $(N, e_{\rho}, d^-_N)$. This last contradiction completes the proof of the lemma.
\end{proof}


We are now ready to prove the \blue{above-mentioned} characterization for when an undirected \blue{nonbinary phylogenetic} network \blue{has an orientation respecting a} given location for the root and in-degree of every vertex.

\begin{theorem}\label{thm:nonbinary_characterization}
Let~$N=(V,E,X)$ be an undirected \blue{nonbinary phylogenetic} network,~$e_\rho\in E$ be a distinguished edge, and~$d^-_N(v)$ be the desired in-degree of each vertex~$v\in V$, with~$d^-_N(v)=1$ if~$v$ is a leaf and~$1\leq d^-_N(v)< d_N(v)$ otherwise. Then $(N,e_\rho,d^-_N)$ is orientable if and only if $(N,e_\rho,d^-_N)$ has no degree cut \blue{and} $\sum_{v\in V}d^-_N(v)=|E|+1$.
\end{theorem}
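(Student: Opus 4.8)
The plan is to dispatch the two directions separately. The forward direction---that orientability implies the absence of a degree cut---is exactly condition (i) of Proposition~\ref{prop:nonbinary_necessary} and so requires no further work. All the effort goes into the converse: assuming both the sum condition $\sum_{v\in V}d^-_N(v)=|E|+1$ and the absence of a degree cut, I would construct an orientation explicitly, arguing by induction on $\sum_{v\in V}(d^-_N(v)-1)$, a non-negative integer measuring the total reticulation excess. In the base case this quantity is~$0$, so every vertex has desired indegree~$1$; the sum condition then forces $|E|=|V|-1$, so $N$ is a tree, and I can subdivide $e_\rho$ by $\rho$ and orient every edge away from $\rho$. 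Since $N$ has no degree-$2$ vertices, each non-leaf non-root vertex then has indegree~$1$ and outdegree at least~$2$, giving a valid orientation.

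For the inductive step, suppose $R\neq\emptyset$. I would invoke Lemma~\ref{lem:nonbinary_yuki} to obtain an edge $\{t,r\}$ with $t\in V-R$ and $r\in R$, reachable from an endpoint of $e_\rho$ by a tree path, such that the reduced instance $(N',e'_\rho,d^-_{N'})$ has no degree cut, where $N'$ is obtained by deleting $\{t,r\}$ and suppressing $t$ and/or $r$ if they become degree-$2$. I would first check that the inductive hypotheses transfer: the sum condition is preserved (deleting the edge lowers $|E|$ by~$1$ and lowering $d^-(r)$ lowers the indegree sum by~$1$, while each suppression removes one vertex of effective indegree~$1$ together with one edge, keeping both sides balanced), and the reticulation excess drops by exactly~$1$. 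I would also observe that $\{t,r\}$ is not a bridge, since otherwise $(\{r\},\{\{t,r\}\})$ would itself be a degree cut of $(N,e_\rho,d^-_N)$ (using $d^-_N(r)\geq 2$), so $N'$ remains connected; and that $t$ is not a leaf, since a leaf adjacent only to $r\in R$ could not be reached by a tree path. By induction $(N',e'_\rho,d^-_{N'})$ has an orientation $(N')^r$, from which I build an orientation of $N$ by reinserting the suppressed vertices (subdividing the merged edges, inheriting their directions) and adding back $\{t,r\}$ oriented as $t\to r$.

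It remains to verify validity of the result. The underlying undirected network and root placement are correct by construction, and the indegrees are correct because orienting $\{t,r\}$ into $r$ restores exactly the one unit of indegree removed in passing to $N'$, while leaving $t$ and every other vertex with its prescribed indegree. The one genuinely nontrivial point---and the step I expect to be the main obstacle---is \emph{acyclicity}: adding $t\to r$ creates a directed cycle precisely when $(N')^r$ already contains a directed path from $r$ to $t$, which I must rule out. The key is the tree path: writing it, after reinsertion, as $\rho=u_0,u_1,\dots,u_k=t$, I would note that $\rho$ has indegree~$0$ while every $u_i$ with $i\geq 1$ lies in $V-R$ and hence has indegree~$1$, so the path is forced to be the \emph{directed} path $u_0\to u_1\to\cdots\to u_k$. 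If some directed path $Q$ ran from $r$ to $t$, then since $t$ has the unique incoming edge $u_{k-1}\to t$ the last edge of $Q$ would be this edge, and since each $u_i$ ($i\geq 1$) has the unique incoming edge $u_{i-1}\to u_i$, a downward induction would force $Q$ to traverse $u_{k-1},u_{k-2},\dots$ in reverse and ultimately reach $\rho$---impossible, as $\rho$ has indegree~$0$ and $r\neq\rho$.

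Finally, I would stress that this acyclicity argument is uniform across all suppression cases, since it uses only that $t$ has indegree~$1$ and that the tree path to $t$ consists of indegree-$1$ vertices terminating at the indegree-$0$ root; these hold whether or not $t$ or $r$ was suppressed. The only additional care is to confirm that the reinserted incoming edge of $t$ points along the tree path, which follows because the second-to-last tree-path vertex already has its unique incoming edge on that path, forcing the merged edge to be directed out of it. The degenerate situation where $t$ is incident to $e_\rho$ is precisely the one accounted for by the definition of $e'_\rho$ in Lemma~\ref{lem:nonbinary_yuki}.
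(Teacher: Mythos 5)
Your proof is correct and follows essentially the same route as the paper's: the same induction on $\sum_{v}d^-_N(v)-|V|$, the same appeal to Lemma~\ref{lem:nonbinary_yuki} to extract the edge $\{t,r\}$, the same extension of the inductive orientation by directing $\{t,r\}$ from $t$ to $r$, and an acyclicity argument that, like the paper's, rests on the tree path to $t$ being forced into a directed path of indegree-one vertices emanating from~$\rho$. Your unified bookkeeping for the sum condition (in place of the paper's four degree cases) and your extra checks that $\{t,r\}$ is not a bridge and that $t$ is not a leaf are welcome refinements but do not change the argument.
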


\begin{proof}
If $(N,e_\rho,d^-_N)$ is orientable, then, \blue{by Proposition~\ref{prop:nonbinary_necessary}(i) and (ii)}, it has no degree cut \blue{and $\sum_{v\in V} d^-_N(v)=|E|+1$}. \blue{The proof of the converse is} by induction on $\sum_{v\in V} d^-_N(v)- |V|$. \blue{Note that $\sum_{v\in V} d^-_N(v)-|V|\ge 0$ as $d^-_N(v)\ge 1$ for all $v\in V$.} If~$\sum_{v\in V}d^-_N(v)-|V| = 0$, then every vertex in~$V$ has desired in-degree 1. By assumption,~$|V| = \sum_{v\in V}d^-_N(v) = |E| + 1$, and so~$N$ is an undirected phylogenetic tree, \blue{in} which \blue{case, $(N, e_{\rho}, d^-_N)$} is \blue{trivially} orientable.

\remiee{Now suppose \blue{that} $\sum_{v\in V} d^-_{N}(v)-|V|\geq 1$, and the \blue{converse} holds for any \blue{undirected nonbinary phylogenetic} network \blue{in which the sum of the given in-degree of each vertex minus the size of its vertex set is at most} $(\sum_{v\in V} d^-_{N}(v)-|V|)-1$.} Let $R$ denote the set of all vertices in $V$ with \remiee{desired} in-degree at least~2. \blue{Since $\sum_{v\in V} d^-_N(v)-|V|\ge 1$, it follows that $R$ is nonempty. Let $N_{\rho}$ be the graph obtained from $N$ by subdividing $e_{\rho}$ by $\rho$.} By Lemma~\ref{lem:nonbinary_yuki}, there exists \blue{an} edge~$\{t,r\}$ in \blue{$N_{\rho}$} with~$t\in V-R$ and~$r\in R$ for which there \blue{is} a path \blue{from $\rho$ to} $t$ not using any vertex from $R$.
In this case,~$t$ and~$r$ are both vertices of total degree at least~$2$, and so they cannot be leaves (i.e.,~$t$ and~$r$ must have required outdegree at least 1).
\blue{Set} $(N',e'_\rho,d_{N'}^-)$ \blue{to be} the same \blue{as its} namesake in the statement of Lemma~\ref{lem:nonbinary_yuki} \may{and let~$E'$ be the edge set of~$N'$.}
\blue{Recalling} that~$d_N(v)$ denotes the degree of a vertex~$v\in V$ \blue{and $\sum_{v\in V} d^-_N(v)=|E|+1$}, \blue{there are} four \blue{possibilities to consider} \blue{depending} on the degree of~$t$ and the degree of~$r$ in~$N$:
\begin{enumerate}[label=\textup{\red{$\bullet$}}]
    \item If~$d_N(t) = 3$ and~$d_N(r) = 3$, then both~$t$ and~$r$ are suppressed in \blue{obtaining} $N'$, and so
    $$\sum_{v\in V'}d^-_{N'}(v) = \sum_{v\in V}d^-_N(v)-3 = (|E| + 1) - 3 = (|E'|+3) - 2 = |E'| + 1.$$
    \item If~$d_N(t) = 3$ and~$d_N(r) >3$, then only~$t$ is suppressed in \blue{obtaining} $N'$, and so $$\sum_{v\in V'}d^-_{N'}(v) = \sum_{v\in V}d^-_N(v) - 2 = (|E| + 1) - 2 = (|E'|+2) - 1 = |E'| + 1.$$
    \item If~$d_N(t) >3$ and~$d_N(r) = 3$, then only~$r$ is suppressed in \blue{obtaining} $N'$, and so $$\sum_{v\in V'}d^-_{N'}(v) = \sum_{v\in V}d^-_N(v) - 2 = (|E| + 1) - 2 = (|E'|+2) - 1 = |E'| + 1.$$
    \item If~$d_N(t) >3$ and~$d_N(r) > 3$, then neither~$t$ nor~$r$ \blue{is} suppressed in \blue{obtaining} $N'$, and so
    $$\sum_{v\in V'}d^-_{N'}(v) = \sum_{v\in V}d^-_N(v) - 1 = (|E| + 1) - 1 = |E'| + 1.$$
\end{enumerate}
In all \blue{four possibilities}, $\sum_{v\in V'}d^-_{N'}(v) = |E'| + 1$. \blue{Furthermore, a routine check using the above calculations shows that, for all four possibilities,}
$\sum_{v\in V'}d^-_{N'}(v)-|V'| < \sum_{v\in V}d^-_N(v)-|V|$. \blue{By Lemma~\ref{lem:nonbinary_yuki},} $(N',e'_\rho,d^-_{N'})$ has no degree cut; we also have~$d^-_{N'}(v)=1$ if~$v$ is a leaf and~$1\leq d^-_{N'}(v)< d_{N'}(v)$ otherwise. It follows by \blue{the} induction \blue{assumption} that $(N', e'_{\rho}, d^-_{N'})$ is orientable.
Now consider such an orientation, $(N')^r$ say, and impose the same arc directions on~$N_\rho$ except for the edge~$\{t,r\}$. \blue{If $t$ is suppressed in obtaining $N'$, then $d_{N_{\rho}}(t)=3$, in which case, the two edges incident with $t$ that are not $\{t, r\}$ are oriented to respect the orientation of the corresponding edge in $(N')^r$.}
\blue{Analogously, the edges incident with $r$ that are not $\{t, r\}$ are orientated in a similar way if \red{$d_{N_{\rho}}(r)=3$}.}
Now \red{orient} $\{t,r\}$ from~$t$ to~$r$, \blue{and} let $N^r$ denote \blue{the resulting orientation of $N_{\rho}$}. \blue{It follows by construction that each vertex in $N^r$ has} the correct in-degrees.

It remains to show that~$N^r$ is an orientation of~$(N, e_{\rho}, d^-_N)$ by showing that~$N^r$ has no directed cycle. \blue{If} there exists such a cycle, then this \blue{directed} cycle uses the oriented edge~$(t,r)$ \blue{as $(N')^r$ has no directed cycle}. Hence \blue{$N^r$ has} a directed path~$P$ from~$r$ to~$t$. On the other hand, by the choice of~$t$, \blue{the directed graph $N^r$ has} a \blue{directed} path~$Q$ from~$\rho$ to~$t$ not using any vertex from~$R$. Since both~$P$ and~$Q$ end in~$t$, they must meet. Let~$v$ be the first vertex on~$Q$ meeting~$P$. Then~$v\neq\rho$ \blue{as $P$ starts at $r\neq \rho$ and} both arcs incident \blue{with} $\rho$ are directed away from~$\rho$. Therefore $v$ \blue{has in-degree \may{at least}~$2$}. \blue{But} $Q$ does not contain any vertices \blue{in} $R$. \blue{This contradiction} completes the proof of the theorem.
\end{proof}

\blue{A consequence of Theorem~\ref{thm:nonbinary_characterization} is} that Proposition~\ref{prop:nonbinary_necessary}(iii) is implied by \blue{Proposition~\ref{prop:nonbinary_necessary}(i) and (ii)}.

\begin{corollary}\label{cor:nonbinary_yuki}
Let~$N=(V,E,X)$ be an undirected \blue{nonbinary phylogenetic} network,~$e_\rho\in E$ be a distinguished edge, and~$d^-_N(v)$ be the desired in-degree of each vertex~$v\in V$, with~$d^-_N(v)=1$ if~$v$ is a leaf and~$1\leq d^-_N(v)< d_N(v)$ otherwise.
Let~$R$ denote the set of all vertices in~$V$ with \remiee{desired in-degree} at least~$2$.
If $(N,e_\rho,d^-_N)$ has no degree cut and $\sum_{v\in V} d^-_N(v)=|E|+1$, then $N\backslash R$ is a forest.
\end{corollary}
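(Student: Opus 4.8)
The plan is to observe that the hypotheses of the corollary are precisely what is needed to feed into the two results we have already established, so that the conclusion drops out by composing them. First I would note that the assumptions ``$(N,e_\rho,d^-_N)$ has no degree cut'' together with ``$\sum_{v\in V}d^-_N(v)=|E|+1$'' are exactly the hypotheses of the backward (``if'') direction of Theorem~\ref{thm:nonbinary_characterization}. Applying that direction immediately yields that $(N,e_\rho,d^-_N)$ is orientable; that is, there exists an orientation~$N^r$ with underlying undirected network~$N$, root subdividing~$e_\rho$, and each vertex~$v$ having indegree~$d^-_N(v)$.

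With orientability in hand, I would invoke the necessary conditions for an orientable tuple recorded in Proposition~\ref{prop:nonbinary_necessary}. Since $(N,e_\rho,d^-_N)$ is orientable, condition (iii) of that proposition applies directly, giving that $N-R$ is a forest, where~$R$ is the set of vertices whose desired indegree is at least~$2$. (Note that this is the same set~$R$ appearing in the statement of the corollary, since by definition a vertex lies in~$R$ exactly when $d^-_N(v)\geq 2$.) This is precisely the asserted conclusion.

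There is essentially no genuine obstacle here: all of the combinatorial work has already been carried out in the proof of Theorem~\ref{thm:nonbinary_characterization} (the inductive construction of an acyclic orientation) and in the proof of Proposition~\ref{prop:nonbinary_necessary}(iii) (the argument that a cycle in $N-R$ would be forced to orient cyclically). The only content of the corollary is the logical observation that the first two necessary conditions, via the characterization, already guarantee orientability, and orientability in turn forces the third condition. Thus the corollary is a one-step consequence, and I would write it as such, with no further case analysis required.
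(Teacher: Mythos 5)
Your proposal is correct and matches the paper's own proof exactly: both apply the backward direction of Theorem~\ref{thm:nonbinary_characterization} to obtain orientability and then invoke Proposition~\ref{prop:nonbinary_necessary}(iii) to conclude that $N-R$ is a forest. Nothing is missing.
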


\begin{proof}
If $(N,e_\rho,d^-_N)$ has no degree cut and $\sum_{v\in V} d^-_N(v)=|E|+1$ then, by Theorem~\ref{thm:nonbinary_characterization}, $(N,e_\rho,d^-_N)$ is orientable. It \blue{now} follows \blue{by} Proposition~\ref{prop:nonbinary_necessary} that $N\backslash R$ is a forest.
\end{proof}

\subsection{Orientation algorithm}

In this section, we present a \blue{polynomial-time} algorithm for \blue{deciding if}, given an undirected \blue{nonbinary phylogenetic} network \blue{$N$}, \blue{there is} an orientation \blue{of $N$ respecting a} given location of the root and desired in-degree of each vertex, \blue{in which case, the algorithm returns such an orientation}. \blue{The algorithm is as follows.}

\begin{algorithm}[H]
\KwIn{An undirected \blue{nonbinary phylogenetic} network~$N = (V,E,X)$, an edge~$e_\rho\in E$, and the desired in-degree~$d^-_N(v)$ for each $v\in V$, with~$d^-_N(v)=1$ if~$v$ is a leaf and~$1\leq d^-_N(v)< d_N(v)$ otherwise.}
\KwOut{An orientation of~$(N,e_\rho,d^-_N)$ if it exists and NO otherwise.}
\uIf{$\sum_{v\in V} d^-_N(v)\neq |E|+1$}{\Return NO}
Subdivide~$e_\rho$ by a new vertex~$\rho$ and orient the two edges incident to~$\rho$ away from~$\rho$\;
\While{there exist an unoriented edge}{
\uIf{there is a vertex $v\in V$ with $d^-_N(v)$ incoming oriented edges and at least one incident unoriented edge}{orient all unoriented edges incident to~$v$ away from~$v$}
\uElse{\Return NO}
}
\Return the \blue{resulting} orientation
\caption{\remiee{{\sc Orientation Algorithm}$(N,e,d^-_N)$}\label{alg:nonbinary_orientation}}
\end{algorithm}

\begin{theorem}\label{thm:nonbinary_alg}
Let~$N=(V,E,X)$ be an undirected \blue{nonbinary phylogenetic} network,~$e_\rho\in E$ be a distinguished edge, and~$d^-_N(v)$ be the desired in-degree of each vertex~$v\in V$, with~$d^-_N(v)=1$ if~$v$ is a leaf and~$1\leq d^-_N(v)< d_N(v)$ otherwise.
Then Algorithm~\ref{alg:nonbinary_orientation} decides whether $(N,e_\rho,d^-_N)$ is orientable, \blue{in which case, it} finds an orientation in time $O(|E|)$. \yuki{Moreover, this orientation is the unique orientation of~$(N,e_\rho,d^-_N)$.}
\end{theorem}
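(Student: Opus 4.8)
The plan is to prove the three assertions — correctness, the $O(|E|)$ running time, and uniqueness — together, organised around a single invariant. Write $N_\rho$ for the graph after subdividing $e_\rho$ by the root $\rho$, and view the partial orientation $P$ maintained by Algorithm~\ref{alg:nonbinary_orientation} as a set of arcs. The central claim I would establish is the following: if $O$ is \emph{any} orientation of $(N,e_\rho,d^-_N)$, then at every stage each edge already oriented in $P$ is oriented exactly as in $O$, i.e. $P\subseteq O$. I would verify this by induction on the steps: at initialisation the two edges at $\rho$ are directed away from $\rho$, matching the root of $O$; and when the algorithm processes a vertex $v$ (having exactly $d^-_N(v)$ incoming oriented edges and an incident unoriented edge), those $d^-_N(v)$ incoming edges already agree with $O$ and therefore constitute \emph{all} of the $d^-_N(v)$ in-edges of $v$ in $O$, so every remaining incident edge is an out-edge of $v$ in $O$ and orienting them away from $v$ keeps $P\subseteq O$. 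This invariant immediately yields uniqueness: if the algorithm terminates with a full orientation, that output equals every orientation $O$, so there is at most one.

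For correctness I would split into soundness and completeness. For soundness (if the algorithm outputs an orientation it is a valid one), acyclicity follows by ordering the vertices by the time at which they are processed ($\rho$ first, unprocessed vertices last): an edge is oriented away from $v$ only when $v$ is processed, and at that moment its other endpoint is still unprocessed, so every arc runs from an earlier- to a later-processed vertex and this is a topological order. For the indegrees, each vertex is either processed — in which case its final indegree is exactly $d^-_N(v)$, since it is processed with $d^-_N(v)$ incoming edges and gains none afterwards — or never processed, in which case all its incident edges point into it and $\mathrm{indeg}(v)=d_N(v)\ge d^-_N(v)$. Because the loop is reached only when $\sum_{v\in V}d^-_N(v)=|E|+1$, while the total indegree equals the number of arcs $|E|+1$, the inequality $\mathrm{indeg}(v)\ge d^-_N(v)$ must be an equality everywhere; in particular no vertex overshoots. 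The remaining degree and leaf conditions of a directed network hold routinely, since $N$ has no degree-$2$ vertices and its leaves have degree $1$.

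For completeness (if $(N,e_\rho,d^-_N)$ is orientable the algorithm never returns NO), fix an orientation $O$; then $P\subseteq O$ throughout, so no vertex exceeds $d^-_N(v)$ incoming edges. The key point is that the loop cannot stall while unoriented edges remain. Let $U=O\setminus P$ be the still-unoriented edges, oriented according to $O$, so $U\neq\emptyset$; fix a topological order of the DAG $O$ and let $a\to b$ be an edge of $U$ whose tail $a$ is earliest among all tails of $U$-edges. Every in-edge of $a$ in $O$ then already lies in $P$ (an in-edge of $a$ lying in $U$ would have a tail strictly earlier than $a$), so $a$ has exactly $d^-_N(a)$ incoming oriented edges together with the incident unoriented edge $a\to b$ and is thus eligible. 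Hence the loop proceeds until $P$ is full. Combining the two directions, the algorithm outputs an orientation precisely when $(N,e_\rho,d^-_N)$ is orientable; the NO at the sum test is justified by Proposition~\ref{prop:nonbinary_necessary}(ii), and getting stuck in the loop by the contrapositive of completeness (alternatively, one can extract a degree cut from the stuck state and invoke Theorem~\ref{thm:nonbinary_characterization}).

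For the running time I would keep, for each vertex, a counter of incoming oriented edges and a counter of incident unoriented edges, together with a queue of eligible vertices; orienting an arc updates these in constant time and enqueues a newly eligible head, each edge is oriented once and each vertex processed once, giving $O(|V|+|E|)=O(|E|)$ on the connected network. The main obstacle I anticipate is the completeness argument — proving that the greedy, order-independent rule never stalls; the topological-earliest-tail choice against a fixed orientation $O$ is the crux, and since the comparison-to-$O$ invariant also delivers both no-overshoot and uniqueness, setting up that invariant correctly is where the real work lies.
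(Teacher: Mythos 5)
Your proposal is correct, and its overall architecture coincides with the paper's proof: the invariant that the partial orientation always agrees with any orientation $O$ (giving uniqueness and no-overshoot), soundness via acyclicity from the processing order plus the counting argument $\sum_v d^-_N(v)=|E|+1=$ total indegree to rule out an unprocessed vertex with too many in-edges, and the ready-queue implementation for $O(|E|)$ time. The one step where you take a genuinely different route is completeness, i.e., showing the loop cannot stall. The paper argues by contradiction through its own machinery: from a stalled state it extracts the set $P$ of processed vertices, the cut $E'$ of edges leaving $P$, and the frontier $V'$, verifies that $(V',E')$ is a degree cut, and contradicts Proposition~\ref{prop:nonbinary_necessary}(i). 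You instead argue directly against the fixed orientation $O$: order the DAG $O$ topologically, take the unoriented edge whose tail $a$ is earliest among tails of unoriented edges, and observe that all in-edges of $a$ in $O$ must already be oriented (any unoriented one would have an earlier tail), so $a$ is eligible. Your argument is more self-contained and elementary --- it never needs the degree-cut notion --- while the paper's version has the side benefit of making explicit that a stall certifies a degree cut, which ties the algorithm back to the characterization in Theorem~\ref{thm:nonbinary_characterization}; you note this alternative yourself. Both are valid, and your earliest-tail choice correctly handles the only delicate point (that $a$ cannot be $\rho$ or a leaf, since neither is ever the tail of an unoriented edge).
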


\begin{proof}
\delete{We claim that Algorithm~\ref{alg:nonbinary_orientation} can be used for this purpose.}
By Proposition~\ref{prop:nonbinary_necessary}\blue{(ii)}, we may assume that $\sum_{v\in V} d^-_N(v)=|E|+1$. Let $N_{\rho}$ denote the graph obtained from $N$ by subdividing $e_{\rho}$ with $\rho$.
We say that a vertex of~$N_\rho$ is \emph{processed} by Algorithm~\ref{alg:nonbinary_orientation} when the algorithm orients its outgoing edges. Note that Algorithm~\ref{alg:nonbinary_orientation} only processes a vertex when it already has at least one incoming oriented edge, and when a vertex is processed all its remaining unoriented edges are oriented outwards.

First \blue{suppose} that there exists an orientation~$N^r$ of $(N, e_{\rho}, d^-_N)$. We will prove that Algorithm~\ref{alg:nonbinary_orientation} returns~$N^r$. To see this, we first show that \blue{if a} vertex of $N_\rho$ \blue{is} processed by Algorithm~\ref{alg:nonbinary_orientation}, \blue{then every} edge incident to \blue{this vertex obtains} the same orientation as in~$N^r$. \blue{Assume}, for a contradiction, that this is not the case, and let \blue{$v$} be the first vertex processed by Algorithm~\ref{alg:nonbinary_orientation} for which at least one of its incident edges is not oriented as in~$N^r$. Immediately before \blue{$v$} is processed, it has \blue{$d^-_N(v)$} incoming oriented edges and at least one incident unoriented edge. By the choice of \blue{$v$}, the incoming oriented edges \blue{of $v$} are oriented the same way as in~$N^r$ because the other \blue{end-vertices} of these edges have already been processed. Algorithm~\ref{alg:nonbinary_orientation} orients all other edges incident to \blue{$v$} away from \blue{$v$}. These edges are also oriented away from \blue{$v$} in $N^r$, since~$N^r$ is an orientation and \blue{$v$} is required to have in-degree \blue{$d^-_N(v)$}. This contradicts the assumption that at least one edge incident to \blue{$v$} does not have the same orientation as in~$N^r$. \blue{It follows} that if there exists an orientation~$N^r$ of $(N, e_{\rho}, d^-_N)$, \blue{then every} vertex processed by Algorithm~\ref{alg:nonbinary_orientation} \blue{has} all its incident edges \blue{assigned} the same orientation as in~$N^r$.
To prove that Algorithm~\ref{alg:nonbinary_orientation} returns~$N^r$, it remains to show that every non-leaf vertex is processed by the algorithm.

\blue{Assume} that Algorithm~\ref{alg:nonbinary_orientation} stops without having processed all non-leaf vertices. Let~$P$ be the set of vertices \blue{of $N_{\rho}$} that have been processed at this point. Let~$E'$ be the set of all edges \blue{of $N_{\rho}$} with exactly one \blue{end-vertex} in~$P$, and let~$V'$ be the set of \blue{all} vertices \blue{of $N_{\rho}$} not in~$P$ that are incident to an edge in~$E'$. \blue{Every} edge~$e \in E'$ is incident to one processed vertex $u\in P$ and one unprocessed vertex in~$V'$. By construction,~$e$ is oriented away from~$u$ and, \blue{by the previous argument}, \blue{$e$ has} the same orientation in $N^r$.

\blue{If $v\in V'$, then, as} every oriented edge is oriented in the same direction as in~$N^r$, \blue{we have that} $v$ is incident to at most~$d^-_N(v)$ incoming oriented edges. \blue{Also}, every edge in~$E'$ incident to~$v$ is \blue{oriented towards $v$}. If $v$ is incident to exactly~$d^-_N(v)$ edges in~$E'$, then~$v$ is processed by Algorithm~\ref{alg:nonbinary_orientation}, a contradiction. So $v$ is incident to fewer than~$d^-_N(v)$ edges in~$E'$. \blue{Since} $E'$ is  an edge cut of~$N_\rho$ such that~$\rho$ is not in the same connected component of~$N_{\rho}\backslash E'$ as any \blue{vertex in} $V'$, and each edge in~$E'$ is incident to exactly one element of~$V'$, \blue{it follows} that $(V',E')$ is a degree cut for~$(N, e_\rho, d^-_N)$, contradicting Proposition~\ref{prop:nonbinary_necessary}(i). \blue{This last contradiction implies that all non-leaf vertices of $N_{\rho}$ are processed.} Hence, if there exists an orientation~$N^r$ of $(N, e_{\rho}, d^-_N)$, Algorithm~\ref{alg:nonbinary_orientation} will return~$N^r$, and $N^r$ is the unique orientation of~$(N,e_\rho, d^-_N)$.

Now \blue{suppose} that Algorithm~\ref{alg:nonbinary_orientation} returns an orientation~$N^r$ of~$N_{\rho}$. We \blue{will} prove that~$N^r$ is an orientation of~$(N,e_\rho,d^-_N)$.
It \blue{suffices} to \blue{show} that all vertices of~$N^r$ have the correct in-degree and out-degree, and $N^r$ has no directed cycle.

\blue{Assume} that there exists some vertex~$u$ in~$N^r$ that does not have the correct in-degree and out-degree. Each vertex that is processed (as well as each leaf) always obtains the correct in-degree and out-degree. Hence $u$ has not been processed. Since all edges have been oriented \red{and edges are oriented away from a vertex only if that vertex is processed, it follows that $u$ has in-degree $d_N(v)$, and so~$u$ is not a leaf.
Thus, $d_N^-(u) < d_N(u)$ and so $u$ has in-degree at least $d^-_N(u)+1$. By a similar reasoning}, all vertices $v\in V$ have in-degree at least~$d^-_N(v)$. Hence, as~$\sum_{v \in V}d^-_N(v) = |E| +1$, the total in-degree \blue{of $N_{\rho}$} is at least $\sum_{v \in V}d^-_N(v) + 1 = |E| + 2$ . But this implies that the total number of edges in $N_\rho$ is at least $|E|+2$, a contradiction as $N_\rho$ has $|E|+1$ edges. \blue{Thus, every vertex of $N^r$ has the correct in-degree and out-degree.}

\blue{Now assume} that $N^r$ has a directed cycle. \blue{Since every vertex of $N^r$ has the correct in-degree and out-degree, every \latest{non-leaf} vertex has been processed.} Consider the vertex~$v$ of the cycle that is processed first. Let~$u$ be the neighbour of~$v$ on the cycle such that there is an oriented edge~$e$ from~$u$ to~$v$. As any oriented \blue{edge} incident to a vertex \blue{is} oriented away from that vertex when it is processed, $e$ must have been oriented before~$v$ was processed. But this implies that~$u$ was processed before~$v$, \blue{contradicting} our choice of~$v$. \blue{Thus $N^r$ has no directed cycles and it follows that}, if Algorithm~\ref{alg:nonbinary_orientation} returns an orientation of~$N$, it is an orientation of $(N, e_\rho, d^-_N)$.

\blue{To complete the proof of the theorem,} it remains to show that Algorithm~\ref{alg:nonbinary_orientation} runs in~$O(|E|)$ time. A naive implementation takes~$O(|V|^2)$ time, as there are~$O(|V|)$ vertices to process and it may take~$O(|V|)$ time to find the next vertex that can be processed and process it.
However, this running time can be improved by observing that any vertex \blue{$v$} (apart from the root~$\rho$) only becomes suitable for processing after \blue{it has $d^-_N(v)$} incoming \blue{oriented} edges. Thus it is enough to maintain a set \blue{$S$} of \blue{such} vertices and check, whenever an edge is oriented, whether an unprocessed \blue{end-vertex} of this edge should be added to \blue{$S$}. Then, instead of searching for a new vertex to process each time, we can simply take any vertex from the set \blue{$S$}. As each edge is oriented exactly once, the total time spent maintaining \blue{$S$} and orienting all edges is~$O(|E|)$.
\end{proof}



\noindent{\bf Partly-directed and semi-directed phylogenetic networks.}
\blue{We end this subsection with two consequences of Theorem~\ref{thm:nonbinary_characterization} and~Algorithm~\ref{alg:nonbinary_orientation} concerning partly-directed and semi-directed phylogenetic networks.} Recall that a partly-directed \blue{phylogenetic} network is a mixed graph obtained from an undirected \blue{phylogenetic} network by orienting some of \blue{its} edges. \yuki{Let $N=(V,E,A,X)$ \blue{be} a partly-directed \blue{phylogenetic} network on~$X$ with vertex set~$V$, undirected edge set~$E$, and directed edge set~$A$.} \blue{Let $e_{\rho}\in E$ and, for each $v\in V$, let $d^-_N(v)$ denote the} desired in-degree of $v$. \blue{We say that $(N, e_{\rho}, d^-_N)$ is {\em orientable} if there is an orientation of $N$ in which the root subdivides $e_{\rho}$ and, for each $v\in V$, the in-degree of $v$ is $d^-_N(v)$.} \blue{To decide if} $(N,e_\rho,d_N^-)$ is orientable, replace each arc \blue{of $N$} by an undirected edge and apply Algorithm~\ref{alg:nonbinary_orientation} to \blue{determine} whether there exists an orientation. If it exists, it is unique by Theorem~\ref{thm:nonbinary_alg}. Hence, we only need to check whether each arc in~$A$ is oriented the same way in the obtained orientation. \blue{Thus we have} the following corollary of Theorem~\ref{thm:nonbinary_alg}.

\begin{corollary}\label{cor:partlydirected}
Let~$N=(V,E,A,X)$ be a partly-directed \blue{nonbinary phylogenetic} network,~$e_\rho\in E$ and~$d_N^-(v)$ the desired in-degree \blue{of} each~$v\in V$. Then there exists a \leoo{linear-time} algorithm that decides whether $(N,e_\rho,d_N^-)$ is orientable and finds \blue{the unique} orientation if it exists.
\end{corollary}

We now consider semi-directed \blue{phylogenetic} networks. Recall that a semi-directed \blue{phylogenetic} network is a mixed graph \blue{that is} obtained from a directed \blue{phylogenetic} network by unorienting all \blue{non-reticulation} arcs and suppressing the root. \blue{We noted in Section~\ref{sec:Preliminaries} that} a partly-directed \blue{phylogenetic} network \blue{is not necessarily a semi-directed phylogenetic network.} \blue{Thus a natural question is whether} it is easy to decide \blue{if} a given partly-directed \blue{phylogenetic} network is semi-directed. \blue{Corollary~\ref{cor:partlydirected} allows us to answer this question positively.}

\blue{Let $N=(V, E, A, X)$ be a partly-directed nonbinary phylogenetic network on $X$. If there is a vertex of $N$ with exactly one incoming arc, then $N$ is not semi-directed, so we may assume that there are no such vertices. Let $R$ denote} the \blue{subset} of vertices \blue{of $N$} with \blue{at least two} incoming arcs. \blue{For each vertex $v\in V$}, define the desired in-degree $d_N^-(v)$ \blue{of $v$} as \blue{the number of arcs directed into $v$ if $v\in R$; otherwise, set $d^-_N(v)=1$ if $v\not\in R$.} For each choice of \blue{$e_{\rho}\in E$}, we apply Corollary~\ref{cor:partlydirected}. \blue{Then $N$} is semi-directed if and only if \blue{$(N, e_{\rho}, d^-_N)$ is orientable} for at least one choice of~$e_\rho$. \yuki{The running time is~$O(|E|^2)$, since \blue{there are} $|E|$ choices for~$e_\rho$ and Algorithm~\ref{alg:nonbinary_orientation} runs in~$O(|E|)$ time.} \blue{Hence we have} the following corollary.

\begin{corollary}\label{cor:semidirected}
Let~$N=(V,E,A,X)$ be a partly-directed \blue{nonbinary phylogenetic} network. Then we can decide in $O(|E|^2)$ time whether~$N$ is a semi-directed \blue{nonbinary phylogenetic} network.
\end{corollary}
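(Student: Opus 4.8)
The plan is to reduce the problem to the partly-directed orientation test of Corollary~\ref{cor:partlydirected}, run once for each possible location of the root. First I would recover the intended reticulation set. In a semi-directed network every reticulation arc is retained as a directed arc, and a binary reticulation has indegree exactly~$2$, so the reticulation set~$R$ must be precisely the set of vertices occurring as the head of an arc of~$A$, and each such vertex must be the head of exactly two arcs of~$A$. Hence I would compute~$R$ from~$A$ and reject immediately if some vertex is the head of exactly one, or of three or more, arcs of~$A$, or if~$R$ fails the counting condition $|R| = |E_U| - |V| + 1$ of Proposition~\ref{prop:necessary}, where~$U$ is the underlying undirected (binary) network obtained by unorienting every arc of~$A$.

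The observation that makes the reticulation-arc bookkeeping automatic is the following. Suppose we have fixed a root position and, via the orientation algorithm, obtained a directed binary network whose orientation agrees with all arcs already present in~$A$. Then every arc of~$A$ is directed into a vertex of~$R$, so all arcs of~$A$ are reticulation arcs, while the total number of reticulation arcs of the orientation equals $\sum_{r\in R} 2 = 2|R| = |A|$. A set of reticulation arcs that contains~$A$ and has the same cardinality as~$A$ must equal~$A$. Therefore it suffices to decide whether some root placement yields an orientation consistent with~$A$; if so, suppressing the root recovers exactly~$N$, and~$N$ is semi-directed.

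For each candidate root edge~$e_\rho$ I would invoke Corollary~\ref{cor:partlydirected} on $(N, e_\rho, R)$; by Theorem~\ref{thm:nonbinary_alg} the resulting orientation is unique, the test runs in~$O(|E|)$ time, and it already enforces agreement with the pre-oriented arcs of~$A$. The candidate root positions are the undirected edges of~$N$ together with the arcs of~$A$. The latter are needed because, when a child of the original root is a reticulation, suppressing the root merges a retained reticulation arc with an unoriented edge and yields a directed arc of~$N$, so the root may in fact sit on an arc of~$A$. To test such a candidate arc~$a = (u,w) \in A$ I would un-orient~$a$, set~$e_\rho = \{u,w\}$, and run Corollary~\ref{cor:partlydirected} on the modified instance; a successful orientation reproduces~$a$ once the root is re-suppressed. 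There are $O(|E|)$ candidate positions in total and each test costs~$O(|E|)$, giving the claimed $O(|E|^2)$ bound.

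The step I expect to be the main obstacle is the correct treatment of the root location, in particular the degenerate configurations in which the root is adjacent to one or two reticulations: here naive root-suppression can merge two reticulation arcs or create a parallel edge, exactly the phenomenon flagged after Proposition~\ref{prop:necessary}. Proving correctness therefore amounts to showing that the enumeration above (undirected edges plus arcs of~$A$, with the special un-orienting step) captures every directed binary network that could give rise to~$N$, and that no spurious orientation is accepted; the cardinality argument of the second paragraph handles the latter, while the former requires a careful case analysis of how root-suppression acts on the arcs incident with the children of the root.
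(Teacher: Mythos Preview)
Your approach coincides with the paper's: compute $R$ as the set of heads of arcs in $A$ and, for every candidate root edge $e_\rho$, apply Corollary~\ref{cor:partlydirected}. The paper's proof is three sentences and omits both of the points you develop, so your write-up is strictly more careful.

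The cardinality argument is a genuine addition: the paper merely asserts that an orientation found by Corollary~\ref{cor:partlydirected} witnesses that $N$ is semi-directed, without saying why its reticulation-arc set must equal $A$; your counting $|A|=2|R|$ closes that gap cleanly.

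Your worry about rooting on a directed arc is also justified, not merely cosmetic. The paper writes ``for each choice of $e_\rho$'' and then invokes Corollary~\ref{cor:partlydirected}, whose hypothesis is $e_\rho\in E$; read literally, only undirected edges are tested. But there are semi-directed networks in which no undirected edge can host the root: if the original root has one reticulation child $a$ and one tree-vertex child $b$ whose two children are again reticulations, then in the semi-directed network all three edges at $b$ are outgoing arcs, and no orientation with root on an undirected edge can give $b$ the required single in-arc while respecting $A$. So the enumeration must include arcs of $A$, exactly as you propose via the un-orienting step. Whether the paper silently intends this (the symbol $|E|$ in the running-time bound would otherwise be the wrong quantity) is unclear; in any case your version is the one to write down.
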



\subsection{\latest{Characterizing the orientability of undirected binary phylogenetic networks}}

We now \red{consider} the special case of the decision problem {\sc Constrained Orientation} \red{for} \blue{undirected} binary \blue{phylogenetic} networks. Here, rather \latest{than} being given the desired in-degree of each vertex, we are simply given the set of desired reticulations as all such vertices have in-degree exactly two and all remaining vertices (except the root) have in-degree one.

\begin{definition}\label{def:retcut}
Let~$N = (V,E,X)$ be an undirected binary \blue{phylogenetic} network with~$e_\rho\in E$ a distinguished edge, and let~$N_\rho = (V_\rho,E_\rho,X)$ be the graph obtained from~$N$ by subdividing~$e_\rho$ by a new vertex $\rho$. Given the set of desired reticulations $R\subseteq V$, a \emph{reticulation cut} for~$(N,e_\rho,R)$ is a pair~$(R',E')$ with~$R'\subseteq R$ and~$E'\subseteq E_\rho$ such that the following hold \blue{in $N_{\rho}$}:
\begin{itemize}
\item $E'$ is an edge cut of~$N_\rho$;
\item $\rho$ is not in the same connected component of \blue{$N_{\rho}\backslash E'$} as any~$r\in R'$;
\item each edge in $E'$ is incident to exactly one element of~$R'$; and
\item $|R'|=|E'|$.
\end{itemize}
\end{definition}


\noindent Observe that, \red{if, in the definition of a degree cut, $N$ is binary, then, because of the fourth property of a degree cut, $V'$ is a subset of the set of vertices whose desired in-degree is two. Hence, the definition of} a reticulation cut \red{coincides with that of} a degree cut \red{when $N$ is binary}. \remiee{We say $(N,e_{\rho},R)$ is \emph{orientable} if $(N,e_{\rho},d^-_{N})$ is orientable, where $d^-_N(r)=2$ for all $r\in R$ and $d^-_N(v)=1$ for \blue{all $v\in V-R$}.} An example of a reticulation cut of the triple $(N,e_\rho,R)$ in Figure~\ref{fig:notorientable} is illustrated in Figure~\ref{fig:retcut}.


\begin{figure}[h]
\centering
 \begin{tikzpicture}
	 \tikzset{edge/.style={thick}}
     \tikzset{arc/.style={-Latex,thick}}
	 \begin{scope}[xshift=0cm,yshift=0cm]
    \draw[thick, fill, radius=0.06] (1,0) circle node[above] {$\rho$};
	\draw[thick, fill, radius=0.06] (0,0) circle;
	\draw[thick, fill, radius=0.06] (-1,0) circle node[left] {$x$};
	\draw[edge] (-1,0) -- (0,0);
	\draw[fill] (1.9,-.1) rectangle (2.1,.1);
    \draw[edge] (0,0) -- (1,0);
    \draw[edge,dotted] (1,0) -- (2,0);
    \draw[fill] (.9,-1.1) rectangle (1.1,-.9);
    \draw[edge,dotted] (0,0) -- (1,-1);
    \draw[edge] (2,0) -- (1,-1);
    \draw[thick, fill, radius=0.06] (3,-1) circle;
    \draw[edge] (1,-1) -- (3,-1);
    \draw[edge] (2,0) -- (3,-1);
    \draw[thick, fill, radius=0.06] (4,-1) circle node[right] {$y$};
    \draw[edge] (3,-1) -- (4,-1);
    \end{scope}
	\end{tikzpicture}
\caption{\label{fig:retcut} The \yukiii{graph}~$N_\rho$ obtained from the \blue{undirected binary phylogenetic} network in Figure~\ref{fig:notorientable} by subdividing~$e_\rho$ (the edge indicated with an arrow) by a new vertex~$\rho$. The set~$E'$ \blue{consisting of the} dotted edges and \blue{the set} $R'$ \blue{consisting} of the two square vertices \red{form} the reticulation cut~$(R',E')$.}
\end{figure}

\blue{The next proposition is a consequence of Proposition~\ref{prop:nonbinary_necessary}.}

\begin{proposition}\label{prop:necessary}
Let~$N=(V,E,X)$ be an undirected \leoo{binary} \blue{phylogenetic} network,~$e_\rho\in E$ and~$R\subseteq V$. If $(N,e_\rho,R)$ is orientable, then each of the following holds:
\begin{enumerate}[label=\textup{(\roman*)}]
\item $(N,e_\rho,R)$ has no reticulation cut;
\item $|R|=|E|-|V|+1$;
\item \blue{$N\backslash R$} is a forest.
\end{enumerate}
\end{proposition}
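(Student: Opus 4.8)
The plan is to read the statement off from its nonbinary analogue, Proposition~\ref{prop:nonbinary_necessary}. I would encode $R$ as a system of desired indegrees by setting $d^-_N(v)=2$ for $v\in R$ and $d^-_N(v)=1$ otherwise; since $N$ is binary every non-leaf has degree~$3$, so this satisfies $1\le d^-_N(v)\le d_N(v)$, and (as noted after Definition~\ref{def:retcut}) an orientation of $(N,e_\rho,R)$ is precisely an orientation of $(N,e_\rho,d^-_N)$. Hence orientability of $(N,e_\rho,R)$ lets me apply Proposition~\ref{prop:nonbinary_necessary} to $(N,e_\rho,d^-_N)$, and it remains only to rewrite each of its three conclusions in terms of $R$.

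Conditions (ii) and (iii) are then short. For (ii), I would evaluate $\sum_{v\in V}d^-_N(v)=2|R|+(|V|-|R|)=|V|+|R|$ and equate it with the value $|E|+1$ furnished by Proposition~\ref{prop:nonbinary_necessary}(ii), giving $|R|=|E|-|V|+1$. For (iii), the set $R$ is by construction exactly the set of vertices of desired indegree at least $2$, so Proposition~\ref{prop:nonbinary_necessary}(iii) already asserts that $N-R$ is a forest.

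Condition (i) is the substantive one and amounts to matching the two cut notions: I want to show that $(N,e_\rho,R)$ admits a reticulation cut if and only if $(N,e_\rho,d^-_N)$ admits a degree cut, for then Proposition~\ref{prop:nonbinary_necessary}(i) yields (i). The direction I can carry out cleanly is from a degree cut $(V',E')$ to a reticulation cut. After discarding from $V'$ any vertex incident to no edge of $E'$ (which preserves all four degree-cut conditions), the requirement that each surviving $v\in V'$ meet fewer than $d^-_N(v)$ edges of $E'$ forces $d^-_N(v)=2$, hence $v\in R$, and forces $v$ to meet exactly one edge of $E'$; combined with the condition that each edge of $E'$ meet exactly one vertex of $V'$, this produces a bijection between $V'$ and $E'$. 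Thus $|V'|=|E'|$, and $(V',E')$ is a reticulation cut with $R'=V'\subseteq R$.

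The remaining, and in my view hardest, direction is that a reticulation cut yields a degree cut (equivalently, by Theorem~\ref{thm:nonbinary_characterization}, obstructs orientability). The difficulty is that the definition of a reticulation cut constrains the edges of $E'$ only at the vertices of $R'$, so it permits a cut edge joining two vertices of $R'$ and, forced by $|R'|=|E'|$, a compensating cut edge meeting no vertex of $R'$; such a pair violates the degree-cut requirement that every edge of $E'$ meet exactly one element of $V'$. I expect the main work to lie in reshaping a reticulation cut into a genuine degree cut while keeping $\rho$ separated from the designated vertices: one can exploit that the cut edges with both endpoints in $R'$ form a matching on $R'$ to demote one endpoint of each, but the redundant cut edges meeting no reticulation must be treated carefully, since deleting one that borders $\rho$'s side would reconnect separated reticulations to $\rho$. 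An alternative, and perhaps cleaner, route is to bypass the conversion and re-run the cycle-finding argument of Proposition~\ref{prop:nonbinary_necessary}(i) directly from a reticulation cut, locating a directed cycle inside the components separated from $\rho$; either way, controlling these degenerate cuts is the crux of~(i).
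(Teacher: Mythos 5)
Your overall strategy---encoding $R$ as the indegree function $d^-_N(v)=2$ for $v\in R$ and $d^-_N(v)=1$ otherwise, and reading the three conditions off Proposition~\ref{prop:nonbinary_necessary}---is exactly what the paper intends: it omits the proof of Proposition~\ref{prop:necessary} entirely, stating that the binary results follow from the general case. Your treatment of conditions (ii) and (iii) is complete and correct, and your conversion of a degree cut into a reticulation cut is also fine, although that is the direction needed for the ``if'' part of Theorem~\ref{thm:characterization} rather than for the present proposition.

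The gap is the one you name yourself: condition (i) requires ``reticulation cut $\Rightarrow$ degree cut'' (or, directly, ``reticulation cut $\Rightarrow$ not orientable''), and you do not prove it. Your diagnosis of the obstruction is accurate, and the situation is in fact sharper than you suggest: if ``edge-cut'' is read permissively as ``a set of edges whose removal disconnects $N_\rho$'', the implication you need is false. Consider the orientation with arcs $\rho\to\ell_0$, $\rho\to w$, $w\to p_1$, $w\to p_2$, $p_1\to r_1$, $p_2\to r_1$, $p_1\to p_3$, $p_3\to r_2$, $r_1\to r_2$, together with pendant leaves below $p_2$, $p_3$ and $r_2$. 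This is a legitimate orientation with $R=\{r_1,r_2\}$, yet $R'=\{r_1,r_2\}$ and $E'=\bigl\{\{r_1,r_2\},\{\rho,w\}\bigr\}$ satisfy every clause of Definition~\ref{def:retcut} under that reading: removing $E'$ separates $\{\rho,\ell_0\}$ from the rest, each $r_i$ meets exactly one edge of $E'$, and both $r_i$ lie in the component avoiding $\rho$. Since this network is orientable, there is no degree cut to convert to, and no cycle-finding argument can succeed from this ``cut''. Closing the gap therefore requires first committing to the standard reading of edge-cut as $\delta(S)$ for a vertex set $S$ (which excludes the example above) and then still doing the combinatorial work of ruling out, inside an orientation, a cut edge joining two vertices of $R'$ together with its compensating edge disjoint from $R'$; neither step appears in the proposal. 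As written, you have established (ii) and (iii) but not (i).
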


To illustrate Proposition~\ref{prop:necessary}, the example \blue{in} Figure~\ref{fig:notorientable} satisfies~(ii) and~(iii) but, \blue{as shown in} Figure~\ref{fig:retcut}, it does not satisfy~(i), and hence it is not orientable.

\blue{The next theorem is the \red{special case} of Theorem~\ref{thm:nonbinary_characterization} \red{when restricted to undirected binary phylogenetic networks}. It characterizes} when an undirected \leoo{binary} \blue{phylogenetic} network with given locations for the root and reticulations has an orientation. \blue{The correctness of this characterization follows from Theorem~\ref{thm:nonbinary_characterization}.}

\begin{theorem}\label{thm:characterization}
Let~$N=(V,E,X)$ be an undirected \leoo{binary} \red{phylogenetic} network,~$e_\rho\in E$ and~$R\subseteq V$. Then $(N,e_\rho,R)$ is orientable if and only if $(N,e_\rho,R)$ has no reticulation cut \blue{and} $|R|=|E|-|V|+1$.
\end{theorem}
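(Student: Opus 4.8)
The plan is to deduce this binary characterization directly from the general nonbinary result already established in Theorem~\ref{thm:nonbinary_characterization}, rather than reproving anything from scratch. The key observation, already made in the excerpt just before Proposition~\ref{prop:necessary}, is the dictionary between the two settings: specifying the reticulation set~$R$ is exactly the same data as specifying a desired indegree function~$d^-_N$, via the rule~$d^-_N(v)=2$ if~$v\in R$ and~$d^-_N(v)=1$ otherwise. Since~$N$ is binary, every non-leaf vertex has degree~$3$ and every leaf has degree~$1$, so the constraint~$1\le d^-_N(v)\le d_N(v)$ is automatically satisfied by this choice (indegree~$2$ is only ever assigned to degree-$3$ vertices, and indegree~$1$ is always legal). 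Hence~$(N,e_\rho,R)$ is orientable in the sense of this subsection if and only if~$(N,e_\rho,d^-_N)$ is orientable in the sense of the general theorem.

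First I would verify that the hypothesis~$|R|=|E|-|V|+1$ translates into the hypothesis~$\sum_{v\in V}d^-_N(v)=|E|+1$ required by Theorem~\ref{thm:nonbinary_characterization}. Under the above assignment one computes
\[
\sum_{v\in V}d^-_N(v)=2|R|+(|V|-|R|)=|V|+|R|,
\]
so~$\sum_{v\in V}d^-_N(v)=|E|+1$ holds precisely when~$|R|=|E|-|V|+1$. Thus the two degree-sum conditions are literally equivalent, and the given hypothesis of the binary theorem is exactly the hypothesis needed to invoke the nonbinary one.

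Next I would check that a reticulation cut and a degree cut coincide under this translation. Given a reticulation cut~$(R',E')$, I would show that~$(R',E')$ is a degree cut for~$(N,e_\rho,d^-_N)$: the edge-cut condition and the separation-from-$\rho$ condition are identical in both definitions; the condition that~$E'$ has exactly one edge incident to each~$r\in R'$ matches ``each edge in~$E'$ is incident to exactly one element of~$V'$'' together with the count~$|R'|=|E'|$; and since each~$r\in R'$ has~$d^-_N(r)=2$ while being incident to exactly one edge of~$E'$, the final condition ``each~$v\in V'$ is incident to fewer than~$d^-_N(v)$ edges in~$E'$'' holds (one is fewer than two). Conversely, any degree cut~$(V',E')$ can only contain vertices with~$d^-_N(v)\ge 2$, i.e.\ vertices of~$R$, each incident to exactly one edge of~$E'$, so it yields a reticulation cut with~$R'=V'$ and~$|R'|=|E'|$. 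Hence ``no reticulation cut'' and ``no degree cut'' are equivalent statements.

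With these three translations in hand, the theorem follows immediately: $(N,e_\rho,R)$ is orientable iff~$(N,e_\rho,d^-_N)$ is orientable, which by Theorem~\ref{thm:nonbinary_characterization} (whose degree-sum hypothesis we have verified) holds iff~$(N,e_\rho,d^-_N)$ has no degree cut, which in turn holds iff~$(N,e_\rho,R)$ has no reticulation cut. The only real work is the bookkeeping of matching the two cut definitions, and I expect the mild subtlety there to be confirming that a degree cut cannot use a non-reticulation vertex~$v$ (with~$d^-_N(v)=1$): such a vertex would need to be incident to fewer than one edge of~$E'$, i.e.\ zero edges, yet each vertex of~$V'$ must be incident to an edge of~$E'$ by the third bullet of the degree-cut definition, so no degree-$1$-indegree vertex can belong to~$V'$. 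This is the crux that forces~$V'\subseteq R$ and makes the correspondence exact.
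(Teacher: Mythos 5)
Your proof is correct and is exactly the route the paper takes: the paper omits the proof of this theorem because, as stated at the start of the binary subsection, the binary results ``follow directly from the general case,'' and your dictionary ($d^-_N(v)=2$ for $v\in R$ and $d^-_N(v)=1$ otherwise, giving $\sum_{v}d^-_N(v)=|V|+|R|$ so that the two degree-sum hypotheses coincide, and reticulation cuts corresponding to degree cuts) is precisely that reduction to Theorem~\ref{thm:nonbinary_characterization}. One small imprecision: the third bullet of the degree-cut definition constrains edges of $E'$ (each meets exactly one vertex of $V'$) rather than vertices of $V'$, so it does not by itself force every $v\in V'$ to meet an edge of $E'$; but a vertex of $V'$ meeting no edge of $E'$ can simply be dropped from $V'$ without affecting any condition, after which your argument that $V'\subseteq R$ and $|V'|=|E'|$ goes through.
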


\section{\remiee{Orientations within a specific subclass of \blue{directed} binary \blue{phylogenetic} networks}}\label{sec:fptalgs}

\markj{We \blue{now} turn our attention to deciding whether a given undirected \leoo{binary} \blue{phylogenetic} network \blue{has a $C$-orientation for} \blue{a given class $C$} of directed binary \blue{phylogenetic} networks. \blue{Unlike {\sc Constrained Orientation} we are given} no information about the location of the root or the reticulation vertices. Formally, given a class $C$ of directed binary \blue{phylogenetic} networks, the problem of interest is as follows:}

\markj{\fbox{
\parbox{0.8\linewidth}{
{\sc $C$-Orientation}\\
{\bf Input:} An undirected \leoo{binary} \may{phylogenetic} network~$N$.\\
{\bf Output:} A~$C$-orientation of~$N$ if it exists, and NO otherwise.}
}}

In this section, we present algorithms \blue{for} solving {\sc $C$-Orientation} for classes \blue{$C$} of directed binary \blue{phylogenetic} networks \blue{satisfying} certain properties. In \blue{the next section}, Section~\ref{sec:Classes}, we will show that \shortened{the class tree-child satisfies these properties. In Appendix~\ref{app} we use a similar approach to show the same holds for the classes} 
stack-free, 
tree-based, 
valid, 
orchard, 
and reticulation-visible networks \shortened{(all defined in Appendix~\ref{sec:classesintro})}. 

\leooo{This section is organised as follows. In Section~\ref{sec:RootingBlobs}, we describe an FPT algorithm for {\sc $C$-Orientation} that is parameterized by the reticulation number of \blue{$N$}. Subsequently, in Section~\ref{sec:PuttingBlobsTogether}, we extend this to an FPT algorithm for {\sc $C$-Orientation} but with the level \blue{of $N$} as the parameter. \remiee{These algorithms essentially guess the locations of the root and the reticulations, compute the unique corresponding orientation as in Section~\ref{sec:OrientingWithConstraints}, and determine whether it is within the required class. To get an FPT running time, \blue{$N$} needs to be reduced to a size which is dependent only on the reticulation number (or level) first. We will give \blue{such} a reduction for any class~\blue{$C$ of directed binary phylogenetic networks whose members}} satisfy three certain properties. Intuitively, these properties are as follows. First, membership of $C$ can be checked by considering each blob separately. \remiee{Second, if \blue{$N'$ is a directed binary phylogenetic} network in \blue{$C$ and} new leaves are attached to $N'$, then \blue{the resulting directed binary phylogenetic network is also} \blue{in $C$}. \blue{Lastly,} the third property is based on reducing ``chains'' (sequences of leaves whose neighbours form a path). The third property implies that if \blue{$N'$ is} a \blue{directed binary phylogenetic} network in \blue{$C$} and all chains of \blue{$N'$} are reduced to a certain constant length, then the \blue{resulting directed binary phylogenetic} network \blue{$N''$} is \blue{also} in \blue{$C$}. \blue{Additionally}, a \blue{particular}
relationship holds between the $C$-rooted edges of \blue{$N'$} and \blue{$N''$}.}} These three properties are formally defined in Definitions~\ref{def:blob-determined},~\ref{def:leaf-addable}, and~\ref{def:l-chain-reducible}, respectively.


\subsection{FPT algorithm parameterized by \blue{the} reticulation number}\label{sec:RootingBlobs}

For a class $C$ of directed \blue{binary phylogenetic} networks, we \leooo{\blue{begin} by describing a simple exponential-time algorithm, \blue{namely, Algorithm~\ref{alg:binary_blob_orientation}}, \blue{that} finds all edges of \blue{a given} undirected \blue{binary phylogenetic} network where the root can be inserted in order to obtain a $C$-orientation \blue{and, for all such edges,} one \blue{$C$-orientation}.} \blue{The FPT algorithm described later in this subsection uses Algorithm~\ref{alg:binary_blob_orientation} as a subroutine.}
\blue{Let $N$ be} an undirected \blue{binary phylogenetic} network, \blue{and let $e$ be an edge of $N$}. We say that $N$ can be \emph{$C$-\leooo{rooted} at $e$} if there is a $C$-orientation of~$N$ \blue{whose} root \blue{subdivides} $e$. \latest{If this is the case, we also say that~$e$ is a \emph{$C$-rooted edge} of~$N$.} 
\blue{If $e$ is incident to a leaf $l$ and $N$ can be $C$-rooted at $e$, we say that} $N$ can be \emph{$C$-\leooo{rooted} at $l$}.
For a set $X$ and a non-negative integer $n$, we let $\binom{X}{n}=\{Y\subseteq X: |Y|=n\}$ denote the set of size $n$ subsets of $X$.

\begin{algorithm}[H]
\KwIn{An undirected \remiee{binary} \blue{phylogenetic} network~$N=(V,E,X)$ with reticulation number~$k$.\deletee{, and a class \leooo{of directed networks}~$C$.}}
\KwOut{\remie{\leooo{The} set of \blue{$C$-rooted} edges \blue{of} $N$ \blue{and} a corresponding \blue{$C$-}orientation \blue{for each such edge}.}}
Set $L:=\emptyset$ for the root locations and orientations\;
\For{each edge $e$ of $N$}{
    \For{each guess $R\in 
    \binom{V}{k}$ of the \blue{set of} $k$ reticulation vertices}{
    \latest{Set $d_N^-(v)=2$ for each~$v\in R$ and $d_N^-(v)=1$ for each~$v\in V\setminus R$}\;
        Compute $N(e,R)=\text{{\sc Orientation Algorithm}}(N,e,\latest{d_N^-})$ \latest{(using Algorithm~\ref{alg:nonbinary_orientation})}\;
        \If{$N(e,R)$ is a $C$-orientation}{
            $L:=L\cup \{(e,N(e,R))\}$\;
            Quit the inner for-loop
        }
    }
}
\Return $L$
\caption{\blue{A} simple exponential-time \remiee{$C$-orientation} algorithm \remiee{for a class~$C$ of directed \blue{binary phylogenetic} networks}.\label{alg:binary_blob_orientation}}
\end{algorithm}

\noindent \remie{Note that Algorithm~\ref{alg:binary_blob_orientation} does not necessarily return all $C$-orientations of \blue{$N$}. Indeed, for each edge \blue{of $N$}, the inner loop quits \blue{(Line~7)} after one such orientation is found. To find the complete set of orientations, simply remove \blue{this} line.} \blue{The correctness of Algorithm~\ref{alg:binary_blob_orientation} and its running time is established in the next lemma.}

\begin{lemma}\label{lem:binary_blob_orientation}
\blue{Let $N=(V, E, X)$ be} an undirected binary \blue{phylogenetic} network with reticulation number $k$. \blue{Then} Algorithm~\ref{alg:binary_blob_orientation} applied to \blue{$N$} is correct and runs in $O(\red{n^{k+1}} (n+f_C(n,k)))$ time, where $n = |V|$  and $f_C(n,k)$ is the time-complexity of checking whether a directed binary \blue{phylogenetic} network with $n$ vertices and $k$ reticulations is in the class $C$ \blue{of directed binary phylogenetic networks}.
\end{lemma}

\begin{proof}
\blue{Let $e$ be an} edge of $N$. If \blue{$N$ can be $C$-rooted at $e$}, then there is a set \blue{$R$} of $k$ reticulations \blue{such that}, by Theorem~\ref{thm:nonbinary_alg}, \blue{{\sc Orientation Algorithm$(N, e, R)$} returns} a $C$-orientation of $N$ rooted \blue{along} $e$. \blue{Since} Algorithm~\ref{alg:binary_blob_orientation} \blue{checks} all possible locations for the $k$ reticulations, \blue{it} will find \blue{such a} $C$-orientation. \blue{The correctness of Algorithm~\ref{alg:binary_blob_orientation} now follows.}

\blue{For the} running-time, the outer loop runs $O(n)$ times, as the degree of every vertex \blue{of $N$} is at most three and so $|E|\le \frac{3}{2}n$.
The inner loop runs \remiee{at most} $\binom{n}{k}$ times.
Inside the inner loop, there are \blue{exactly} two parts that run in non-constant time. \blue{First, by} Theorem~\ref{thm:nonbinary_alg}, {\sc Orientation Algorithm} runs in $O(n)$ time and, \blue{second, by} definition, checking whether a directed \blue{binary phylogenetic} network with $n$ vertices and $k$ reticulations is in $C$ takes $O(f_C(n,k))$ time. \blue{These} combine to \blue{give} a \blue{total} running time of $O\left(\binom{n}{k} n (n+f_C(n,k))\right)$, \red{that is $O(n^{k+1}(n+f_C(n, k)))$}. 
\end{proof}


\leooo{To obtain an FPT algorithm \blue{for {\sc $C$-Orientation}}, we need to pose some restrictions on the class~$C$. The first of \blue{these restrictions} is described in Definition~\ref{def:blob-determined}. \blue{For} a blob~$B$ of a directed \blue{binary phylogenetic} network, the \blue{directed binary phylogenetic} network \emph{induced} by $B$ is obtained from~$B$ by \blue{adjoining, to} each vertex~$v$ \blue{of either} in-degree~$1$ and out-degree~$1$, or in-degree~$2$ and out-degree~$0$, a \blue{new} leaf $x$ and a \blue{new} arc $(v,x)$.}

\begin{definition}\label{def:blob-determined}
A class $C$ of directed \blue{binary phylogenetic} networks is {\em blob-determined} if \blue{the following property holds:} A directed \blue{binary phylogenetic} network $N$ is \blue{a member of} $C$ precisely if every \blue{network} induced \blue{by a} blob of $N$ is \blue{a member of} $C$.
\end{definition}

\red{Let $N$ be} an undirected (resp.\ directed) binary phylogenetic network \red{on $X$, and suppose that $e$ is a cut-edge (resp.\ cut-arc) of~$N$. A connected component of $N\backslash e$} that is an undirected (resp.\ directed) phylogenetic \red{tree on~$X'$, where $X'\subseteq X$, is called a {\em pendant phylogenetic subtree} of~$N$.} \blue{A pendant \red{phylogenetic} subtree is} \emph{trivial} if it consists of a single leaf; \blue{otherwise, it is {\em non-trivial}}. If a class $C$ of directed \blue{binary phylogenetic} networks is blob-determined, then, in deciding whether an undirected \blue{binary phylogenetic} network \blue{$N$} has a $C$-orientation, we may assume that $N$ has no non-trivial pendant \red{phylogenetic} subtrees. To see this, observe that if $N'$ is \blue{an undirected binary phylogenetic} network obtained \blue{from $N$} by replacing a pendant \red{phylogenetic} subtree \blue{with} a single leaf, \blue{say $l$}, then, \red{as $C$ is blob determined and, thus, the existence of a $C$-orientation depends only on the biconnected components of $N$, it follows that} $N'$ has a $C$-orientation if and only if $N$ has a $C$-orientation \red{(see Figure~\ref{fig:PendantTree})}. Moreover, if $e$ \blue{is the} pendant edge of~$N'$ \blue{incident with $l$}, then~$N'$ can be $C$-rooted at~$e$ if and only if~$N$ can be $C$-rooted at each edge of \blue{the} pendant \red{phylogenetic} subtree \blue{replaced by $l$} (\red{again}, see Figure~\ref{fig:PendantTree}). Hence, we will assume throughout the remainder of Section~\blue{\ref{sec:fptalgs}}, \blue{as well as} \blue{Section~\ref{sec:Classes}},
that \blue{if $N$ is an undirected binary phylogenetic network, then} \blue{$N$ has} no non-trivial pendant \red{phylogenetic} subtrees.

\latest{In addition, note that if~$N$ is an undirected binary phylogenetic network with reticulation number at most~$1$, then we can decide whether~$N$ can be $C$-rooted at an edge~$e$ by running Algorithm~\ref{alg:binary_blob_orientation}, with the running time being a polynomial in the number of vertices and the time needed to check membership of the class~$C$ (see Lemma~\ref{lem:binary_blob_orientation}). Therefore, we also assume throughout the remainder of Section~\blue{\ref{sec:fptalgs}}, \blue{as well as} \blue{Section~\ref{sec:Classes}},
that each undirected binary phylogenetic network has reticulation number at least~$2$.}

\begin{figure}
\centering
\begin{tikzpicture}[scale=.85]
	 \tikzset{edge/.style={thick}}
     \tikzset{arc/.style={-Latex,thick}}
	 \begin{scope}[xshift=0cm,yshift=0cm]
    \draw[thick, fill, radius=0.06] (1,0) circle node[right] {$l$};
    \draw[thick, fill, radius=0.06] (.5,0) circle node[above] {$\rho$};
    \draw[arc] (.5,0) -- (1,0);
    \draw[arc] (.5,0) -- (0,0);
    \draw[fill=lightgray] (-.9,0) ellipse (.9 and .9);
	\draw (0,-1.5) node {$N'$};
	\end{scope}
	\begin{scope}[xshift=3.5cm,yshift=0cm]
    \draw[thick, fill, radius=0.06] (1,0) circle;
    \draw[arc] (1,0) -- (0,0);
    \draw[fill=lightgray] (-.9,0) ellipse (.9 and .9);
    \draw[thick, fill, radius=0.06] (2,1) circle;
    \draw[thick, fill, radius=0.06] (2,-1) circle;
    \draw[thick, fill, radius=0.06] (3,1.5) circle node[right] {$x_1$};
    \draw[thick, fill, radius=0.06] (2.5,1.25) circle node[above left] {$\rho$};
    \draw[thick, fill, radius=0.06] (3,.5) circle node[right] {$x_2$};
    \draw[thick, fill, radius=0.06] (3,-.5) circle node[right] {$x_3$};
    \draw[thick, fill, radius=0.06] (3,-1.5) circle node[right] {$x_4$};
    \draw[arc] (2,1) -- (1,0);
    \draw[arc] (1,0) -- (2,-1);
    \draw[arc] (2,-1) -- (3,-.5);
    \draw[arc] (2,-1) -- (3,-1.5);
    \draw[arc] (2.5,1.25) -- (2,1);
    \draw[arc] (2,1) -- (3,.5);
    \draw[arc] (2.5,1.25) -- (3,1.5);
	\draw (.5,-1.5) node {$N$};
	\draw (-2,-1.9) node {(a)};
	\end{scope}
	\begin{scope}[xshift=10cm,yshift=0cm]
    \draw[thick, fill, radius=0.06] (1,0) circle  node[right] {$l$};
    \draw[arc] (0,0) -- (1,0);
    \draw[fill=lightgray] (-.9,0) ellipse (.9 and .9);
	\draw (0,-1.5) node {$N'$};
	\end{scope}
	\begin{scope}[xshift=13.5cm,yshift=0cm]
    \draw[thick, fill, radius=0.06] (1,0) circle;
    \draw[arc] (0,0) -- (1,0);
    \draw[fill=lightgray] (-.9,0) ellipse (.9 and .9);
    \draw[thick, fill, radius=0.06] (2,1);
    \draw[thick, fill, radius=0.06] (2,-1);
    \draw[thick, fill, radius=0.06] (3,1.5) circle node[right] {$x_1$};
    \draw[thick, fill, radius=0.06] (3,.5) circle node[right] {$x_2$};
    \draw[thick, fill, radius=0.06] (3,-.5) circle node[right] {$x_3$};
    \draw[thick, fill, radius=0.06] (3,-1.5) circle node[right] {$x_4$};
    \draw[arc] (1,0) -- (2,1);
    \draw[arc] (1,0) -- (2,-1);
    \draw[arc] (2,-1) -- (3,-.5);
    \draw[arc] (2,-1) -- (3,-1.5);
    \draw[arc] (2,1) -- (3,.5);
    \draw[arc] (2,1) -- (3,1.5);
	\draw (.5,-1.5) node {$N$};
	\draw (-2,-1.9) node {(b)};
	\end{scope}
	\end{tikzpicture}
\caption{\label{fig:PendantTree} \remiee{Orientating pendant \red{phylogenetic} subtrees. Two examples showing how orientations \blue{of an undirected binary phylogenetic} network \blue{$N$} \blue{and the undirected binary phylogenetic network $N'$} obtained from \blue{$N$ by replacing a} pendant \red{phylogenetic} subtree \blue{with a single leaf} can be derived from \blue{each other}. \blue{In the first example} (\latest{a}), the root \blue{is placed along an edge} in the pendant \red{phylogenetic} subtree of $N$ \blue{and, in the second example} (\latest{b}), \blue{the root is placed elsewhere}. Note that, \blue{for each example}, the orientation \blue{of the edges} within the grey circle \blue{are the same}. \blue{Thus}, for a blob-determined class $C$ of \blue{directed binary phylogenetic networks}, \blue{$N$} has a $C$-orientation if and only if \blue{$N'$} has a $C$-orientation.}}
\end{figure}



\leooo{To \blue{describe} the remaining two \blue{restrictions}, we need some \blue{additional} definitions.
\blue{Let $N$ be an undirected (resp.\ directed) phylogenetic} network.
\emph{Adding} a leaf to \blue{$N$} means that an edge, \blue{say} $\{u,v\}$ (resp.\ arc $(u,v)$), of \blue{$N$} is replaced by edges~$\{u,w\},\{w,v\},\{w,x\}$ (resp.\ arcs~$(u,w),(w,v),(w,x)$), \blue{where} $w$ \blue{is} a new vertex and $x$ \blue{is} a new leaf. \blue{The second restriction is described in Definition~\ref{def:leaf-addable}.}



\begin{definition}\label{def:leaf-addable}
\leooo{A class~$C$ of directed \blue{binary phylogenetic} networks is \emph{leaf-addable} if \blue{the following property holds:} If $N$ \blue{is a member of} $C$ \blue{and $N'$ is} obtained \blue{from $N$} by adding leaves, \blue{then $N'$} is \blue{a member of} $C$.}
\end{definition}

The \emph{generator} $G(N)$ of an undirected (resp.\ directed) \blue{binary phylogenetic} network $N$ is the \blue{undirected (resp.\ directed)} \leooo{multi-graph} obtained \blue{from $N$} by \blue{deleting} all \red{(trivial and non-trivial)} pendant \red{phylogenetic} subtrees \red{together with the edges (resp.\ arcs) joining the pendant phylogenetic subtrees to the rest of $N$}, and suppressing \blue{each of} the resulting vertices} \blue{of} degree~$2$ (resp.\ in-degree~$1$ and out-degree~$1$). \blue{Note that if $N$ is undirected, then, for the definition of $G(N)$, we additionally require} the reticulation number \blue{of $N$ to be} at least~$2$ \may{(which we assume already)}. \blue{Furthermore}, \blue{$G(N)$} may have parallel edges (resp.\ arcs), as well as undirected (resp.\ directed) loops. The \leoo{edges (resp.\ arcs) of} \blue{$G(N)$} are called \emph{sides}.

\latest{Let $N$ be an undirected binary phylogenetic network~$N$ and let \red{$s=\{u, v\}$} be a side of $G(N)$. \red{Let $P_s$ denote the} undirected path in~$N$ \red{starting at $u$ and ending at $v$} from which $s$ is obtained in the construction of $G(N)$ by suppressing degree-$2$ vertices.
A leaf $x$ \blue{of $N$} is said to be \emph{on} $s$, and $s$ is said to \emph{contain} $x$, if~$x$ is adjacent to an internal vertex of~$P_s$. \latest{Let~$n_s$ denote the number of leaves that are on side~$s$.} An edge of $N$ is \emph{on} $s$ if it is an edge of~$P_s$. \blue{If $P_s$ is the undirected path $u=u_0, e_0, u_1, e_1, \ldots, u_{n_s}, e_{n_s}, u_{n_s+1}=v$ and $c_i$ is the leaf adjacent to $u_i$ for all $i\in \{1, 2, \ldots, n_s\}$, then, \red{relative to $P_s$}, we say that the leaves $c_1, c_2, \ldots, c_{n_s}$ and the edges $e_0, e_1, \ldots, e_{n_s}$ of $N$ on $s$ are {\em ordered from $u$ to $v$}.} In addition, if~$e_\rho$} \blue{is} a distinguished edge \blue{in which} we want to insert the root, then $s$ is said to \emph{contain the root} if~$e_\rho$ is incident to an internal vertex of~$P_s$, \blue{that is} either~$e_\rho$ is on $s$ or \blue{$e_{\rho}$} is a pendant edge incident to an internal vertex of~$P_s$.

\latest{Similarly, if~$N$ is a directed binary phylogenetic network and~$s$ is a side of~$G(N)$, then~$P_s$ is the directed path in~$N$ from which~$s$ is obtained in the construction of~$G(N)$ by suppressing vertices of in-degree~$1$ and out-degree~$1$. A leaf $x$ \blue{of $N$} is said to be \emph{on} $s$, and $s$ is said to \emph{contain} $x$, if~$x$ is adjacent to an internal vertex of~$P_s$. Let~$n_s$ denote the number of leaves that are on side~$s$. An arc of $N$ is \emph{on} $s$ if it is an arc of~$P_s$. \blue{If $s=(u, v)$ is a side of $G(N)$, and $P_s$ is the directed path $u=u_0, e_0, u_1, e_1, \ldots, u_{n_s}, e_{n_s}, u_{n_s+1}=v$ and $c_i$ is the leaf adjacent to $u_i$ for all $i\in \{1, 2, \ldots, n_s\}$, then we say that the leaves $c_1, c_2, \ldots, c_{n_s}$ and the arcs $e_0, e_1, \ldots, e_{n_s}$ of $N$ on $s$ are {\em ordered from $u$ to $v$}.}}



Let $N$ be an undirected \blue{binary phylogenetic} network. Let $\ell$ be a non-negative integer, and let $s$ be a side of $G(N)$ that contains $n_s\geq\ell$ leaves \blue{of $N$}. Then the undirected \blue{binary phylogenetic} network obtained from $N$ by deleting $\blue{n_s}-\ell$ leaves \latest{that are} on $s$ and suppressing any resulting degree-$2$ vertices is \blue{said to be obtained from $N$ by} an {\em $\ell$-chain reduction on $s$}. More generally, an {\em $\ell$-chain reduction on $N$} consists of performing an $\ell$-chain reduction on each side of $G(N)$ \blue{containing} at least $\ell$ leaves.

\blue{The third restriction is described in Definition~\ref{def:l-chain-reducible}.}

\begin{definition}\label{def:l-chain-reducible}
\charles{Let $C$ be a class of directed \blue{binary phylogenetic} networks, and let $N$ be an undirected \blue{binary phylogenetic} network that is $C$-orientable. Let $N'$ be \blue{an} \blue{un}directed \blue{binary phylogenetic} network obtained from $N$ by an $\ell$-chain reduction \blue{on} $N$. Suppose that $s=\{u, v\}$ is a side of $G(N)$ that contains at least $\ell$ leaves \blue{of $N$}, and let \red{$P_s$ be the undirected path} $u=u_0, u_1, \ldots, \blue{u_{n_s}}, \blue{u_{n_s+1}}=v$ \red{of $N$ corresponding to $s$} ordered from $u$ to $v$. \blue{Viewing $s$ as a side of $G(N')$}, let $c'_1, c'_2, \ldots, c'_{\ell}$ denote the leaves \blue{of $N'$} on $s$ ordered from $u$ to $v$ \blue{and, for all $i\in\{1, 2, \ldots, \ell\}$, let $u'_i$ denote the unique vertex of $N'$ adjacent to $c'_i$}. We say that $N$ is {\em $\ell$-chain reducible \blue{along $s$}} if the following two properties hold:}
\begin{enumerate}[label=\textup{(\roman*)}]
\item If $N'$ can be $C$-rooted at \red{$\{u'_i, c'_i\}$} \latest{with $i\in \{1, 2, \ldots, \ell\}$}, then $N$ can be $C$-rooted at all edges incident to $u_j$ for all $j\in \{i, \blue{i+1}, \ldots, \blue{n_s}-(\ell-i)\}$.
\item If $N$ can be $C$-rooted at \blue{an edge} $e$ \blue{incident with} $u_j$ \latest{with $j\in \{1, 2, \ldots, \blue{n_s}\}$}, then~$N'$ can be $C$-rooted at \red{$\{u'_i, c'_i\}$} for some \blue{\latest{$i\in \{1, 2, \ldots, \ell\}$} satisfying} $j\in\{i, \blue{i+1},\ldots ,\blue{n_s}-(\ell-i)\}$.
\end{enumerate}
More generally, \blue{$N$ is {\em $\ell$-chain reducible} if $N$ is $\ell$-chain reducible along every side of $G(N)$ containing} at least \blue{$\ell$ leaves and the following property holds:}
\begin{enumerate}[label=\textup{(iii)}]
\item If $N$ can be $C$-rooted at an edge $e$ \blue{that is neither on a side $s$ containing at least $\ell$ leaves nor incident with a leaf on a side $s$ containing at least $\ell$ leaves}, then $N'$ can also be $C$-rooted at $e$.
\end{enumerate}
A class $C$ of directed \blue{binary phylogenetic} networks is {\em $\ell$-chain reducible} if \latest{every $C$-orientable undirected \blue{binary phylogenetic} network is $\ell$-chain reducible. This concludes Definition~\ref{def:l-chain-reducible}.}
\end{definition}

\noindent Properties (i) and (ii) in Definition~\ref{def:l-chain-reducible} are illustrated in Figure~\ref{fig:DefLChainRed}. \latest{Figure~\ref{fig:propiii} shows an example where Property (iii) is necessary.} \blue{Note that we can perform an $\ell$-chain reduction on any undirected binary phylogenetic network, but not every such network is $\ell$-chain reducible.}

\begin{figure}
    \centering
    \includegraphics[width=.8\textwidth]{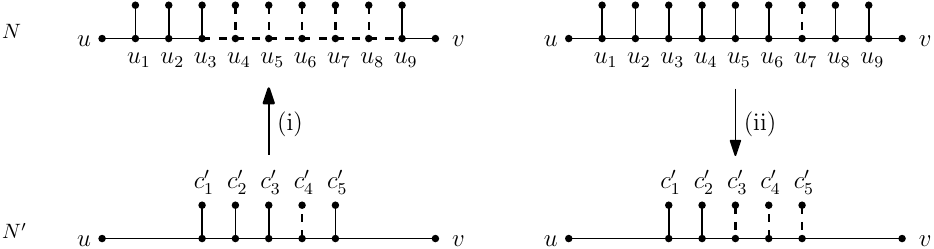}
   \caption{An illustration of \blue{an} $\ell$-chain reduction, \red{and (i) and (ii) of Definition~\ref{def:l-chain-reducible} applied to an undirected binary phylogenetic network $N$. The undirected binary phylogenetic network $N'$ has been obtained from $N$ by an $\ell$-chain reduction, where $\ell=5$. Here} the side $s=\{u,v\}$ of $G(N)$ is reduced from $n_s=9$ to $\ell=5$. \red{Now suppose that} $N$ is $5$-chain reducible. \blue{To illustrate (i) of Definition~\ref{def:l-chain-reducible}, if \latest{$N'$}} can be \blue{$C$-}rooted at leaf \red{$\{u'_4, c'_4\}$} (\blue{that is}, $i=4$), then $N$ can be \blue{$C$-}rooted at all edges \blue{incident with} $u_j$ for \blue{all} $j\in\{4, 5, \ldots, 8\}$ \red{(dashed edges)}. Furthermore, \blue{to illustrate (ii) of Definition~\ref{def:l-chain-reducible}}, if $N$ can be \blue{$C$-}rooted at the \blue{pendant} edge incident \blue{with} $u_7$, then $N'$ can be \blue{$C$-}rooted at \red{$\{u'_i, c'_i\}$} for some \red{$i\in \{1, 2, \ldots, 5\}$} \blue{satisfying} $7\in\{i, i+1,\ldots, i+4\}$, that is, for some $i\in \{3, 4, 5\}$ \red{(dashed edges)}.}
    \label{fig:DefLChainRed}
\end{figure}

\begin{figure}
    \centering
    \begin{tikzpicture}
	 \tikzset{edge/.style={thick}}
     \tikzset{arc/.style={-Latex,thick}}
	 \begin{scope}[xshift=0cm,yshift=0cm]
    \draw[thick, fill, radius=0.06] (0,0) circle;
	\draw[thick, fill, radius=0.06] (-.5,.5) circle;
	\draw[edge] (-.5,.5) -- (0,0);
	\draw (-.25,.25) node[above right] {$e$};
	\draw[thick, fill, radius=0.06] (2,0) circle;
    \draw[edge] (0,0) -- (2,0);
    \draw[thick, fill, radius=0.06] (0,-2) circle;
    \draw[edge] (0,0) -- (0,-2);
    \draw[thick, fill, radius=0.06] (1,-1) circle;
    \draw[edge] (1,-1) -- (0,-2);
    \draw[edge] (1,-1) -- (2,0);
    \draw[edge] (1,-1) -- (2,-2);
    \draw[edge] (1,-1) -- (1.5,-1.5);
    \draw[thick, fill, radius=0.06] (2,-2) circle;
    \draw[edge] (2,0) -- (2,-2);
    \draw[thick, fill, radius=0.06] (0,-3) circle;
	\draw[thick, fill, radius=0.06] (1,-3) circle;
	\draw[thick, fill, radius=0.06] (2,-3) circle;
    \draw[thick, fill, radius=0.06] (0,-3.5) circle;
	\draw[thick, fill, radius=0.06] (1,-3.5) circle;
	\draw[thick, fill, radius=0.06] (2,-3.5) circle;
	\draw[edge, dashed] (0,-2) -- (0,-3);
	\draw[edge, dashed] (0,-3) -- (0,-3.5);
	\draw[edge, dashed] (2,-2) -- (2,-3);
	\draw[edge, dashed] (2,-3) -- (2,-3.5);
	\draw[edge, dashed] (0,-3) -- (1,-3);
	\draw[edge, dashed] (1,-3) -- (2,-3);
	\draw[edge, dashed] (1,-3) -- (1,-3.5);
	\draw (1,-4) node {$N$};
    \end{scope}
    \begin{scope}[xshift=4cm,yshift=0cm]
    \draw[thick, fill, radius=0.06] (0,0) circle;
	\draw[thick, fill, radius=0.06] (-.5,.5) circle node[above] {$\rho$};
	\draw[thick, fill, radius=0.06] (-1,0) circle;
	\draw[arc] (-.5,.5) -- (0,0);
	\draw[arc] (-.5,.5) -- (-1,0);
	\draw[thick, fill, radius=0.06] (2,0) circle;
    \draw[arc] (0,0) -- (2,0);
    \draw[thick, fill, radius=0.06] (0,-2) circle;
    \draw[arc] (0,0) -- (0,-2);
    \draw[thick, fill, radius=0.06] (1,-1) circle;
    \draw[arc] (1,-1) -- (0,-2);
    \draw[arc] (2,0) -- (1,-1);
    \draw[arc] (1,-1) -- (2,-2);
    \draw[thick, fill, radius=0.06] (2,-2) circle;
    \draw[arc] (2,0) -- (2,-2);
    \draw[thick, fill, radius=0.06] (0,-3) circle;
	\draw[thick, fill, radius=0.06] (1,-3) circle;
	\draw[thick, fill, radius=0.06] (2,-3) circle;
    \draw[thick, fill, radius=0.06] (0,-3.5) circle;
	\draw[thick, fill, radius=0.06] (1,-3.5) circle;
	\draw[thick, fill, radius=0.06] (2,-3.5) circle;
	\draw[arc] (0,-2) -- (0,-3);
	\draw[arc] (0,-3) -- (0,-3.5);
	\draw[arc] (2,-2) -- (2,-3);
	\draw[arc] (2,-3) -- (2,-3.5);
	\draw[arc] (0,-3) -- (1,-3);
	\draw[arc] (2,-3) -- (1,-3);
	\draw[arc] (1,-3) -- (1,-3.5);
	\draw (1,-4) node {$N_r$};
    \end{scope}
    \begin{scope}[xshift=8cm,yshift=0cm]
    \draw[thick, fill, radius=0.06] (0,0) circle;
	\draw[thick, fill, radius=0.06] (-.5,.5) circle;
	\draw[edge] (-.5,.5) -- (0,0);
	\draw (-.25,.25) node[above right] {$e$};
	\draw[thick, fill, radius=0.06] (2,0) circle;
    \draw[edge] (0,0) -- (2,0);
    \draw[thick, fill, radius=0.06] (0,-2) circle;
    \draw[edge] (0,0) -- (0,-2);
    \draw[thick, fill, radius=0.06] (1,-1) circle;
    \draw[edge] (1,-1) -- (0,-2);
    \draw[edge] (1,-1) -- (2,0);
    \draw[edge] (1,-1) -- (2,-2);
    \draw[edge] (1,-1) -- (1.5,-1.5);
    \draw[thick, fill, radius=0.06] (2,-2) circle;
    \draw[edge] (2,0) -- (2,-2);
    \draw[thick, fill, radius=0.06] (0.5,-3) circle;
	\draw[thick, fill, radius=0.06] (1.5,-3) circle;
    \draw[thick, fill, radius=0.06] (0.5,-3.5) circle;
	\draw[thick, fill, radius=0.06] (1.5,-3.5) circle;
	\draw[edge] (0,-2) -- (.5,-3);
	\draw[edge, dashed] (.5,-3) -- (.5,-3.5);
	\draw[edge] (.5,-3) -- (1.5,-3);
	\draw[edge, dashed] (1.5,-3) -- (1.5,-3.5);
	\draw[edge] (2,-2) -- (1.5,-3);
	\draw (1,-4) node {$N'$};
    \end{scope}
	\end{tikzpicture}
   \caption{\latest{\red{An} example where (iii) of Definition~\ref{def:l-chain-reducible} is not satisfied. Suppose \red{that} $C$ is the class of stack-free networks (or reticulation-visible networks) and~$\ell=2$. Then \red{the undirected binary phylogenetic network} $N$ can be $C$-rooted at~$e$ since the directed network~$N_r$ is a $C$-orientation of $N$. However, a routine check shows that the \red{undirected binary phylogenetic} network~$N'$ obtained by performing an $\ell$-chain reduction on~$N$ cannot be $C$-rooted at~$e$.
   \red{Furthermore,} \red{both} (i) and (ii) of Definition~\ref{def:l-chain-reducible} \red{vacuously hold} since neither~$N$ nor~$N'$ can be $C$-rooted at any of the \red{dashed edges}.}}
    \label{fig:propiii}
\end{figure}

Let $C$ be an $\ell$-chain reducible, leaf-addable, blob-determined class of directed \blue{binary phylogenetic} networks.
We \red{next} describe \latest{an} FPT algorithm, namely, Algorithm~\ref{alg:binary_blob_orientation_TC_SF}, for $C$-\textsc{orientation} with the reticulation number \blue{of~$N$} as the parameter. \latest{In the description of Algorithm~\ref{alg:binary_blob_orientation_TC_SF}, recall that \emph{adding a leaf~$x$} to a directed network means that an arc~$(u,v)$ is subdivided with a new vertex,~$w$ say, to create the two arcs~$(u, w)$ and~$(w, v)$, and that leaf~$x$ is added with an arc~$(w, x)$ (so, in particular, the orientation of the added pendant arcs is determined). In particular, when we add back several leaves to form a chain, we repeat this operation sequentially for each leaf whilst respecting the ordering of the added leaves.

\red{As with} Algorithm~\ref{alg:binary_blob_orientation}, Algorithm~\ref{alg:binary_blob_orientation_TC_SF} finds all of the $C$-rooted edges of a given undirected binary phylogenetic network, \red{say $N$}, and, for all such edges, it also
finds a $C$-orientation.} \red{Loosely speaking, Algorithm~\ref{alg:binary_blob_orientation_TC_SF} starts by performing an $\ell$-chain reduction on $N$ to produce an undirected binary phylogenetic network $N^\ell$, and then, using Algorithm~\ref{alg:binary_blob_orientation}, finds all the $C$-rooted edges of $N^\ell$ as well as a $C$-orientation of $N^\ell$ for each such edge (Lines~1--2). For each $C$-rooted edge $e$ of $N^\ell$, the algorithm then iteratively finds several $C$-rooted edges of $N$ ``linked'' to $e$ via Definition~\ref{def:l-chain-reducible}. \may{It essentially does this by re-attaching the leaves that were removed in the $\ell$-chain reduction (after optionally first removing the leaf edge where the root is located and relocating the root to the resulting degree-2 node). It thus also} provides a corresponding $C$-orientation. 

Noting that $G(N)=G(N^\ell)$, let $s$ be the side of $G(N^\ell)$ that contains either $e$ if $e$ is not pendant or the leaf incident to $e$ if $e$ is pendant, and let $n_s$ be the number of leaves of $N$ on $s$. How this iterative process proceeds depends on whether (i) $n_s < \ell$ (Lines~7--10\may{; uses Definition~\ref{def:l-chain-reducible}(iii)}), (ii) $n_s\ge \ell$ and $e$ is a pendant edge of $N^\ell$ (Lines~11--30\may{; uses Definition~\ref{def:l-chain-reducible}(i)}), or (iii) $n_s\ge \ell$ and $e$ is not a pendant edge of $N^\ell$ (Lines~31--33\may{; we argue that we do not need to consider this case explicitly}).
Most of the work is in (ii) where Algorithm~\ref{alg:binary_blob_orientation_TC_SF} initially handles pendant edges of $N$ linked to $e$ (Lines~16--22) and then handles non-pendant edges of $N$ linked to $e$ (Lines~23--29). The fact that this process finds all $C$-rooted edges of $N$ as well as a corresponding $C$-orientation \may{of~$N$} for each such edge is established in Lemma~\ref{lem:fptalgretic}.}

\begin{algorithm}
\KwIn{An undirected \remiee{binary} \blue{phylogenetic} network~$N$ \blue{with} \may{reticulation number~$k\geq 2$ and} no non-trivial pendant \red{phylogenetic} subtrees.\deletee{, and an $l$-chain reducible, \leooo{leaf-addable, blob-determined} class~$C$.}}
\KwOut{The set of $C$-rooted edges \blue{of} $N$ and \leooo{a corresponding \blue{$C$-}orientation \may{of~$N$} for each such edge.}}
    Construct \latest{an} undirected binary \blue{phylogenetic} network~$N^\ell$ \remiee{by performing an $\ell$-chain reduction on $N$}\label{line:1}\; 
    \leooo{Find the set of $C$-rooted edges \blue{of} $N^\ell$ and a corresponding \blue{$C$-}orientation $N^\ell_e$ for each such edge~$e$} using Algorithm~\ref{alg:binary_blob_orientation}\;
    Set $L:=\emptyset$ for the root locations and orientations\;
    \For{each $C$-rooted edge $e$ \blue{of $N^{\ell}$}}{
    Let~$s=\{u,v\}$ be the side of $G(N^{\blue{\ell}})$ that contains either the leaf incident to~$e$ if~$e$ is pendant, or $e$ \red{itself} if~$e$ is not pendant\;
    Let~$n_s$ be the number of leaves \blue{of $N$} on~$s$\; 
        \If{$n_s < \ell$}{
            Extend~$N^\ell_e$
            to a $C$-orientation $N_{e}$ of $N$ by adding back the leaves deleted in the reduction \red{(in Line~\ref{line:1})} \latest{at their original location}\; 
            Set $L:=L\cup\{(e,N_{e})\}$\;
        }
        \If{$n_s\ge \ell$ and $e$ is a pendant edge, \blue{say $\{u'_i, c'_i\}$}, \blue{of $N^{\ell}$}}{
        Let $c'_1, c'_2, \ldots, c'_\ell$ be the leaves \blue{of} $N^{\ell}$ on~$s$ ordered from~$u$ to~$v$\;
        \blue{Let $e'_0, e'_1, \ldots, e'_{\ell}$ be the edges of $N^{\ell}$ on $s$ ordered from $u$ to $v$}\;
        Let~$c_1, c_2, \ldots, c_{n_s}$ be the leaves \blue{of} $N$ on~$s$ ordered from~$u$ to~$v$\;
        Let~$e_0, e_1,\ldots, e_{n_s}$ be the edges \blue{of} $N$ on~$s$ ordered from~$u$ to~$v$\;
        
        \For{\red{each} $j\in\{i, i+1, \ldots, \blue{n_s}-(\ell-i)\}$}{
        Let~$f$ be the pendant edge \blue{of} $N$ incident to $c_j$\;
        Modify~$N^\ell_e$
            to a $C$-orientation $N_{f}$ of $N$ \blue{as follows}. \blue{First, add back $(j-1)-(i-1)$ leaves \latest{to an arbitrary arc} on the directed path from $u'_i$ to $u$ and add back $(n_s-j)-(\ell-i)$ leaves \latest{to an arbitrary arc} on the directed path from $u'_i$ to $v$. \latest{Then, (re)label} the leaves ordered from $u'_i$ to $u$ as $c_{j-1}, c_{j-2}, \ldots, c_1$, \latest{(re)label the leaves ordered from $u'_i$ to $v$ as} $c_{j+1}, c_{j+2}, \ldots, c_{n_s}$ \latest{and relabel} the leaf adjacent to $u'_i$ as $c_j$. Now extend the resulting orientation by adding back the remaining leaves deleted in the reduction \red{(in Line~\ref{line:1})} \latest{at their original location}}\;
            \If{$L$ does not contain a pair with~$f$ as \latest{the} first element yet}{Set $L=L\cup\{(f,N_{f})\}$\;}
        }
        
        \For{\red{each} $j\in\{i-1, i, \ldots, n_s-(\ell-i)\}$}{
        Let~$f=e_j$\;
        Modify~$N^\ell_e$
            to a $C$-orientation $N_f$ of~$N$ \blue{as follows.} \blue{First delete $c'_i$ and the root, relocating the root to $u'_i$.} \blue{Second, add back $(j-1)-(i-1)$ leaves \latest{to an arbitrary arc} on the directed path from $u'_i$ (the new root) to $u$ and add back $(n_s-(j-1))-(\ell-i))$ leaves \latest{to an arbitrary arc} on the directed path from $u'_i$ to $v$, (re)labelling the leaves ordered from $u'_i$ to $u$ and from $u'_i$ to $v$ as $c_{j-1}, c_{j-2}, \ldots, c_1$ and $c_j, c_{j+1}, \ldots, c_{n_s}$, respectively. Now extend the resulting orientation by adding back the remaining leaves deleted in the reduction \red{(in Line~\ref{line:1})} \latest{at their original location}}\;
            \If{$L$ does not contain a pair with~$f$ as \latest{the} first element yet}{Set $L=L\cup\{(f,N_{f})\}$\;}
        }
        
        }
        \If{\blue{$n_s\ge \ell$ and $e$ is not a pendant edge of $N^{\ell}$}}{
        \red{Do nothing as $e$ is incident with a pendant $C$-rooted edge of $N^{\ell}$, and any corresponding $C$-orientation of $N$ is constructed in Lines 23--29;}
        }
    }
    \Return $L$\;
\caption{\leooo{\blue{An} FPT algorithm for $C$-\textsc{orientation} with the reticulation number of $N$ as the parameter, where~$C$ is an $\ell$-chain reducible, leaf-addable, and blob-determined class of directed \blue{binary phylogenetic} networks}.\label{alg:binary_blob_orientation_TC_SF}}
\end{algorithm}




\begin{lemma}
\blue{Let $N=(V, E, X)$ be an undirected binary phylogenetic network with reticulation number $k$, where $k\ge 2$.} \remiee{Then} Algorithm~\ref{alg:binary_blob_orientation_TC_SF} \blue{applied to $N$} is correct and runs in time $$O((8\ell(k-1))^{k+1}(\ell(k-1)+f_C(8\ell(k-1),k)) + \ell(k-1)n^2)
= O(g(k,\ell)+\ell(k-1)n^2),$$ where $n=|V|$, \blue{$f_C(8\ell(k-1), k)$ is the time complexity of checking whether a directed binary phylogenetic network \blue{with $8\ell(k-1)$ vertices and $k$ reticulations} is in the \marknew{$\ell$-chain reducible, leaf-addable, blob-determined} class $C$ of directed binary phylogenetic networks,}
and~$g$ is \remiee{a function of $k$ and $\ell$ independent of~$n$.}\label{lem:fptalgretic}
\end{lemma}

\begin{proof} To establish the lemma, we use the same notation as in Algorithm~\ref{alg:binary_blob_orientation_TC_SF}. \leooo{To prove correctness, we first show that the algorithm correctly infers $C$-rooted edges \blue{of} $N$ from the $C$-rooted edges \blue{of} $N^\ell$.}
\blue{Let $e$ be a $C$-rooted edge of $N^{\ell}$, and let $s$ be the side of $G(N^{\ell})$ containing either the leaf incident to $e$ if $e$ is pendant, or $e$ if $e$ is not pendant. If}  $n_s < \ell$, \blue{then $e$ is an edge of $N$ and, as} $C$ is leaf-addable, it follows that the algorithm correctly \blue{concludes that} $N$ can be $C$-rooted at $e$. \blue{On the other hand, if $n_s\ge \ell$, then, as $C$ is} $\ell$-chain reducible, \blue{it follows} by Property~(i) \blue{of Definition~\ref{def:l-chain-reducible} that each of the edges of $N$ inferred by} Algorithm~\ref{alg:binary_blob_orientation_TC_SF} is a $C$-rooted edge \blue{of} $N$ on side~$s$.


We now \blue{show} that Algorithm~\ref{alg:binary_blob_orientation_TC_SF} finds all $C$-rooted edges \blue{of} $N$. Suppose that~$N$ can be $C$-rooted at edge~$e_\rho$, and let~$s_\rho$ be the side of $G(N)$ that \blue{contains} either \blue{the} leaf incident to~$e_\rho$ if \blue{$e_{\rho}$} is pendant, or $e_\rho$ if \blue{$e_{\rho}$} is not pendant. \blue{First} suppose that~$s_{\rho}$ contains fewer than $\ell$ leaves of $N$. \blue{Then $e_{\rho}$ is an edge of $N^{\ell}$ and,} \remiee{by Property~(iii) of \blue{Definition~\ref{def:l-chain-reducible}}, $e_{\rho}$ \blue{is a $C$-rooted edge of} $N^\ell$. Thus, \blue{as Algorithm~\ref{alg:binary_blob_orientation}} \may{finds all $C$-rooted edges of $N^{\ell}$ with corresponding orientations}, the algorithm correctly finds $e_{\rho}$ \blue{and, \may{because $C$ is leaf-addable}, a corresponding $C$-}orientation of $N$.}



\blue{Now} suppose that~$s_\rho$ contains at least $\ell$ leaves of $N$. We \blue{consider} two cases \blue{depending on whether or not $e_{\rho}$ is a pendant edge of $N$}. \blue{If} $e_\rho$ is pendant, \blue{then} $e_\rho$ is incident to a leaf, say~$c_j$, \blue{of $N$}. \remiee{By Property~(ii) of \blue{Definition~\ref{def:l-chain-reducible}}, $N^\ell$ can be $C$-rooted at $c_i'$ \blue{for some $i$ satisfying} $j\in\{i,i+1\ldots ,n_s-(\ell-i)\}$. \blue{Since Algorithm~\ref{alg:binary_blob_orientation} \may{finds all $C$-rooted edges of $N^{\ell}$ with corresponding orientations}}, the algorithm will \blue{establish that $N^{\ell}$ can be $C$-rooted at $c'_i$ and also} find a \blue{corresponding $C$-}orientation \blue{of $N^{\ell}$}. \blue{It follows that Algorithm~\ref{alg:binary_blob_orientation_TC_SF} correctly finds that} $e_{\rho}$ \blue{is a $C$-rooted edge of $N$ and, it is easily checked}, as $C$ is leaf-addable, \blue{a corresponding $C$-orientation of $N$ in \red{Line~18}}.} \blue{If $e_{\rho}$ is not a pendant edge of $N$, then $e_{\rho}$ is incident to a vertex, say $u_j$ which is adjacent to $c_j$, of $N$. By Property~(ii) of Definition~\ref{def:l-chain-reducible}, $N^{\ell}$ can be $C$-rooted at $c'_i$ for some $i$ satisfying $j\in\{i, i+1, \ldots, n_s-(\ell-i)\}$. Thus, as Algorithm~\ref{alg:binary_blob_orientation} \may{finds all $C$-rooted edges of $N^{\ell}$ with corresponding orientations}, the algorithm establishes that $c'_i$ is a $C$-rooted edge of $N^{\ell}$ and also finds a corresponding $C$-orientation. \red{Therefore, by the argument in the first paragraph of the proof,} Algorithm~\ref{alg:binary_blob_orientation_TC_SF} correctly finds that $e_p$ is a $C$-rooted edge of $N$ and, it is easily checked, as $C$ is blob-determined and leaf-addable, it finds a corresponding $C$-orientation of $N$ in \red{Line~25}.} \blue{Hence} Algorithm~\ref{alg:binary_blob_orientation_TC_SF} correctly finds all $C$-rooted edges \blue{of} $N$ \blue{as well as a corresponding $C$-}orientation of $N$.

\may{Note that all $C$-rooted edges are indeed found in Lines~7--30, so the case of Line~31 can indeed be ignored in the algorithm.}

For the running time, note that \blue{Algorithm~\ref{alg:binary_blob_orientation_TC_SF} consists of} three separate \blue{parts}: the \blue{$\ell$-chain} reduction \blue{on $N$} to \blue{get} $N^\ell$ by \blue{deleting} leaves (Line~1); the application of Algorithm~\ref{alg:binary_blob_orientation} to find the $C$-rooted edges \blue{of} $N^\ell$ \blue{and a corresponding $C$ orientation for each such edge} (Line~2); and the inference of the $C$-rooted edges \blue{of} $N$ \blue{as well as the corresponding $C$-orientations} (Lines~3--\red{34}). It is clear that the reduction \blue{in Line~1} can be executed in $O(n^2)$ time.


Next we turn to the running time of applying Algorithm~\ref{alg:binary_blob_orientation} to~$N^\ell$. As each side of the generator of~$N^\ell$ contains at most~$\ell$ leaves, the number of vertices and edges of~$N^\ell$ are bounded by a function of~$k$ and~$\ell$. This makes the running time of Algorithm~\ref{alg:binary_blob_orientation} a function of~$\ell$ and~$k$. To be more concrete, first observe that,  \revL{since $G(N)$ is cubic, $3|V(G(N))| = 2|E(G(N))|$. Combining this with $k=|E(G(N))| - |V(G(N))| + 1$ \latest{(which follows from the definition of the reticulation number)} gives $|E(G(N))|=3(k-1)$ and $|V(G(N))|=2(k-1)$.} 
Hence, $|V(N^{\blue{\ell}})|\leq 2(k-1) + 6\ell(k-1) \leq 8\ell(k-1)$ and so, by Lemma~\ref{lem:binary_blob_orientation},
the running time \blue{of the second part} is
$$O((8\ell(k-1))^{k+1}(\ell(k-1)+f_C(8\ell(k-1),k))).$$

For the last part, \latest{$G(N)$ can be found in $O(n)$ time by deleting all leaves and suppressing their neighbours.} \blue{As $G(N)$ and $G(N^{\ell})$ are isomorphic}, each side of \blue{$G(N^{\ell})$} has at most \blue{$\ell+1$} edges, \blue{and so} $N^\ell$ has at most $3(k-1)(2\ell+1)$ edges. For each $C$-rooted edge \blue{of} $N^\ell$, we modify a \blue{$C$-}orientation of~$N^\ell$ at most~$2n$ times, each time taking $O(n)$ time. Hence, the running time of this part is $O(\ell(k-1)n^2)$.
\blue{Taken altogether}, the \blue{total} running time of Algorithm~\ref{alg:binary_blob_orientation_TC_SF} is 
$$O((8\ell(k-1))^{k+1}(\ell(k-1)+f_C(8\ell(k-1),k)) + \ell(k-1)n^2).$$
\end{proof}

The next theorem is an immediate consequence of Lemma~\ref{lem:fptalgretic}.

\begin{theorem}\label{thm:fpt-retic}
Let $C$ be an $\ell$-chain reducible, \leooo{leaf-addable, blob-determined} class of directed \blue{binary phylogenetic} networks. \blue{If $f_C(8\ell(k-1), k)$ \latest{(described in Lemma~\ref{lem:fptalgretic})} is a computable function}, then {\sc $C$-orientation} is FPT with the reticulation number of the undirected binary \blue{phylogenetic} network as the parameter.
\end{theorem}

\leo{\blue{In Section~\ref{sec:PuttingBlobsTogether}, we extend Algorithm~\ref{alg:binary_blob_orientation_TC_SF} to an FPT} algorithm \blue{for $C$-orientation, where} the level of \blue{$N$} is the parameter. Before doing this, we conclude this subsection with the following sufficient condition for a network to be $C$-orientable.}

\begin{proposition}
\blue{Let $N$ be an} \leooo{undirected} \blue{binary phylogenetic} network with at least~$\ell$ leaves on each side of \blue{$G(N)$}, and let $C$ be an $\ell$-chain reducible, leaf-addable class of directed \blue{binary phylogenetic} networks. If, by adding leaves, \blue{there is an} undirected \blue{binary phylogenetic} network \blue{that is} $C$-orientable, then \blue{$N$} is $C$-orientable.
\end{proposition}

\begin{proof}
Let $N$ be an arbitrary undirected \blue{binary phylogenetic} network with \leooo{at least}~$\ell$ leaves on each side of its generator. Suppose we can add leaves to~$N$ to obtain an undirected \blue{binary phylogenetic} network $N'$ that is $C$-orientable. Since $C$ is $\ell$-chain reducible, \blue{it follows by Properties~(ii) and~(iii) of Definition~\ref{def:l-chain-reducible} that applying an $\ell$-chain reduction to} $N'$ \blue{gives} a \blue{directed binary phylogenetic} network $N^{\ell}$ that is $C$-orientable. \leooo{Since $N$ can \blue{be} obtained \blue{from $N^{\ell}$} by adding leaves \blue{and} $C$ \blue{is} leaf-addable, $N$ is $C$-orientable.}
\end{proof}


\subsection{FPT algorithm parameterized by \blue{the} level}\label{sec:PuttingBlobsTogether}

Using Algorithms~\ref{alg:binary_blob_orientation} and~\ref{alg:binary_blob_orientation_TC_SF}, in this section we establish an FPT algorithm for {\sc $C$-orientation}, where the level of the undirected \blue{binary phylogenetic} network $N$ is the parameter. The main idea is to \red{orient} each blob \blue{of $N$} and to combine these orientations \blue{into} an orientation of \blue{$N$}. For this second step, we first need the following definitions.

Let $N$ be an undirected \blue{binary phylogenetic} network and let $B$ be a blob of $N$. The \blue{undirected binary phylogenetic} network \emph{induced} by $B$ is obtained from $B$ by \blue{adjoining to} each degree-$2$ vertex $u$ a \blue{new} leaf~$x$ and a \blue{new} edge~$\{u, x\}$. Furthermore, \blue{for} a blob-determined class $C$ of directed \blue{binary phylogenetic} networks, \blue{if $N$ is a member of $C$}, we say that~$B$ can be $C$-\emph{rooted} at a cut-edge~$e=\{u,v\}$ of~$N$ with~$u\in B$ and~$v\notin B$ if the \blue{undirected binary phylogenetic} network induced by~$B$ can be $C$-rooted at the \charles{pendant} \blue{edge} incident to~$u$.

\blue{Allowing for} bi-directed \blue{edges}, let~$N^o_C$ be the \blue{mixed} graph obtained from $N$ by directing each cut-edge~$e$ \blue{of $N$} incident to a blob~$B$ away from $B$ if~$B$ cannot be $C$-rooted at~$e$. \blue{Note that} if a cut-edge $e$ \charles{joins} two blobs \blue{of $N$} and neither \blue{blob} can be $C$-rooted at $e$, then this cut-edge becomes bi-directed. Define $T_C(N)$ to be \may{obtained} from $N_C^o$ by contracting \blue{every} undirected edge \blue{of $N^o_C$}. \latest{Note that \may{(the underlying graph of)}~$T_C(N)$ \may{is a tree} as all edges in the blobs of~$N$ are undirected in~$N^o_C$ and therefore contracted \may{(and a graph without blobs is a tree)}. Also note that~$T_C(N)$ is not a phylogenetic tree, but a tree in the usual graph-theoretic sense and that all its edges are directed or bidirected.}

\blue{Let $C$ be an $\ell$-chain reducible, leaf-addable, blob-determined class of directed binary phylogenetic networks.} The FPT algorithm  for {\sc $C$-orientation} with the level of \blue{$N$} as the parameter is described \blue{as} Algorithm~\ref{alg:C_orientation}. The main idea behind the algorithm is captured by the following proposition. A \emph{rooted tree} is a directed tree with a single vertex \blue{of} in-degree $0$, called the \emph{root}, in which all arcs are directed away from the root. Note that a rooted tree may consist of a single vertex.

\begin{proposition}\label{prop:RootBlobDetermined}
Let $C$ be a blob-determined class of directed \blue{binary phylogenetic} networks, and let $N$ be an undirected \blue{binary phylogenetic} network. Then $N$ has a $C$-orientation if and only if, \blue{for} each blob \blue{$B$} of $N$, \blue{the undirected binary phylogenetic} network induced by \blue{$B$} has a $C$-orientation, and $T_C(N)$ is a rooted tree.
\end{proposition}

\begin{proof}
First assume that $N$ has a $C$-orientation \blue{$N'$}, \blue{and let $B$ be a blob of $N$}. Since $C$ is blob-determined, \blue{$N'$} induces a $C$-orientation of \blue{the undirected binary phylogenetic} network \blue{$N_B$} induced by \blue{$B$}. \blue{Now let} $\{u, v\}$ be a cut-edge of $N$, where $u$ is a vertex of $B$. \blue{If} $\{u, v\}$ is directed away from $B$ in \blue{$N^o_C$}, then~$B$ cannot be $C$-rooted at $\{u, v\}$, \blue{and so, as $N'$ is a $C$-orientation of $N$, it follows that $\{u, v\}$ is not directed towards $u$ in $N'$}. \blue{Thus $\{u, v\}$ is directed away from $u$ in $N'$, that is},
$\{u, v\}$ is directed away from $B$ in \blue{$N'$}. \blue{Therefore if an edge is orientated in $N^o_C$, then the orientation of that edge is in agreement with its orientation in $N'$. (In particular, it follows that no edge is bi-directed in $N^o_C$.)}
Therefore, by contracting the \blue{arcs} of \blue{$N'$} \blue{for which the corresponding edges of} $N^o_C$ have no orientation, we obtain $T_C(N)$. \blue{Since} $T_C(N)$ is obtained from a directed \blue{binary phylogenetic} network by contracting arcs, and $T_C(N)$ is a tree, it \blue{follows that $T_C(N)$ is} a rooted tree.

To prove the converse, assume that the \blue{undirected binary phylogenetic} network induced by each blob of~$N$ has a $C$-orientation and that $T_C(N)$ forms a rooted tree. Let $K$ be the subgraph of $N$ that \blue{contracts} to the root of $T_C(N)$. Then either \blue{(i)} $K$ consists of \blue{a single} blob \blue{$B$ of $N$}, or \blue{(ii)} \blue{$K$} contains at least one cut-edge \blue{of $N$}. \blue{Depending on whether (i) or (ii) holds}, we \blue{next show} that there exists a $C$-orientation of $N$ where the root is located either \blue{on an edge of} $B$, or on a cut edge $e$ of $K$.

\blue{If (i) holds, then, as $T_C(N)$ is a rooted tree (obtained from $N_C^o$ by contracting all undirected edges), all of} the cut-edges \blue{of $N$} incident to \blue{a vertex of} $B$ are oriented away from $B$ in $N_C^o$. \blue{Therefore, as} the \blue{undirected binary phylogenetic} network $N_B$ induced by $B$ is $C$-orientable, there exists an edge $e_\rho$ of $B$ at which $N_B$ can be $C$-rooted. We \blue{now} find a \blue{$C$-}orientation of $N$ as follows. Subdivide $e_\rho$ by inserting the root, and orient the edges in $B$ the same way as \blue{they are orientated} in $N_B$. \blue{Orienting} all cut-edges \blue{of $N$} away from the root, each blob $B'\neq B$ of $N$ now has \blue{exactly one} incoming cut-arc, \blue{say} $(u,v)$. Since $T_C(N)$ is a rooted tree, the \blue{undirected binary phylogenetic} network induced by $B'$ can be $C$-rooted at the cut-edge incident to $v$. \red{Orienting} the edges of $B'$ \blue{(and all other such blobs of $N$)} accordingly, gives a $C$-orientation of~$N$. \blue{For (ii)}, we subdivide \blue{the cut-edge} $e$ \blue{of $K$} by the root and proceed in the same way as for \blue{(i)}\may{, starting by orienting all
cut-arcs away from the root.}
\end{proof}

\begin{algorithm}[t]
\KwIn{An undirected binary \blue{phylogenetic} network $N$ \blue{with no non-trivial pendant \red{phylogenetic} subtrees}. \deletee{and a $\ell$-chain reducible, leaf-addable blob-determined class~$C$.}}
\KwOut{A $C$-orientation of $N$ if it exists, and NO otherwise.}
\blue{Find the set of blobs of $N$}\;
\For{each blob $B$ of $N$}{
\latest{Apply Algorithm~\ref{alg:binary_blob_orientation_TC_SF} to the \blue{undirected binary phylogenetic} network~$N_B$ induced by~$B$ and let $L_B$ be the returned set of pairs $(e,B_e)$ consisting of $C$-rooted edge~$e$ and corresponding orientation~$B_e$ of~$N_B$}\;

    \If{$L_B=\emptyset$}{
        \Return NO\;
    }
}
\blue{Construct} $N^o_C$ \latest{from~$N$} by orienting
\latest{each cut-edge~$e$} of $N$ incident to a vertex \latest{of a blob} $B$ away from $B$ \blue{if} \latest{there is no pair in~$L_B$ with~$e$ as first element (possibly orienting edges in two directions)\;}
\blue{Construct} $T_C(N)$ \latest{from $N^o_C$} by contracting all non-oriented edges in $N^o_C$\label{line:contract}\;
\uIf{$T_C(N)$ is a rooted tree}{
    \remie{Determine the subgraph $K$ of $N$ that \latest{is contracted, in Line~\ref{line:contract},} to the root of $T_C(N)$\;}
    \If{$K$ consists of \blue{a single} blob $B$ \blue{of $N$}}{
        \remie{Pick an arbitrary \latest{element $(e,B_e)\in L_B$} and orient~$B$ \latest{in~$N$ according to~$B_e$}, calling the root vertex~$\rho$\;}
    }
   \If{$K$ contains a cut-edge}{
       Subdivide \latest{an arbitrary cut-edge}~$e$ by the root~$\rho$\;
  }
    Orient all cut-edges of \blue{$N$} away from~$\rho$\;
    \For{each unoriented blob~$B'$ \latest{of~$N$}}{
    Find the cut-arc~$(u,v)$ entering~$B'$\;
    Let~$\{v,x\}$ be the cut-edge incident to~$v$ in the network induced by~$B'$\;
    Find a pair $(\{v,x\},\latest{B'_{\{v,x\}}})\in L_{B'}$\;
    Orient the edges of~$B'$ \latest{in $N$ as in~$B'_{{\{v,x\}}}$}\;
    }
    \Return the oriented network~\latest{$N$}\;
}
\Else{
    \Return NO\;
}
\caption{\blue{An} FPT algorithm for $C$-\textsc{orientation} with the level of $N$ as the parameter, where $C$ is an $\ell$-chain reducible, leaf-addable, and blob-determined class of directed \blue{binary phylogenetic} networks.
\label{alg:C_orientation}
}
\end{algorithm}


\blue{The correctness of Algorithm~\ref{alg:C_orientation} and its running time is established in the next lemma.}


\begin{lemma}
Let $N=(V, E, X)$ be an undirected \blue{binary phylogenetic} network. Then Algorithm~\ref{alg:C_orientation} applied to $N$ is correct and runs in time
\remie{$O( g(L,\ell)n+\ell(L-1)n^3))$} \blue{if $C$ is an} $\ell$-chain reducible, leaf-addable, and blob-determined class of directed \blue{binary phylogenetic} networks,
where $n=|V|$, $L$ is the level of $N$,
and $g$ is \blue{a} function of~$L$ and~$\ell$ independent of $n$.
\label{lem:fptalglevel}
\end{lemma}

\begin{proof}
\blue{The} correctness of \blue{Algorithm~\ref{alg:C_orientation} is essentially given in the proof of} Proposition~\ref{prop:RootBlobDetermined}, \blue{and so it is omitted}.
\remie{For the running time, first note that \revL{all blobs can be found in $O(n^3)$ time by checking for each edge whether it is a cut-edge. The rest of the} algorithm consists of two parts. The first part \blue{consists} of \blue{finding} all \blue{$C$-rooted edges} of the \blue{undirected binary phylogenetic} networks induced by the blobs \blue{of $N$} and a \blue{corresponding} \blue{$C$-}orientation \blue{for each such edge} (Lines~1--8), \blue{while} the second part consists of \blue{constructing} $N^o_C$ \blue{and $T_C(N)$} and, \blue{provided $T_C(N)$ is a rooted tree}, finding \blue{an} orientation of the cut-edges and blob edges of~$N$ (Lines~9--29).} 


\blue{By Lemma~\ref{lem:fptalgretic}}, for a blob $B$ with $n_B$ vertices and $k_B$ reticulations, running Algorithm~\ref{alg:binary_blob_orientation_TC_SF} on \blue{the undirected binary phylogenetic network induced by} $B$ takes $O(g(k_B, \ell)+\ell(k_B-1)n^2_B)$ time. Since \blue{$N$ has} at most $n$ blobs \latest{and~$k_B\leq L$ by the definition of level}, the first part of Algorithm~\ref{alg:C_orientation} runs in time
$$O(g(L, \ell)n+\ell(L-1)n^3).$$

\remie{\blue{For} the second part of Algorithm~\ref{alg:C_orientation}, we \blue{initially construct $N^o_C$ and} $T_C(N)$. \blue{Orientating the cut-edges of $N$ incident to blob vertices} to obtain $N^o_C$ \blue{and then} contracting the \blue{unorientated} edges of $N_C^o$ to obtain $T_C(N)$ takes $O(n^2)$ time.} \remie{\blue{Once this is completed}, the second part of the algorithm requires only one pass through \blue{$N$} to orient \blue{its} edges, as we may independently pick an orientation for each blob $B$ from the set of orientations $L_B$ with the correct root-edge (finding such orientation in the set may take $O(n)$ time). Hence, the second part of the algorithm only takes $O(n^2)$ time.} \blue{This completes the proof of the lemma.}
\end{proof}

\noindent {\bf Remark.} \blue{If $C$ is not necessarily $\ell$-chain reducible and leaf-addable, but is a blob-determined class of directed binary phylogenetic networks, then we can adapt Algorithm~\ref{alg:C_orientation} by replacing Line~3 with the following to obtain an algorithm for deciding if an undirected binary phylogenetic network $N$ has a $C$-orientation:} \\
Let $L_B$ be the output of Algorithm~\ref{alg:binary_blob_orientation} applied to the undirected binary phylogenetic network induced by $B$. \\
\blue{Taking the same approach as the proof of Lemma~\ref{lem:fptalglevel}, the running time of this adaption is} $O(\binom{n}{L} n^2(n+f_C(n, L)))$, where $n$ is the number of vertices of $N$, and $L$ is the level of $N$.

The next theorem is an immediate consequence of Lemma~\ref{lem:fptalglevel}.

\begin{theorem}
Let $C$ be an $\ell$-chain reducible, leaf-addable, blob-determined class of directed \blue{binary phylogenetic} networks, \may{for any fixed~$\ell$}. \blue{If $g(L, \ell)$ is a computable function}, then \blue{Algorithm~\ref{alg:C_orientation}} is an FPT algorithm for {\sc $C$-Orientation}, where the level~\may{$L$} of the \blue{inputted} undirected binary \blue{phylogenetic} network is the parameter.
\label{thm:fpt}
\end{theorem}


\section{Specific classes}
\label{sec:Classes}

\shortened{A directed phylogenetic network is said to be \emph{tree-child} if each non-leaf vertex has a child that is a tree vertex.} The main result of this section \blue{is the following} theorem \blue{which establishes that {\sc $C$-Orientation}} is FPT \shortened{when~$C$ is the class of binary tree-child networks. The same technique can be applied to many other known classes, see Appendix~\ref{app}.}


\begin{theorem}\label{the:combining_tc}
\shortened{Algorithm~\ref{alg:C_orientation} is an FPT algorithm for deciding whether an undirected binary phylogenetic network~$N$ has a tree-child orientation, where the level of~$N$ is the parameter.}
\end{theorem}

The proof of Theorem~\ref{the:combining} \blue{relies on} combining Theorem~\ref{thm:fpt} with Lemmas~\ref{lem:tcsfblob}--\ref{visiblelast}. These lemmas \blue{collectively} show that \blue{each} of the \blue{directed binary phylogenetic} classes \blue{in the statement of Theorem~\ref{the:combining}} is \blue{$\ell$}-chain reducible, leaf-addable, and blob-determined. \blue{To establish the $\ell$-chain reducible property for each of these classes, we will show that each such class satisfies a variant of this property. This variant, called rooted $\ell$-chain reducible, is described in Section~\ref{rootedchain}, where we also show that if a directed binary phylogenetic class $C$ is rooted $\ell$-chain reducible}, leaf-addable, and blob-determined, then $C$ is $\ell$-chain reducible. \blue{Sections~\ref{tree-child} \may{through}~\ref{reticulation-visible} contain the statements of Lemmas~\ref{lem:tcsfblob}--\ref{visiblelast} and their proofs.} \blue{Lastly, note that we could have used Algorithm~\ref{alg:binary_blob_orientation_TC_SF} instead of Algorithm~\ref{alg:C_orientation} \marknew{(with the reticulation number of~$N$ as the parameter)} in the statement of Theorem~\ref{the:combining}.}

\subsection{Rooted $\ell$-chain reduction}
\label{rootedchain}

\blue{We begin by defining the operation of rooted $\ell$-chain reduction.} \latest{Note that this operation is defined on undirected binary phylogenetic networks, but with a specified pendant edge~$e_{\rho}$ which will be used as the root location. We also remark that we will use the term ``rooted $\ell$-chain reduction'' to refer to a network obtained by this operation, as well as to refer to the operation \red{itself}. Recall that we assume throughout Sections~\ref{sec:fptalgs} and~\ref{sec:Classes} that networks have no nontrivial pendant \red{phylogenetic} subtrees and that they have reticulation number at least~$2$.}


\begin{definition}
Let $N$ be an undirected \blue{binary phylogenetic} network, let \blue{$\ell$} be a \blue{non-negative} integer, and let $e_{\rho}$ be a pendant edge of $N$. Furthermore, let $s_{\rho}\blue{=\{u, v\}}$ be the side of $G(N)$ containing $e_{\rho}$, \red{and let $P_{s_\rho}$ denote the undirected path of $N$ corresponding to $s_{\rho}$ \may{between} $u$ \may{and} $v$. We call} \latest{an} undirected \blue{binary phylogenetic} network obtained from $N$ by applying the following three operations a {\em rooted \blue{$\ell$}-chain reduction \red{(from $u$, with respect to~$e_\rho$)}} \blue{on} $N$:
\begin{enumerate}[label=\textup{(\roman*)}]
\item for each side~$s$ of~$G(N)$ other than~$s_{\rho}$ that contains at least \blue{$\ell$} leaves, delete \blue{$n_s-\ell$} leaves on $s$, where \blue{$n_s$} is the number of leaves \red{on} \blue{$s$};
\item delete all leaves on $s_{\rho}$ that are adjacent to an internal vertex of \blue{$P_{s_{\rho}}$} between $u$ and \blue{the end vertex of} $e_\rho$ \blue{on \red{$P_{s_\rho}$}}; and
\item if there are at least \blue{$\ell$} leaves adjacent to an internal vertex of \blue{$P_{s_{\rho}}$} between \blue{the end vertex of} $e_{\rho}$ \blue{on $P_{s_{\rho}}$} and $v$, then delete all but $\blue{\ell}-1$ of these leaves; \latest{otherwise, if there are at most $\ell-1$ such leaves, do nothing}.
\end{enumerate}
\end{definition}



\begin{definition}
\latest{\blue{A class $C$ of directed binary phylogenetic networks} is {\em \blue{rooted $\ell$}-chain reducible} if the following property holds: \red{Let $N$ be} an undirected binary phylogenetic network
and \red{let} $e_{\rho}$ be a pendant edge of $N$. \red{Let $s_\rho=\{u, v\}$ be the side of $G(N)$ containing $e_\rho$ and let $N_u$ and $N_v$
be rooted \may{$\ell$}-chain reductions from $u$ and $v$, respectively, with respect to $e_\rho$.} \red{If} $N$ can be $C$-rooted \red{at $e_\rho$}, \blue{then} \red{at least one of $N_u$ and $N_v$} can be $C$-rooted at $e_{\rho}$.}
\end{definition}


An example \blue{illustrating} these definitions is given in Figure~\ref{fig:ChainReductionOneSide}, \blue{where} $C$ \blue{is} the class of \blue{binary} tree-child networks. \blue{Note that, we will eventually} show that \blue{the class of binary tree-child networks is} \remiee{rooted $3$-chain reducible} (Lemma~\ref{lem:3chainreducibleTCSF}).

\begin{figure}
\centering
\includegraphics[width=.7\textwidth]{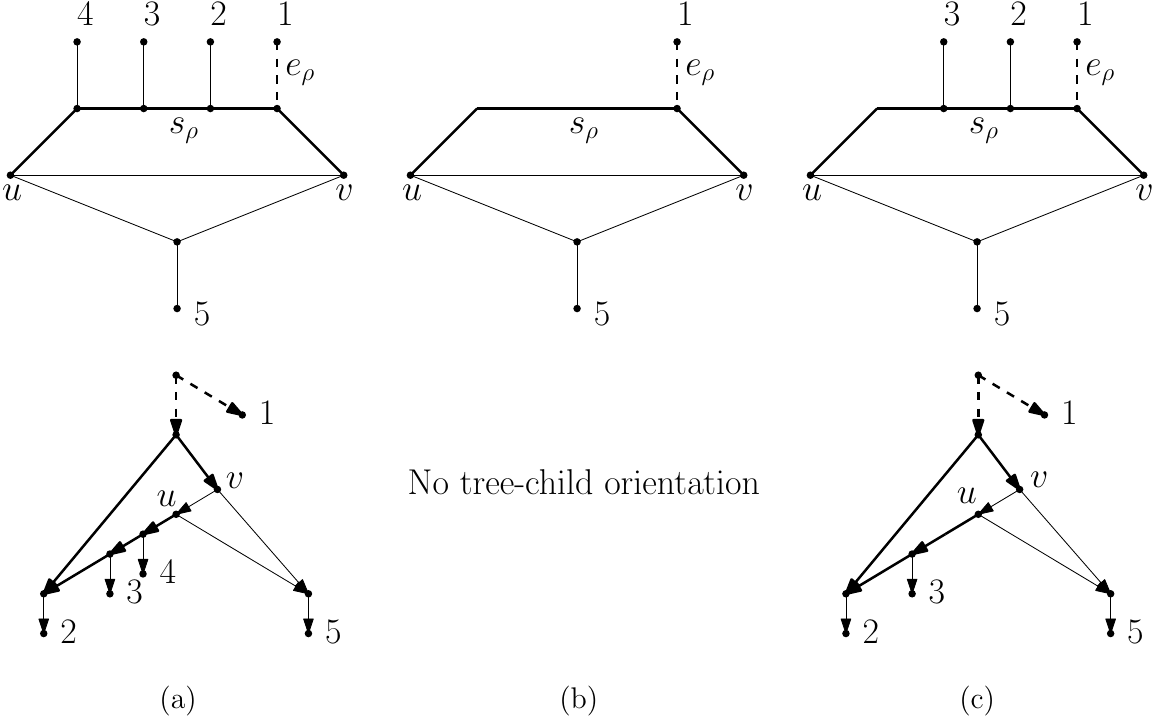}
\caption{\label{fig:ChainReductionOneSide} \red{An} example of rooted $3$-chain reduction\red{s}. Subfigure (a) shows an undirected binary phylogenetic network $N$ that has a tree-child orientation rooted at edge $e_{\rho}$ as shown below it. The side of the generator $G(N)$ that contains the root is denoted~$s_\rho=\{u,v\}$. As the class of tree-child networks is rooted $3$-chain reducible, \red{with respect to $e_\rho$}, a rooted $3$-chain reduction \red{on} $N$ \red{from at least one of~$u$ and~$v$ results} in an undirected binary phylogenetic network that can be tree-child rooted at $e_{\rho}$. Subfigure~(b) \red{shows a} rooted $3$-chain reduction on $N$ \red{from $u$, but, as indicated in (b), it cannot be tree-child rooted at $e_\rho$}.
However, \red{as shown in Subfigure~(c), a rooted $3$-chain reduction from $v$ results in an undirected binary phylogenetic network that can be tree-child rooted at $e_{\rho}$.}}
\end{figure}

\blue{The next two lemmas will be used to show that if a class of directed binary phylogenetic networks is rooted $\ell$-chain reducible, leaf-addable, and blob-determined, then it is $\ell$-chain reducible.}

\begin{lemma}\label{lem:CombinedRootInferenceSimple}
Let $C$ be a \remiee{rooted} \blue{$\ell$}-chain reducible, leaf-addable class of directed \blue{binary phylogenetic} networks, and let $N$ be an undirected binary phylogenetic network that is $C$-orientable. \latest{Let $N'$ be the undirected binary phylogenetic network obtained from $N$ by an $\ell$-chain reduction on~$N$}. \remiee{\blue{Suppose that} $s=\{u, v\}$ \blue{is} a side of \blue{$G(N)$} \leooo{that contains at least \blue{$\ell$} leaves}}. Let $c_1, c_2, \ldots, c_{\blue{n_s}}$ denote the leaves \blue{of $N$ on $s$} ordered from $u$ to $v$, and let $c'_1, c'_2, \ldots, c'_{\blue{\ell}}$ denote the leaves of $N'$ \leooo{on} $s$ ordered \blue{from $u$ to $v$}. \blue{Then each of the following hold:}
\begin{enumerate}[label=\textup{(\roman*)}]
\item If $i\in \{1, 2, \ldots, \blue{\ell}\}$ \blue{and} $N'$ can be $C$-rooted at $c'_i$, then $N$ can be $C$-rooted at $c_j$ for all $j\in \{i, \blue{i+1}, \ldots, n_{\blue{s}}-(\blue{\ell}-i)\}$.
\item If $j\in \{1, 2, \ldots, \blue{n_s}\}$ \blue{and} $N$ \leooo{can be $C$-rooted at $c_j$}, then $N'$ \leooo{can be $C$-rooted at} $c'_i$ for some $i$ \blue{satisfying} $j\in\{i, \blue{i+1}, \ldots, n_{\blue{s}}-(\blue{\ell}-i)\}$.
\end{enumerate}
\end{lemma}

\begin{proof}
\blue{For (i)}, suppose that $N'$ can be $C$-rooted at $c'_i$, \blue{where $i\in \{1, 2, \ldots, \ell\}$}, and \blue{let $N^{\ell}_i$ be a $C$-orientation of $N'$ rooted at $c'_i$}. \leooo{Let $j\in \{i, i+1, \ldots, n_{\blue{s}}-(\blue{\ell}-i)\}$.} \blue{Now construct an orientation $N_j$ of $N$ from $N^{\ell}_i$ as follows. First,} add \blue{back} $j-i$ leaves \blue{on the directed path from} the neighbour \blue{$u'_i$} of $c'_i$ to $u$ and \blue{add back $(n_s-j)-(\ell-i)$} 
leaves \blue{on the directed path from $u'_i$ to} $v$ \blue{relabelling the leaves ordered from $u'_i$ to $u$ and $u'_i$ to $v$ as $c_{j-1}, c_{j-2}, \ldots, c_1$ and $c_{j+1}, c_{j+2}, \ldots, c_{n_s}$, respectively, and relabelling the leaf adjacent to $u'_i$ as $c_j$. \latest{Note that, as $j\leq n_s-(\ell-i)$, it follows that $(n_s-j)-(\ell-i)\ge 0$.} Now extend the resulting orientation by adding back the remaining leaves deleted in the reduction \latest{at their original location}.} \blue{This gives $N_j$, an orientation of $N$ rooted at $c_j$.} Since $C$ is leaf-addable \blue{and $N^{\ell}_i$ is a $C$-orientation}, \blue{it follows that $N_j$ is a $C$-orientation of $N$} rooted at $c_j$. \blue{This establishes (i).}

\blue{To prove (ii)}, suppose that $N$ can be $C$-rooted at $c_j$, \blue{where $j\in \{1, 2, \ldots, n_s\}$}. Since $C$ is \remiee{rooted} \blue{$\ell$}-chain reducible, \blue{there is a rooted $\ell$-chain reduction} $N''$ \blue{on $N$ with respect to the edge incident with $c_j$} that can be rooted at $c_j$. \blue{Without loss of generality}, we may assume that \blue{in this reduction we} deleted all \blue{the} leaves on~$s$ between \blue{the neighbour $u_j$ of $c_j$} and $u$, and if there are at least \blue{$\ell$} leaves \blue{on $s$} between \blue{$u_j$} and $v$, we deleted all but $\blue{\ell}-1$ of these leaves.


First assume that $j\geq n_{\blue{s}} - (\blue{\ell}-1)$, and let $j=n_{\blue{s}}-t$, where $t\le \blue{\ell}-1$. \blue{In this case}, no leaves \blue{of $N$ on $s$} are deleted between \blue{$u_j$} and $v$ to obtain $N''$.
\blue{Thus} $N''$ \blue{has} exactly $t+1$ leaves on $s$ and \may{(by definition of $c_j$ and of rooted $\ell$-chain reduction)} $N''$ can be $C$-rooted at $c_j$ the \blue{first} leaf on $s$ \blue{ordered from $u$ to $v$}. \blue{Let $N^{\ell}_j$ denote a $C$-orientation of $N''$ rooted at $c_j$. We next construct an orientation $N^{\ell}_i$ of $N'$ from $N^{\ell}_j$ as follows.} Add \blue{back} $\blue{\ell}-(t+1)$ leaves \blue{on the directed path from $u_j$ to} $u$, so that we have exactly \blue{$\ell$} leaves on $s$, and relabel the leaves \blue{ordered from $u_j$ to $u$ and from $u_j$ to $v$ as $c'_{i-1}, c'_{i-2}, \ldots, c'_1$ and $c'_{i+1}, c'_{i+2}, \ldots, c'_{\ell}$, respectively, and relabel $c_j$ as $c'_i$.} \blue{This gives $N^{\ell}_i$. Since $G(N^{\ell}_i)$ \latest{is isomorphic to} $G(N')$ and each side $s$ of $G(N^{\ell}_i)$ and $G(N')$ contains the same number of leaves, it follows that}, up to relabelling \blue{the leaves on each side, $N^{\ell}_i$ is an orientation of $N'$.} \blue{Thus, as $C$ is leaf-addable, $N'$ has a $C$-orientation} rooted at $c'_i$, \blue{where} $i = \blue{\ell} - t = \blue{\ell} - (n_{\blue{s}}-j)$. Since $n_{\blue{s}} -(\blue{\ell} - i) = j$, we have $j\in \{i, \blue{i+1}, \dots, n_s - (\ell - i)\}$ \latest{as required}.

Now assume that $j < n_{\blue{s}} - (\blue{\ell} - 1)$. \blue{Then}, \may{by applying the rooted $\ell$-chain reduction, we delete all leaves \blue{of $N$} on $s$ between $u$ and \blue{$u_j$} while keeping the network $C$-rootable at the leaf-edge incident to $u_j$. Hence,} we have that \leooo{$N''$ can be $C$-rooted at the \blue{first} leaf on $s$ \blue{ordered from $u$ to $v$}. Moreover, as $ j < n_{\blue{s}} - \latest{(\blue{\ell} - 1)}$, side~$s$ \blue{of} $N''$ contains exactly \blue{$\ell$} leaves. Therefore, \blue{as $G(N')$ \latest{is isomorphic to} $G(N'')$}, up to relabelling \blue{the leaves on each side}, $N'$ is \blue{isomorphic} to $N''$. Hence, $N'$ can be $C$-rooted at $c'_1$. Since $j\geq i$ and $n_{\blue{s}}-(\blue{\ell}-i) \may{\geq} n_{\blue{s}}-(\blue{\ell}-1)>j$ \latest{as $i\ge 1$}, we have $j \in \{i, \may{i+1}, \dots, n_{\blue{s}} - (\blue{\ell} - i)\}$, \latest{again, as required}.} \blue{This completes the proof of (ii) and the lemma.}
\end{proof}

The next lemma \blue{is the non-pendant edge analogue of Lemma~\ref{lem:CombinedRootInferenceSimple}}.


\begin{lemma}\label{lem:CombinedRootInferenceOtherEdges}
Let $C$ be a \remiee{rooted} \blue{$\ell$}-chain reducible, leaf-addable, blob-determined class of directed \blue{binary phylogenetic} networks, and let $N$ be an undirected \blue{binary phylogenetic} network that is $C$-orientable. \latest{Let $N'$ be the undirected binary phylogenetic network obtained from $N$ by an $\ell$-chain reduction on $N$.} Suppose that $s=\{u, v\}$ is a side of \blue{$G(N)$} that contains at least \blue{$\ell$} leaves. Let $e_0, e_1, \ldots, e_{\latest{n_s}}$ denote the edges \blue{of $N$} on~$s$ ordered from $u$ to $v$, and let $c'_1, c'_2, \ldots, c'_l$ denote the leaves \blue{of} $N'$ on $s$ ordered from $u$ to $v$. \blue{Then each of the following hold:}
\begin{enumerate}[label=\textup{(\roman*)}]
\item If $i\in \{1, 2, \ldots, \blue{\ell}\}$ \blue{and} $N'$ can be $C$-rooted at $c'_i$, then $N$ can be $C$-rooted at $e_j$ for all $j\in \{i-1, i, \ldots, n_{\blue{s}}-(\blue{\ell}-i)\}$.
\item If $j\in \{0, 1, \ldots, n_{\blue{s}}\}$ \blue{and} $N$ can be $C$-rooted at $e_j$, then $N'$ can be $C$-rooted at $c'_i$ for some $i$ \blue{satisfying} $j\in\{i-1, i, \ldots, n_{\blue{s}}-(\blue{\ell}-i)\}$.
\end{enumerate}
\end{lemma}

\begin{proof}
\blue{Let $e_j$ be an edge of $N$ on $\latest{s}$, and} \leooo{let $N_j$ be the \blue{undirected binary phylogenetic} network obtained from $N$ by subdividing $e_j$ \blue{with a new vertex} and \blue{adjoining} a new leaf \blue{to this vertex via} a \blue{new} edge. Let $c_1, c_2, \ldots, c_{n_{\blue{s}}+1}$ denote the leaves \blue{of} $N_j$ on $s$ ordered from $u$ to $v$. \blue{Thus} $c_{j+1}$ is the new leaf. \blue{Note that $G(N)=G(N_j)$.} Since $C$ is blob-determined, $N_j$ can be $C$-rooted at $c_{j+1}$ if and only if $N$ can be $C$-rooted at $e_j$.} \leoooo{Let $N_j'$ be \blue{the undirected binary phylogenetic} network obtained from $N_j$ by \blue{an $\ell$-chain reduction} \latest{on $N_j$}.}

\blue{For the proof of} (i), assume that~$N'$ can be $C$-rooted at $c'_i$, where $i\in \{1, 2, \ldots, \blue{\ell}\}$. \latest{L}et $j\in\{i-1, i, \ldots, n_{\blue{s}}-(\blue{\ell}-i)\}$. Up to relabelling leaves, $N_j'$ is \blue{isomorphic} to $N'$, \blue{and so} $N_j'$ can be $C$-rooted at the $i$-th leaf on $s$ \blue{ordered from $u$ to $v$}. Therefore, as $N_j$ has $n_{\blue{s}}+1$ leaves on side $s$, it follows by Lemma~\ref{lem:CombinedRootInferenceSimple}(i) applied to $N_j$ that~$N_j$ can be $C$-rooted at~$c_{j'}$ for all~$j'\in\{i, i+1, \ldots, \blue{(n_s-(\ell-i))+1}\}$. In particular, as $j+1\in\{i, i+1, \ldots, \blue{(n_s-(\ell-i)\latest{)}+1}\}$, \blue{we have that} $N_j$ can be $C$-rooted at $c_{j+1}$. \blue{Thus} $N$ can be $C$-rooted at $e_j$.

\blue{To prove} (ii), assume that $N$ can be $C$-rooted at $e_j$, where $j\in \{0, 1, \ldots, n_{\blue{s}}\}$. Then $N_j$ can be $C$-rooted at $c_{j+1}$. \latest{To see this, take a $C$-orientation of $N$ rooted at $e_j$, and \red{orient} the edges of $N_j$, except the pendant edge incident with $c_{j+1}$, in the same direction as the corresponding edges of the $C$-orientation of $N$. Now subdivide the edge incident with $c_{j+1}$ by a vertex~$w$ and \red{orient} the two edges incident with~$w$ away from it. The resulting directed binary phylogenetic network is a $C$-orientation of $N_j$ rooted at $c_{j+1}$.}

By Lemma~\ref{lem:CombinedRootInferenceSimple}(ii), $N_j'$ can be $C$-rooted at~$c'_i$ for some $i$ \blue{satisfying} $j+1\in\{i, i+1, \ldots, \blue{(n_s-(\ell-i))+1}\}$. Since $N_j'$ is isomorphic to $N'$ up to relabelling leaves, $N'$ can be $C$-rooted at $c'_i$ for some $i$ \blue{satisfying} $j\in\{i-1, i, \ldots, n_{\blue{s}}-(\blue{\ell}-i)\}$. \blue{This completes the proof of (ii) and the lemma.}
\end{proof}

\blue{A consequence of the last two lemmas is the following proposition.}

\begin{proposition}\label{prop:3conditionslChainRed}
\remiee{Let $C$ be a rooted \blue{$\ell$}-chain reducible, leaf-addable, blob-determined class of directed binary \blue{phylogenetic} networks. Then $C$ is \blue{$\ell$}-chain reducible.}
\end{proposition}

\begin{proof}
\latest{To see that $C$ is $\ell$-chain reducible, observe that} \remiee{properties \blue{(i) and (ii)} of \blue{$\ell$}-chain reducibility follow directly from Lemmas~\blue{\ref{lem:CombinedRootInferenceSimple} and~\ref{lem:CombinedRootInferenceOtherEdges}}, \blue{while} property (iii) \blue{of $\ell$-chain reducibility is a consequence of $C$ being leaf-addable}.}
\end{proof}

\subsection{\shortened{Tree-child} networks}\label{tree-child}

In this section, we establish Theorem~\ref{the:combining_tc}.
Recall that a \blue{directed binary phylogenetic} network \blue{$N$} is tree-child if every non-leaf vertex has a child that is a tree vertex. \blue{Equivalently, $N$} is tree-child if and only if \blue{$N$ has no {\em stack reticulations}, two reticulations one of which is the parent of the other, and no {\em sibling reticulations}, two reticulations sharing a common parent} (see~\cite{sem16}). \blue{This equivalence will be used throughout this subsection.}

\blue{Let $N$ be a directed binary phylogenetic network.} Since adding leaves \blue{to $N$} cannot create \blue{any} stack or \blue{sibling reticulations}, it \blue{follows} that the class of \blue{binary} tree-child networks is leaf-addable. The next two lemmas show that this class is also blob-determined and rooted $3$-chain reducible, and thus, \blue{by Proposition~\ref{prop:3conditionslChainRed}}, $3$-chain reducible.

\begin{lemma}
\shortened{The class of binary tree-child networks} is blob-determined.
\label{lem:tcsfblob}
\end{lemma}

\begin{proof}
\blue{Let $N$ be a} directed \blue{binary phylogenetic} network. If \blue{$N$} is tree-child, then \blue{$N$ has no stack and no sibling reticulations, and so every directed binary phylogenetic} network induced by a blob \blue{of $N$} is tree-child.

Now suppose that all \blue{directed binary phylogenetic} networks induced by a blob of $N$ are tree-child. Then each \blue{such} network \blue{has no stack and no sibling reticulations}. \latest{Observe that} every reticulation of $N$ is contained in a blob of $N$ \latest{and that both parents of a reticulation are in the same blob \may{(because there are paths from the root to each parent)}. If~$N$ contains sibling reticulations, then \may{(for similar reasons)} their common parent must be in the same blob as each of the reticulations, and so the network induced by this blob would also contain sibling reticulations, a contradiction. Similarly,~$N$ cannot contain stacks. Thus} $N$ is tree-child. \blue{This completes the proof of the lemma.}
\end{proof}

\blue{By Lemma~\ref{lem:tcsfblob}}, the class of \blue{binary} tree-child
networks is blob-determined. \blue{Therefore}, as explained in Section~\ref{sec:RootingBlobs}, \blue{in the process of deciding if an} undirected \blue{binary phylogenetic} network $N$ \blue{has a tree-child orientation}, we may assume that $N$ has \blue{no} non-trivial pendant \red{phylogenetic} subtrees. The analogous assumption holds \shortened{for other classes}.

\begin{lemma}\label{lem:3chainreducibleTCSF}
The class of \blue{binary} \shortened{tree-child networks is} \remiee{rooted} $3$-chain reducible.
\end{lemma}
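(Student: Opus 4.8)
The plan is to show that the classes tree-child and stack-free are $3$-chain reducible, which by Definition~\ref{def:chain_reducible} means the following. For an undirected network $N$ with a pendant root-edge $e_\rho$ such that $N$ can be $C$-rooted at $e_\rho$, I must exhibit an endpoint $u$ of the side $s_\rho$ containing the root so that, after applying the chain reduction with parameter $l=3$, the reduced network can still be $C$-rooted at $e_\rho$.

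**Approach.** First I would fix a $C$-orientation $N^r$ of $N$ with root on $e_\rho$ and analyze where the forbidden structures (stacks and $W$-shapes) can live relative to a chain. The key observation is that a maximal chain on a side $s$ of the generator corresponds, after orientation, to a directed path of tree vertices each carrying one pendant leaf, and deleting interior leaves of such a chain only suppresses degree-$2$ (indegree-$1$, outdegree-$1$) vertices. Crucially, suppressing such vertices cannot \emph{create} a stack or a $W$-shape, since both forbidden structures involve reticulations and the suppressed vertices are tree vertices in the interior of the chain. Hence deleting leaves from a side \emph{other than} $s_\rho$ down to $l=3$ cannot destroy tree-childness or stack-freeness; keeping $l\geq 2$ (indeed $l\geq 1$ suffices for most of the side) preserves enough tree vertices at the two endpoints of each side to retain a tree-child child wherever the generator endpoints are reticulations. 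I would verify that three leaves per side is enough to simultaneously cover both endpoints of the side and any internal structure that the orientation relies on.

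**Choice of $u$ and the root side.** The heart of the argument is the side $s_\rho$ carrying the root, because the root placement constrains which direction the chain is traversed in the orientation. Here is where the choice of endpoint $u$ matters, and this is the main obstacle. In the orientation $N^r$, the root on $e_\rho$ induces a direction along $P_{s_\rho}$; the leaves between the two generator-endpoints split into those on the ``upstream'' part (between one generator vertex and $e_\rho$) and the ``downstream'' part (between $e_\rho$ and the other generator vertex). I would choose $u$ to be the endpoint of $s_\rho$ that is the \emph{upstream} generator vertex, i.e.\ the one reachable from the root without first crossing $e_\rho$. Then deleting \emph{all} upstream leaves and keeping only $l-1=2$ downstream leaves is safe: the upstream leaves hang off tree vertices whose parent-child orientation is entirely downstream-pointing, so suppressing them cannot introduce a reticulation-related obstruction, and the $2$ retained downstream leaves preserve the tree-child witnesses near the downstream generator endpoint. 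The subtle point is confirming that after this asymmetric deletion the root can still be placed on $e_\rho$ and that the endpoint adjacent to the root retains a tree-vertex child; I expect to argue this by exhibiting the reduced orientation explicitly, re-using $N^r$ restricted to the surviving vertices and checking the two forbidden configurations only at the $O(1)$ vertices adjacent to the reduced chain.

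**Organization.** Concretely, I would (1) handle tree-child fully, treating the non-root sides first with a short argument that suppression preserves the forbidden-structure characterization, then the root side with the endpoint choice above; (2) observe that the stack-free case is \emph{easier}, since only stacks (not $W$-shapes) are forbidden, so every step in the tree-child argument applies verbatim with one of the two forbidden structures dropped. I expect the only genuine difficulty to be bookkeeping in the root-side case: making sure the chain-reduction bound ($l-1$ downstream leaves, all upstream leaves removed) aligns exactly with Definition~\ref{def:chain_reducible} and that the chosen $u$ gives a network still rootable at $e_\rho$, rather than the verification that $3$ leaves suffice per generic side, which is routine once the forbidden-structure reformulation is in hand.
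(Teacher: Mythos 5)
Your high-level strategy (fix a $C$-orientation, delete leaves, and check that no stacks or W-shapes are created) matches the paper's, but there is a genuine gap in your central structural claim. You assert that a side $s$ of the generator of $N$ corresponds, in the orientation, to ``a directed path of tree vertices each carrying one pendant leaf,'' so that deleting interior leaves only suppresses indegree-1, outdegree-1 vertices. This is false in general: an internal vertex of $P_s$ may be a \emph{reticulation} in the orientation (its two parents are its neighbours on $P_s$ and its unique child is its pendant leaf), so a single undirected side can correspond to \emph{two} directed sides meeting at a reticulation. The pendant leaf of such a reticulation cannot be deleted at all (doing so leaves an indegree-2, outdegree-0 non-leaf vertex), and one leaf must also be retained on each of the two directed sub-sides to avoid creating a W-shape or stack where the side attaches to the generator vertices. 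This is exactly why the bound is $l=3$ rather than the $l\leq 2$ you suggest might suffice; the case you dismiss as ``routine'' is the one carrying the content of the lemma.

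The same omission undermines your choice of $u$ on the root side. Since $e_\rho$ is pendant and attached to an internal vertex $v_\rho$ of $P_{s_\rho}$, the orientation directs $P_{s_\rho}$ away from $v_\rho$ \emph{in both directions}; there is no single ``upstream'' endpoint, and your criterion (``reachable from the root without first crossing $e_\rho$'') does not distinguish $u$ from $v$, as neither requires crossing the pendant edge $e_\rho$. The correct choice, as in the paper, is to take $u$ to be an endpoint such that the sub-path of $P_{s_\rho}$ from $v_\rho$ to $u$ contains no internal reticulation; such an endpoint exists because at most one of the two directed sides leaving $v_\rho$ can end in a reticulation (otherwise there would be a reticulation cut). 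With that choice, all leaves between $u$ and $e_\rho$ hang off tree vertices and can safely be deleted, while on the other half one keeps at most the internal reticulation's leaf child and one further leaf, matching the $l-1=2$ allowance of the chain reduction. Your argument needs this reticulation-on-a-side analysis, both for the generic sides and for the root side, before it goes through.
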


\begin{proof}

Let $N$ be an undirected \blue{binary phylogenetic} network that can be tree-child rooted at a pendant edge $e_\rho\blue{=\{v_{\rho}, x_{\rho}\}}$, \blue{where} $x_{\rho}$ is a leaf, \blue{and let} $N_d$ be a \blue{tree-child orientation of $N$ rooted at $e_{\rho}$}. Note that $v_{\rho}$ is a tree vertex in $N_d$. \latest{Recall that we assume that~$N$, and therefore~$N_d$, has reticulation number at least~$2$, so~$G(N_d)$ is well defined and has two sides $s_1$ and $s_2$} leaving $v_{\rho}$. \blue{We next construct a directed binary phylogenetic network $N'_d$ from $N_d$ as follows}. \blue{First}, for each side $s$ of $G(N_d)$ \blue{that} \blue{is neither $s_1$ nor $s_2$ and} contains at least two leaves, delete all except one of the leaves \blue{of} $N_d$ on~$s$ and suppress the resulting vertices of in-degree one and out-degree one. \blue{At this stage of the construction, it is easily seen that the resulting directed binary phylogenetic} network remains tree-child, as no stack \blue{and no sibling reticulations} \blue{have been} created. \blue{Continuing the construction}, delete all leaves \blue{of} $N_d$ that are on either $s_1$ or $s_2$ \blue{of $G(N_d)$}, \blue{and} suppress the resulting vertices of in-degree one and out-degree one. \blue{This gives $N'_d$.} \blue{Like the first part, the second part of the construction also preserves the property of being tree-child.} To see this, observe that at most one of $s_1$ and $s_2$ \blue{has} a reticulation \blue{as an end-vertex}; otherwise, $N$ has a reticulation cut, contradicting Proposition~\ref{prop:necessary}. Hence, \blue{$N'_d$ has no sibling reticulations}.
Moreover, as the root \blue{of $N_d$} is not a reticulation,
\blue{it follows that $N'_d$ has no stack reticulations}. Hence $N_d'$ is tree-child.

Let $N'$ \blue{denote the} underlying undirected \blue{binary phylogenetic} network \blue{of $N'_d$}, \blue{and} let $s_\rho$ be the side of~$G(N')$ containing \blue{$x_{\rho}$}. We \blue{next show} that each side $s\neq s_\rho$ of $G(N')$ contains at most three leaves \blue{of} $N'$. To see this, $s$ corresponds to at most two sides of $G(N'_d)$; \blue{if $s$ corresponds to exactly} two sides, \blue{then these sides} meet at a reticulation \blue{of $N'_d$} with a leaf as \blue{a} child (see Figure~\ref{fig:SideCorrespondence}). \blue{Thus $s$ contains at most three leaves of $N'$.} \blue{On the other hand}, the side $s_\rho$ \blue{of $G(N')$} corresponds to at most three sides of $G(N'_d)$. Namely, $s_1$, $s_2$, and a third side $s_3$ if \remie{an internal vertex of $s_{\rho}$ \blue{corresponds to} a reticulation $r$ of $N'_d$ (see Figure~\ref{fig:SideCorrespondence}). \blue{If a third side $s_3$ exists, then $r$ is the parent of} a leaf \blue{of $N'_d$}. 
}
Sides $s_1$ and $s_2$ \blue{of $G(N'_d)$} contain no leaves \blue{of $N'_d$} and, if it exists, $s_3$ contains at most one leaf \blue{of $N'_d$}. In addition, the leaf~$x_\rho$ is on the side $s_\rho$ \blue{of $G(N')$} and, if \blue{$s_3$ exists and} $r$ has a child that is a leaf, then this leaf is also on the side $s_\rho$ \blue{of $G(N')$}. Hence the side $s_\rho$ also contains at most three leaves \blue{of $N'$} in total, \blue{where} $x_\rho$ \blue{is either} the first or the last leaf when the leaves \blue{of $N'$} on $s_{\rho}$ are ordered.

Let $N^r$ be \blue{the undirected binary phylogenetic network} obtained from $N$ by applying a \remiee{rooted $3$-chain reduction} with respect to~$e_\rho$.
\blue{Since $G(N^r)=G(N')$, it follows that} $N^r$ can be obtained \blue{from $N'$} by adding leaves and, \blue{if necessary}, relabelling leaves. \blue{Therefore, as} $N'$ can be tree-child rooted at~$e_\rho$ and the class of tree-child networks is leaf-addable,~$N^r$ can be tree-child rooted at~$e_\rho$. \blue{Hence $N$ is $3$-chain reducible with respect to $e_{\rho}$. It now follows that the class of binary tree-child networks is $3$-chain reducible.}
\end{proof}

\begin{figure}
    \centering
    \begin{tikzpicture}
	 \tikzset{edge/.style={thick}}
     \tikzset{arc/.style={-Latex,thick}}
	 \begin{scope}[xshift=-.5cm,yshift=0cm]
    \draw[thick, fill, radius=0.06] (0,0) circle node[above] {$u$};
    \draw[ thick, fill, radius=0.06] (0,-1) circle;
    \draw[ thick, fill, radius=0.06] (0,-2) circle;
    \draw[ thick, fill, radius=0.06] (0,-3) circle;
    \draw[ thick, fill, radius=0.06] (0,-4) circle circle node[below] {$v$};;
    \draw[ thick, fill, radius=0.06] (-.5,-1.5) circle;
    \draw[ thick, fill, radius=0.06] (-.5,-2.5) circle;
    \draw[ thick, fill, radius=0.06] (-.5,-3.5) circle;
    \draw[arc] (0,-1) -- (-.5,-1.5);
    \draw[arc] (0,-2) -- (-.5,-2.5);
    \draw[arc] (0,-3) -- (-.5,-3.5);
    \draw[arc] (0,0) -- (0,-1);
    \draw[arc] (0,-1) -- (0,-2);
    \draw[arc] (0,-2) -- (0,-3);
    \draw[arc] (0,-3) -- (0,-4);
	\draw (0,-5) node {(a)};
    \end{scope}
    \begin{scope}[xshift=1.5cm,yshift=0cm]
    \draw[thick, fill, radius=0.06] (0,0) circle node[above] {$u$};
    \draw[thick, fill, radius=0.06] (.4,-1) circle;
    \draw[thick, fill, radius=0.06] (0,-1.5) circle;
    \draw[arc] (.4,-1) -- (0,-1.5);
    \draw[thick, fill, radius=0.06] (.8,-2) circle;
    \draw[thick, fill, radius=0.06] (.4,-2.5) circle;
    \draw[arc] (.8,-2) -- (.4,-2.5);
    \draw[thick, fill, radius=0.06] (1.2,-3) circle;
    \draw[thick, fill, radius=0.06] (1.6,-2) circle;
    \draw[thick, fill, radius=0.06] (2,-2.5) circle;
    \draw[arc] (1.6,-2) -- (2,-2.5);
    \draw[thick, fill, radius=0.06] (2,-1) circle;
    \draw[thick, fill, radius=0.06] (2.4,-1.5) circle;
    \draw[arc] (2,-1) -- (2.4,-1.5);
    \draw[thick, fill, radius=0.06] (2.4,0) circle node[above] {$v$};
    \draw[thick, fill, radius=0.06] (1.2,-3.5) circle;
    \draw[arc] (0,0) -- (.4,-1);
    \draw[arc] (.4,-1) -- (.8,-2);
    \draw[arc] (.8,-2) -- (1.2,-3);
    \draw[arc] (2.4,0) -- (2,-1);
    \draw[arc] (2,-1) -- (1.6,-2);
    \draw[arc] (1.6,-2) -- (1.2,-3);
    \draw[arc] (1.2,-3) -- (1.2,-3.5);
	\draw (1.2,-5) node {(b)};
    \end{scope}
    \begin{scope}[xshift=6cm,yshift=0cm]
    \draw[thick, fill, radius=0.06] (0,0) circle node[above] {$\rho$};
    \draw[thick, fill, radius=0.06] (-.3,-1) circle;
    \draw[thick, fill, radius=0.06] (-.8,-1.5) circle;
    \draw[arc] (-.3,-1) -- (-.8,-1.5);
    \draw[thick, fill, radius=0.06] (-.6,-2) circle;
    \draw[thick, fill, radius=0.06] (-1.1,-2.5) circle;
    \draw[arc] (-.6,-2) -- (-1.1,-2.5);
    \draw[thick, fill, radius=0.06] (-.9,-3) circle node[below] {$u$};
    \draw[thick, fill, radius=0.06] (.3,-1) circle;
    \draw[thick, fill, radius=0.06] (.8,-1.5) circle;
    \draw[arc] (.3,-1) -- (.8,-1.5);
    \draw[thick, fill, radius=0.06] (.6,-2) circle;
    \draw[thick, fill, radius=0.06] (1.1,-2.5) circle;
    \draw[arc] (.6,-2) -- (1.1,-2.5);
    \draw[thick, fill, radius=0.06] (.9,-3) circle node[below] {$v$};
    \draw[arc] (0,0) -- (-.3,-1);
    \draw[arc] (-.3,-1) -- (-.6,-2);
    \draw[arc] (-.6,-2) -- (-.9,-3);
    \draw[arc] (0,0) -- (.3,-1);
    \draw[arc] (.3,-1) -- (.6,-2);
    \draw[arc] (.6,-2) -- (.9,-3);
    \draw (0,-5) node {(c)};
    \end{scope}
    \begin{scope}[xshift=11cm,yshift=0cm]
    \draw[thick, fill, radius=0.06] (0,0) circle node[above] {$\rho$};
    \draw[thick, fill, radius=0.06] (-.3,-1) circle;
    \draw[thick, fill, radius=0.06] (-.8,-1.5) circle;
    \draw[arc] (-.3,-1) -- (-.8,-1.5);
    \draw[thick, fill, radius=0.06] (-.9,-3) circle;
    \draw[thick, fill, radius=0.06] (-.9,-3.5) circle;
    \draw[arc] (-.9,-3) -- (-.9,-3.5);
    \draw[thick, fill, radius=0.06] (.3,-1) circle;
    \draw[thick, fill, radius=0.06] (.8,-1.5) circle;
    \draw[arc] (.3,-1) -- (.8,-1.5);
    \draw[thick, fill, radius=0.06] (.6,-2) circle;
    \draw[thick, fill, radius=0.06] (1.1,-2.5) circle;
    \draw[arc] (.6,-2) -- (1.1,-2.5);
    \draw[thick, fill, radius=0.06] (.9,-3) circle node[below] {$v$};
    \draw[arc] (0,0) -- (-.3,-1);
    \draw[arc] (-.3,-1) -- (-.9,-3);
    \draw[arc] (0,0) -- (.3,-1);
    \draw[arc] (.3,-1) -- (.6,-2);
    \draw[arc] (.6,-2) -- (.9,-3);
    \draw[thick, fill, radius=0.06] (-2.1,0) circle node[above] {$u$};
    \draw[thick, fill, radius=0.06] (-1.7,-1) circle;
    \draw[arc] (-2.1,0) -- (-1.7,-1);
    \draw[thick, fill, radius=0.06] (-2.1,-1.5) circle;
    \draw[arc] (-1.7,-1) -- (-2.1,-1.5);
    \draw[thick, fill, radius=0.06] (-1.3,-2) circle;
    \draw[arc] (-1.7,-1) -- (-1.3,-2);
    \draw[thick, fill, radius=0.06] (-1.7,-2.5) circle;
    \draw[arc] (-1.3,-2) -- (-1.7,-2.5);
    \draw[arc] (-1.3,-2) -- (-.9,-3);
    \draw (-.6,-5) node {(d)};
    \end{scope}
    \end{tikzpicture}
    \caption{\label{fig:SideCorrespondence}\remiee{\blue{The (generic)} correspondence of \blue{the} undirected sides of \blue{the generator of} an undirected \blue{binary phylogenetic} network $N$ to \blue{the} directed sides of \blue{the generator of} an \blue{orientation} $N'$ of $N$. If a side $\{u,v\}$ \blue{of $G(N)$} does not contain the root edge, this side corresponds to either (a) one side \blue{of $G(N')$} or (b) two sides \blue{of $G(N')$} separated by a reticulation with a leaf child. If \blue{the side $\{u, v\}$ of $G(N)$} does contain the root $\rho$, \blue{this side} corresponds to either (c) the two sides of $G(N')$ incident with $\rho$, or to three sides \blue{of $G(N')$ as shown in} (d). In this figure, all degree-one vertices are leaves, except the ones labeled $u$ or $v$.}}
\end{figure}

\blue{By} Lemmas~\ref{lem:tcsfblob} and~\ref{lem:3chainreducibleTCSF}, the
algorithms \blue{of} Section~\ref{sec:fptalgs} are applicable to the class of \blue{binary} tree-child networks. \blue{Thus, by Theorem~\ref{thm:fpt-retic}, Theorem~\ref{the:combining} holds provided
$g(L, \ell)$ is a computable function.} We \blue{end} this subsection with the following lemma, which shows that this is indeed the case.

\begin{lemma}\label{lem:CheckNetworkTC/SF}
Let $C$ be \blue{the} class of \blue{binary} tree-child \blue{networks, and let $N$ be a directed binary phylogenetic} network. Then deciding if $N$ is in $C$ takes $O(n)$ time, \blue{where $n$ is the number of vertices in $N$}.
\end{lemma}

\begin{proof}
\blue{To} check whether \blue{$N$ is tree-child, we simply need to check that no reticulation is in a stack or in a pair of sibling reticulations}. \blue{Since this} only \blue{requires} checking the \blue{(local)} neighbourhood of each vertex, which is of size at most three \blue{as $N$ is} binary, this check can be executed in linear time.
\end{proof}

\section{Discussion}\label{sec:discussion}

We have answered several foundational questions regarding the \blue{orientation of} undirected phylogenetic networks. We have also shown that some of our results \blue{apply} to partly-directed phylogenetic networks. Nevertheless, many interesting questions remain open.

Our results do not apply directly to some of the phylogenetic networks published in the biological literature. \blue{The reason for this is that these phylogenetic networks fall outside the framework of} our definition.
It would be interesting to consider modifications of the definition \blue{given here} that allow for the study of such networks from a mathematical point of view. For example, the \blue{phylogenetic} network of grape \blue{cultivars} in~\cite[Fig. 3]{grapes} contains several interesting complications. Firstly, it can be directly observed that any orientation of this \blue{phylogenetic} network needs to have multiple roots (this can, for example, be concluded from the part of the network containing Muscat of Alexandria, Muscat Hamburg, and Trollinger). Secondly, as well as undirected and directed edges, the \blue{phylogenetic} network contains dotted edges \blue{joining} pairs of cultivars which are siblings or equivalent. \blue{Other} examples \blue{include} the \blue{phylogenetic} network of bears in~\cite[Fig. 4]{bears} and the \blue{phylogenetic} network \blue{of the evolutionary history of Europeans} in~\cite[Fig. 1]{humans}. The \blue{first of these phylogenetic} networks contains bidirected arcs, which we have not taken into account in this paper, \blue{while} the \blue{second} has dotted edges indicating that the direction is either unclear (corresponding to our undirected edges) or bidirectional.

\blue{More explicit (computational) questions are the following.} Given an undirected binary phylogenetic network \blue{$N$}, the problem of deciding if \blue{$N$} has a tree-based orientation is NP-complete \shortened{(Section~\ref{sec:ubtbn})}. Although we have shown that \blue{the analogous decision problems for the classes of binary} tree-child and \blue{binary} stack-free networks are fixed-parameter tractable with respect to the level \blue{of $N$}, it remains open whether \blue{these} problems are polynomial-time solvable. We expect both \blue{decision problems} to be NP-complete, but have not found a proof. \leoo{A related question \blue{concerns} \blue{undirected} nonbinary \blue{phylogenetic} networks. It is common in the literature \blue{for directed nonbinary phylogenetic networks} to have the restriction that each \blue{reticulation} has exactly one outgoing arc. Calling such \blue{phylogenetic} networks \blue{with this restriction} \emph{funneled}, an open question is whether one can decide in polynomial time \blue{if} a given undirected nonbinary \blue{phylogenetic} network has a funneled orientation. This is not always the case, as can be seen (with some effort) from the example shown in Figure~\ref{fig:nofunneled}.}

\begin{figure}[h]
\centering
 \begin{tikzpicture}
	 \tikzset{edge/.style={thick}}
     \tikzset{arc/.style={-Latex,thick}}
	 \begin{scope}[xshift=0cm,yshift=0cm]
    \draw[thick, fill, radius=0.06] (0,0) circle;
    \draw[thick, fill, radius=0.06] (0,1) circle;
    \draw[thick, fill, radius=0.06] (1,0) circle;
    \draw[thick, fill, radius=0.06] (1,1) circle;
    
    \draw[thick, fill,
     radius=0.06] (-.5,-.5) circle node[below] {$x_1$};
     \draw[thick, fill,
     radius=0.06] (1.5,-.5) circle node[below] {$x_2$};
     \draw[thick, fill,
     radius=0.06] (1.5,1.5) circle node[above] {$x_3$};
     \draw[thick, fill,
     radius=0.06] (-.5,1.5) circle node[above] {$x_4$};
     
    \draw[edge] (0,0) -- (0,1);
    \draw[edge] (0,0) -- (1,0);
    \draw[edge] (0,1) -- (1,1);
    \draw[edge] (1,0) -- (1,1);
    \draw[edge] (0,0) -- (1,1);
    \draw[edge] (0,1) -- (1,0);
    \draw[edge] (0,0) -- (-.5,-.5);
    \draw[edge] (0,1) -- (-.5,1.5);
    \draw[edge] (1,0) -- (1.5,-.5);
    \draw[edge] (1,1) -- (1.5,1.5);
    
    \end{scope}

	\end{tikzpicture}
\caption{\label{fig:nofunneled} An undirected nonbinary \blue{phylogenetic} network that has no funneled orientation.}
\end{figure}

\remiee{Another question is whether our results generalize to \blue{directed phylogenetic} networks with a root of out-degree~1 or out-degree greater than 2. This only makes sense when we are allowed to root an \blue{undirected phylogenetic} network at an existing \blue{vertex}, instead of at an edge as we have assumed in this paper. Note that, for \blue{directed binary phylogenetic} networks that are blob determined, rooting at an edge is equivalent to adding a leaf to that edge, and rooting \blue{along} the resulting pendant edge. Similarly, rooting at a \blue{vertex} is equivalent to attaching a leaf to the \blue{vertex} via a new edge, and rooting \blue{along this new edge}. For these reasons, we expect our results to generalize.} \shortened{Finally, it would} be interesting to find out whether the results in Section~\ref{sec:Classes} generalize to partly-directed \blue{phylogenetic} networks.

\bibliographystyle{alpha}
\bibliography{sample}

\newpage
\appendix

\section{More classes}\label{app}

\subsection{\remiee{\charles{Some} classes of directed binary \blue{phylogenetic} networks}}\label{sec:classesintro}

For a class~$C$ of directed \blue{phylogenetic} networks, we say that an undirected \blue{phylogenetic} network $N$ is \emph{$C$-orientable} 
if $N$ is the underlying \blue{phylogenetic} network of some directed \blue{phylogenetic} network~$N'$ in~$C$. \leo{If this is the case,~$N'$ is called a \emph{$C$-orientation} 
of~$N$.} We consider the following prominent classes of directed \leoo{binary} \blue{phylogenetic} networks, see also Figure~\ref{fig:classes}. \red{A vertex $v$ of a directed binary phylogenetic network $N$ is {\em visible} if there is a leaf $\ell\in X$ such that every path from the root of $N$ to $\ell$ traverses $v$.}

\begin{figure}[h]
\centering
 \begin{tikzpicture}
	 \begin{scope}[xshift=0cm,yshift=0cm,yscale=.8]
     \draw (0,0) ellipse (8 and 3.3); 
     \draw (0,0) ellipse (6 and 2.2); 
     \draw (0,0) ellipse (4 and 1); 
     \draw (0,0) ellipse (1.1 and .4); 
     \node at (0,0) {tree-child};
     \node at (3,0) {valid};
     \node at (5,0) {stack-free};
     \node at (7,0) {tree-based};
     \draw[blue] (0,-1.1) ellipse (1.9 and 1.8); 
     \node[blue] at (1.95,-2.6) {orchard};
     \draw[blue] (1.25,-2.45) -- (1.35,-2.6);
     \draw[red] (-2,0) ellipse (3.2 and 1.5); 
     \draw[red] (-0.4,1.3) -- (-0.25,1.45);
     \node[red] at (1.2,1.5) {reticulation-visible};
    \end{scope}
	\end{tikzpicture}
\caption{\label{fig:classes} An overview of the directed \leoo{binary} 
\blue{phylogenetic} network classes \remiee{considered in this paper}. Note that there are eleven regions and each of them can be shown to be nonempty. \leoo{Also note that, in the nonbinary case, the landscape looks a bit different.}}
\end{figure}

Let~$N$ be a directed \leoo{binary} \blue{phylogenetic} network \blue{on $X$}. Then
\begin{itemize}
\item $N$ is \emph{tree-child} if every non-leaf vertex has a child that is a tree vertex. 
\red{It is straightforward to show that $N$ is tree-child if and only if every vertex is visible.} For example, the directed \blue{phylogenetic} network in Figure~\ref{fig:intro} is tree-child.

\item $N$ is \emph{stack-free} if no reticulation has a reticulation as a child. Clearly, \blue{the class of} tree-child \blue{networks} is \blue{contained in the class of} stack-free \blue{networks}. See Figure~\ref{fig:example3} for an example of \blue{an undirected binary phylogenetic} network that is orientable, but not stack-free orientable (and hence not tree-child orientable).

\item $N$ is \emph{tree-based} if it can be obtained from a \leoo{directed} \blue{binary phylogenetic} tree, \blue{called a {\em base-tree}}, in which the root may have out-degree~$1$ or~$2$, by subdividing arcs of the tree (any number of times) and adding arcs \blue{(called {\em linking arcs})} between the subdividing vertices~\cite{FrancisSteel2015,jetten2018nonbinary}. The class \leoo{of binary tree-based networks} contains the class \leoo{of binary stack-free networks~\cite{zhang2016tree}.} For an example of an undirected \blue{binary phylogenetic} network \blue{that is orientable but has} no tree-based orientation, \red{consider the undirected binary phylogenetic network $N$ in} Figure~\ref{fig:nottreebased}. \red{It is easily checked that $N$ is orientable. To see that $N$ is not tree-based orientable, suppose $N$ has such an orientation. Now $N$ has, as subgraphs, two vertex-disjoint subdivisions of the Petersen graph, and at least one of these subdivisions does not contain the root edge. Thus, ignoring the orientation of the arcs, a base-tree of this orientation would realise \may{(together with an extra edge)} a Hamiltonian cycle of the Petersen graph. But it is well known that the Petersen graph is not Hamiltonian.}

\item $N$ is \emph{valid} if deleting any reticulation arc and suppressing its end-\blue{vertices} results in a directed \blue{binary phylogenetic} network. \red{The class of valid networks is contained in the class of stack-free networks and contains the class of tree-child networks}~\cite{murakami2018reconstructing}.
    
\item $N$ is \emph{reticulation-visible} (or \emph{stable}) if \red{every reticulation is visible, in which case $N$ is stack-free (see~\cite{sem18})}.
\end{itemize}
\latest{We list a few relations between classes of directed binary phylogenetic networks. The class of level-$1$ networks is contained in the class of tree-child networks. To see this, \red{first} note that a \red{directed binary phylogenetic network is} tree-child \red{precisely if it has no} vertex \red{whose} two children are both reticulations, \red{and no} reticulation \red{whose only child is also a reticulation} (see~\cite{sem16}). \red{It is easily checked that, if a directed binary phylogenetic network has such a pair of reticulations \may{(that have a common parent or are adjacent)}, then this pair \may{is in the same biconnected component}. Therefore all level-$1$ networks are tree-child.} Furthermore, the class of level-$2$ networks is not contained in the class of stack-free networks (consider a network with exactly two reticulations, one of which is the child of the other) and the class of level-$3$ networks is not contained in the class of tree-based networks~(see e.g.~\cite[Figure~3]{jetten2018nonbinary}).}

\begin{figure}[h]
\centering
 \begin{tikzpicture}
	 \tikzset{edge/.style={thick}}
     \tikzset{arc/.style={-Latex,thick}}
	 \begin{scope}[xshift=0cm,yshift=0cm]
	\draw[thick, fill, radius=0.06] (0,0) circle;
    \draw[thick, fill, radius=0.06] (2,0) circle;
    \draw[thick, fill, radius=0.06] (1,-1) circle;
    \draw[thick, fill, radius=0.06] (0,-2) circle;
    \draw[thick, fill, radius=0.06] (1,-2) circle;
    \draw[thick, fill, radius=0.06] (2,-1) circle;
    \draw[thick, fill, radius=0.06] (2.5,-1.5) circle node[right] {$x$};
    \draw[thick, fill, radius=0.06] (1.5,-2.5) circle node[below] {$y$};
    \draw[edge] (0,0) -- (2,0);
    \draw[edge] (0,0) -- (0,-2);
    \draw[edge] (0,0) -- (1,-1);
    \draw[edge] (2,0) -- (1,-1);
    \draw[edge] (0,-2) -- (1,-2);
    \draw[edge] (2,0) -- (2,-1);
    \draw[edge] (2,-1) -- (1,-2);
    \draw[edge] (0,-2) -- (1,-1);
    \draw[edge] (2,-1) -- (2.5,-1.5);
    \draw[edge] (1,-2) -- (1.5,-2.5);
    \end{scope}
    \begin{scope}[xshift=5cm,yshift=0cm]
	\draw[thick, fill, radius=0.06] (0,0) circle;
    \draw[thick, fill, radius=0.06] (2,0) circle;
    \draw[thick, fill, radius=0.06] (1,-1) circle;
    \draw[thick, fill, radius=0.06] (0,-2) circle;
    \draw[thick, fill, radius=0.06] (1,-2) circle;
    \draw[thick, fill, radius=0.06] (2,-1) circle;
    \draw[thick, fill, radius=0.06] (2.5,-1.5) circle;
    \draw[thick, fill, radius=0.06] (3,-2) circle;
    \draw[thick, fill, radius=0.06] (2,-3) circle;
    \draw[thick, fill, radius=0.06] (1.5,-2.5) circle;
    \draw[thick, fill, radius=0.06] (2.5,-2.5) circle;
    \draw[thick, fill, radius=0.06] (1,-3) circle node[below] {$x_1$};
    \draw[thick, fill, radius=0.06] (2,-3.5) circle node[below] {$x_2$};
    \draw[thick, fill, radius=0.06] (3,-3) circle node[below right] {$x_3$};
    \draw[thick, fill, radius=0.06] (3.5,-2) circle node[right] {$x_4$};
    \draw[thick, fill, radius=0.06] (3,-1) circle node[right] {$x_5$};
    \draw[edge] (0,0) -- (2,0);
    \draw[edge] (0,0) -- (0,-2);
    \draw[edge] (0,0) -- (1,-1);
    \draw[edge] (2,0) -- (1,-1);
    \draw[edge] (0,-2) -- (1,-2);
    \draw[edge] (2,0) -- (2,-1);
    \draw[edge] (2,-1) -- (1,-2);
    \draw[edge] (0,-2) -- (1,-1);
    \draw[edge] (2,-1) -- (3,-2);
    \draw[edge] (1,-2) -- (2,-3);
    \draw[edge] (3,-2) -- (2,-3);
    \draw[edge] (1.5,-2.5) -- (1,-3);
    \draw[edge] (2,-3) -- (2,-3.5);
    \draw[edge] (2.5,-2.5) -- (3,-3);
    \draw[edge] (3,-2) -- (3.5,-2);
    \draw[edge] (2.5,-1.5) -- (3,-1);
    \end{scope}
	\end{tikzpicture}
\caption{\label{fig:example3} Two undirected \blue{binary phylogenetic} networks that are orientable \remiee{(where the root can be inserted into any edge)} but not stack-free orientable. The network to the right can be extended to any number of leaves.}
\end{figure}

\begin{figure}
\centering
 \begin{tikzpicture}
	 \tikzset{edge/.style={thick}}
     \tikzset{arc/.style={-Latex,thick}}
	 \begin{scope}[xshift=0cm,yshift=0cm]
	\draw[thick, fill, radius=0.06] (0,0) circle;
    \draw[thick, fill, radius=0.06] (.75,0) circle;
    \draw[edge] (0,0) -- (.75,0);
    \draw[thick, fill, radius=0.06] (1,1.5) circle;
    \draw[edge] (0,0) -- (1,1.5);
    \draw[thick, fill, radius=0.06] (1.5,1) circle;
    \draw[edge] (1,1.5) -- (1.5,1);
    \draw[thick, fill, radius=0.06] (1,-1.5) circle;
    \draw[edge] (0,0) -- (1,-1.5);
    \draw[thick, fill, radius=0.06] (1.5,-1) circle;
    \draw[edge] (1,-1.5) -- (1.5,-1);
    \draw[thick, fill, radius=0.06] (3,1.5) circle;
    \draw[edge] (1,1.5) -- (3,1.5);
    \draw[thick, fill, radius=0.06] (2.5,1) circle;
    \draw[edge] (3,1.5) -- (2.5,1);
    \draw[thick, fill, radius=0.06] (3,-1.5) circle;
    \draw[edge] (1,-1.5) -- (3,-1.5);
    \draw[thick, fill, radius=0.06] (2.5,-1) circle;
    \draw[edge] (3,-1.5) -- (2.5,-1);
    \draw[thick, fill, radius=0.06] (3.5,.5) circle;
    \draw[edge] (3,1.5) -- (3.5,.5);
    \draw[thick, fill, radius=0.06] (3.5,-.5) circle;
    \draw[edge] (3,-1.5) -- (3.5,-.5);
    \draw[edge] (3.5,.5) -- (3.5,-.5);
    \draw[edge] (1.5,1) -- (2.5,-1);
    \draw[edge] (1.5,1) -- (1.5,-1);
    \draw[edge] (.75,0) -- (2.5,1);
    \draw[edge] (.75,0) -- (2.5,-1);
    \draw[edge] (2.5,1) -- (1.5,-1);
    \draw[thick, fill, radius=0.06] (4,-.5) circle node[below] {$x$};
    \draw[edge] (3.5,-.5) -- (4,-.5);
    \draw[thick, fill, radius=0.06] (5,-.5) circle node[below] {$y$};
    \draw[edge] (5,-.5) -- (5.5,-.5);
    \draw[thick, fill, radius=0.06] (5.5,-.5) circle;
    \draw[thick, fill, radius=0.06] (5.5,.5) circle;
    \draw[edge] (3.5,.5) -- (5.5,.5);
    \draw[edge] (5.5,-.5) -- (5.5,.5);
    \draw[thick, fill, radius=0.06] (6,1.5) circle;
    \draw[edge] (5.5,.5) -- (6,1.5);
    \draw[thick, fill, radius=0.06] (6,-1.5) circle;
    \draw[edge] (5.5,-.5) -- (6,-1.5);
    \draw[thick, fill, radius=0.06] (6.5,1) circle;
    \draw[edge] (6,1.5) -- (6.5,1);
    \draw[thick, fill, radius=0.06] (6.5,-1) circle;
    \draw[edge] (6,-1.5) -- (6.5,-1);
    \draw[thick, fill, radius=0.06] (8,1.5) circle;
    \draw[edge] (6,1.5) -- (8,1.5);
    \draw[thick, fill, radius=0.06] (7.5,1) circle;
    \draw[edge] (8,1.5) -- (7.5,1);
    \draw[thick, fill, radius=0.06] (8,-1.5) circle;
    \draw[edge] (6,-1.5) -- (8,-1.5);
    \draw[thick, fill, radius=0.06] (7.5,-1) circle;
    \draw[edge] (8,-1.5) -- (7.5,-1);
    \draw[thick, fill, radius=0.06] (9,0) circle;
    \draw[edge] (8,1.5) -- (9,0);
    \draw[edge] (8,-1.5) -- (9,0);
    \draw[thick, fill, radius=0.06] (8.25,0) circle;
    \draw[edge] (8.25,0) -- (9,0);
    \draw[edge] (8.25,0) -- (6.5,1);
    \draw[edge] (8.25,0) -- (6.5,-1);
    \draw[edge] (6.5,-1) -- (7.5,1);
    \draw[edge] (7.5,-1) -- (6.5,1);
    \draw[edge] (7.5,-1) -- (7.5,1);
    \end{scope}
	\end{tikzpicture}
\caption{\label{fig:nottreebased} An undirected \blue{binary phylogenetic} network \blue{$N$} that is orientable but not tree-based orientable.}
\end{figure}




A further class of directed \leoo{binary}
\blue{phylogenetic} networks is defined based on the notion of cherry picking. \blue{Let $N$ be a directed binary phylogenetic network.} A \emph{cherry} \blue{of $N$} is a pair of leaves with the same parent. \emph{Picking a cherry} means deleting one leaf of the cherry and suppressing its parent \latest{(where \emph{suppressing} a vertex~$v$ with exactly one parent~$u$ and exactly one child~$w$ means that we delete~$v$ and add an arc~$(u,w)$)}. A \emph{reticulated cherry} \blue{of $N$} is a pair of leaves connected by an (underlying) undirected path
with two internal vertices, exactly one of which is a reticulation. \emph{Picking a reticulated cherry} consists of deleting the middle arc of this path and suppressing its endpoints. \blue{We say $N$} is \emph{\leoo{an orchard network}} if it can be reduced to a single \leo{cherry} 
by \yukiii{repeatedly} picking cherries and reticulated cherries. \may{In particular, all tree-child networks are orchard networks~\cite{bordewich2016determining}.} \blue{If $N$ is an orchard network and $N'$ is obtained from $N$ by picking either a cherry or a reticulated cherry, then $N'$ is an orchard network. This leads to a} linear-time \blue{algorithm for deciding} whether an arbitrary directed binary network is \leoo{an orchard network}~\cite{erdos2019orchards,janssen2018cpn}. \blue{It is also used in the proof of the following proposition which shows that} all orchard networks are tree-based.



\begin{proposition}
\blue{The class of binary orchard networks is contained in the class of binary tree-based networks.}
\end{proposition}

\begin{proof}
\blue{Let $N$ be an binary orchard network.} The proof is by induction on the number \blue{$k$} of arcs of \blue{$N$}. \blue{If $k=2$, then $N$ consists of a pair of leaves adjoined to the root, and so $N$ is tree-based. Now suppose that $k\ge 3$ and that} all \blue{binary} orchard networks with \blue{at most $k-1$ reticulation arcs} are tree-based. \blue{Since $N$ is orchard, it} contains \blue{either} a cherry or a reticulated cherry. Let~$N'$ be \blue{the binary orchard network} obtained from~$N$ by picking a cherry or reticulated cherry. Since~$N'$ is orchard, it follows by the induction \blue{assumption} that $N'$ can be obtained from a \blue{directed binary phylogenetic} tree~$T'$ by subdividing arcs of~$T'$ and adding \blue{linking arcs} between the subdividing vertices.

First suppose that~$N'$ is obtained from~$N$ by picking a cherry consisting of leaves~$x$ and $y$, and say that leaf~$y$ is deleted. Let~$T$ be the \blue{directed binary phylogenetic} tree obtained from~$T'$ by adding leaf~$y$ as a sibling of~$x$, i.e., subdividing the pendant arc \blue{directed into} $x$ with a \blue{vertex} $p$ and adding the arc $(p,y)$. Then~$N$ can be obtained from~$T$ by subdividing arcs of $T$ and adding linking arcs between the subdividing vertices in exactly the same way as~$N'$ is obtained from $T'$. Now suppose that $N'$ is obtained from $N$ by picking a reticulated cherry \blue{in which} the arc~$(u,v)$ \blue{is deleted}. Then $N$ can be obtained from $T'$ by subdividing arcs of~$T'$ and adding linking arcs between the subdividing vertices \remiee{in the same way as~$N'$ is obtained from~$T'$ except that the arc~$(u,v)$ is \blue{also} added as} a linking arc. The result now follows.
\end{proof}

\shortened{In the remainder of this section, we prove the following.}

\begin{theorem}\label{the:combining}
Let $C$ \blue{be} a class of directed binary \blue{phylogenetic} networks. If $C$ is tree-child, stack-free, tree-based, reticulation-visible, valid, or orchard, then
\shortened{Algorithm~\ref{alg:C_orientation} is an FPT algorithm for deciding whether an undirected binary phylogenetic network~$N$ has a $C$-orientation, where the level of~$N$ is the parameter.}
\end{theorem}

\shortened{The proof for stack-free is completely analogous to the proof for tree-child, which has been dealt with in Theorem~\ref{the:combining_tc}. The remaining classes are treated in the following subsections~\ref{treebasedsection}-\ref{reticulation-visible}.}

\subsection{Tree-based networks}\label{treebasedsection}


\blue{In this section, we establish Theorem~\ref{the:combining} for the class of tree-based networks.} There are several characterizations of tree-based networks, and we will use the following due to \cite{zhang2016tree} (also see \blue{\cite{hayamizu2018, jetten2018nonbinary}).}
\blue{Let $N$ be a directed binary phylogenetic network.} A \blue{{\em chain of sibling reticulations}} \blue{of $N$} is a sequence of (not necessarily distinct) vertices
$$u_1,\, v_1,\, u_2,\, \ldots,\, u_k,\, v_k,\, u_{k+1}$$
such that, for all $i\in \{1, 2, \ldots, k\}$, the vertex $v_i$ is a \blue{reticulation and a} child of $u_i$ and~$u_{i+1}$. \blue{Observe that $u_2, u_3, \ldots, u_k$ are necessarily tree vertices.} \blue{If, in addition,} $u_1$ and $u_{k+1}$ are both reticulations, \blue{we say this chain is {\em terminating}}.
The \blue{above-mentioned characterization is that} \blue{$N$} is tree-based if and only if \blue{$N$ has no terminating chain of sibling reticulations}.

\blue{Let $N$ be a directed binary phylogenetic network. If $N$} is tree-based, then, as adding leaves to \blue{$N$} cannot create a \blue{terminating chain of sibling reticulations}, it \blue{follows} that the class of \blue{binary} tree-based networks is leaf-addable. Moreover, \blue{by a similar argument to that proving Lemma~\ref{lem:tcsfblob}}, this class is blob-determined because \blue{if $N$ has} a \blue{terminating chain of sibling reticulations}, it must be contained in a blob \blue{of $N$}. The next lemma shows that the class of \blue{binary} tree-based networks is rooted $2$-chain reducible.





\begin{lemma}\label{lem:TBChainRed}
The class of \blue{binary} tree-based networks is \remiee{rooted} $2$-chain reducible.
\end{lemma}

\begin{proof}
Let~$N$ be an undirected \blue{binary phylogenetic} network that can be tree-based rooted at a pendant edge $e_\rho\blue{=\{v_{\rho}, x_{\rho}\}}$, where $x_{\rho}$ is a leaf, and let $N_d$ be a \blue{tree-based orientation of $N$ rooted at $e_{\rho}$}. Note that $v_{\rho}$ is a tree vertex \blue{in $N_d$}. Let $s_1$ and $s_2$ \blue{denote the two sides of $G(N_d)$ leaving $v_{\rho}$}. \blue{We next construct a directed binary phylogenetic network $N'_d$ from $N_d$ based on $G(N)$.}

Let $s_\rho$ be the side of $G(N)$ containing the root, \red{and let $P_{s_\rho}$ be an undirected path of $N$ corresponding to $s_\rho$}.
First consider the sides $s\neq s_\rho$ of $G(N)$. If $s$ corresponds to a single side of~$G(N_d)$, delete all except one of the leaves \blue{of $N_d$} that are on~$s$, and suppress any resulting vertices of in-degree one and out-degree one. This creates no \blue{terminating chain of sibling reticulations}.
If $s$ corresponds to two sides of $G(N_d)$ meeting at a reticulation~$r$ with a leaf-child~$x_r$, then delete all leaves \blue{of $N_d$} on $s$ except for $x_r$ and one other leaf, and suppress any resulting vertices of in-degree one and out-degree one. This again ensures that no \blue{terminating chain of sibling reticulations} are introduced.

Now consider $s_\rho$,\remie{which corresponds to at most three sides of~$G(N_d)$, namely $s_1$, $s_2$ and, if \remie{an internal vertex of \blue{$P_{s_{\rho}}$ corresponds to} a reticulation~$r$ in $N_d$ \leoooo{adjacent to a leaf}, }
a third side $s_3$ (see Figure~\ref{fig:SideCorrespondence}). If there is no such third side, then delete all leaves \blue{of $N_d$} on \blue{$s_1$ and $s_2$}, and suppress any resulting vertices of in-degree one and out-degree one. If there is such a third side $s_3$, delete all leaves \blue{of $N_d$} on \blue{$s_1$, $s_2$, and $s_3$}, and suppress any resulting vertices of in-degree one and out-degree one. \blue{Note that, by definition}, $x_{\rho}$ and the child of $r$ \blue{are not deleted}. Again, no \blue{terminating chain of sibling reticulations} are created as one end-vertex of \blue{either $s_1$ or $s_2$ is} a tree vertex of $N_d$ whose parent is also a tree vertex.} \blue{This gives $N'_d$.}
Taking the same approach as that which concluded the proof of Lemma~\ref{lem:3chainreducibleTCSF}, it now follows that the class of \blue{binary} tree-based networks is \latest{rooted} $2$-chain reducible.
\end{proof}

\blue{Since deciding if a directed binary phylogenetic network with $n$ vertices is tree-based takes $O(n)$ \revL{time~\cite{zhang2016tree,hayamizu2018}}, it now follows that Theorem~\ref{the:combining} holds} for the class of tree-based networks.

\subsection{Orchard networks}


\blue{In this section, we show that Theorem~\ref{the:combining} holds} for the class of binary orchard networks. We begin by showing that this class is \blue{rooted $3$-chain reducible, leaf-addable, and blob-determined}.

\begin{lemma}\label{lem:orchardLeafAddable}
The class of \blue{binary} orchard networks is leaf-addable.
\end{lemma}

\begin{proof}
Let $N$ be a \blue{binary} orchard network, and suppose $N'$ is obtained from $N$ by adding a single leaf $z$. Since $N$ is orchard, \blue{$N$} can be reduced to a single cherry by a sequence of operation\blue{s} each of which picks either a cherry or a reticulated cherry. \blue{Now} apply the same operations to $N'$ until this is no longer possible, that is \remie{until the next cherry or reticulated cherry to pick is not present.}
There are two \blue{possibilities to consider.}


The first possibility is that the next operation is to pick a cherry, $\{x, y\}$ say, but, without loss of generality, $\{y, z\}$ is a cherry. In this possibility, we \blue{next} pick $\{y, z\}$ by deleting $z$, and then continue the sequence of operations as for $N$, thereby reducing $N'$ to a single cherry, \blue{and so $N'$ is orchard}.

The second possibility is that the next operation is to pick a reticulated cherry, say $\{x, y\}$, where the parent of $x$ is a reticulation $v$. Here, the parent of $y$ is a tree vertex, $u$ say, \blue{and, when we apply the sequence of operations to $N$}, $u$ is a parent of $v$. \blue{In this possibility, the reason $\{x, y\}$ is not a} \remie{reticulated cherry is that either $\{x, z\}$ or $\{y, z\}$ is a cherry, or $\{x, z\}$ is a reticulated cherry in which the parent of $z$ is a child of $u$ and a parent of $v$. In the first \blue{instance}, we next pick $\{x, z\}$ or $\{y, z\}$, \blue{depending on which is a cherry}, by deleting $z$. In the \blue{second instance},} we next pick the reticulated cherry $\{x, z\}$, and then \blue{pick} the \blue{resulting} cherry $\{y, z\}$ by deleting~$z$. \blue{In both instances}, we then continue the sequence of operations as for $N$, thereby reducing $N'$ to a single cherry, \blue{and so again $N'$ is orchard}. The lemma now follows.
\end{proof}

\begin{lemma}\label{lem:orchardblobdetermined}
The class of \blue{binary} orchard networks is blob-determined.
\end{lemma}

\begin{proof}
Let $N$ be \blue{a binary} orchard \blue{network}, and \blue{let} $B$ \blue{be} a blob of $N$. \blue{Since} \blue{$N$} can be reduced \blue{to a single cherry} by picking cherries and reticulated cherries in any order~\cite{erdos2019orchards,janssen2018cpn} \blue{and, if $r$ is a reticulation of $N$, then $r$ and both of its parents are in the same blob}, we can \blue{apply a a sequence of cherry-picking operations} not affecting~$B$ until each out-going cut-arc of $B$ is pendant. Then, by the same reasoning, we can \blue{continue the sequence} picking cherries and reticulated cherries only within $B$ until \blue{$B$} is reduced to a single cherry. \blue{It follows that} the second part of this sequence induces a sequence of cherry-picking operations that reduces the \blue{directed binary phylogenetic} network induced by $B$ to a single cherry. \blue{Thus} the \blue{directed binary phylogenetic} network induced by $B$ is orchard.

\blue{Conversely,} suppose that \blue{every directed binary phylogenetic} network induced by a blob of $N$ is orchard. \blue{Then} we can reduce $N$ \blue{to a single cherry by a sequence of cherry-picking operations} by \blue{systematically} reducing each \blue{pendant} blob. This shows that $N$ is orchard, \blue{and completes the proof of the lemma}.
\end{proof}


\begin{lemma}
The class of \blue{binary} orchard networks is \remiee{rooted} $3$-chain reducible.\label{lem:orchardreducible}
\end{lemma}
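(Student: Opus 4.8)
The plan is to follow the structure of the proof of Lemma~\ref{lem:3chainreducibleTCSF}. First I would root $N$ at the pendant edge $e_\rho$ to obtain a directed orchard orientation $N_d$ whose root is adjacent to the leaf $x_\rho$ and to a tree vertex $v_\rho$, and I write $s_\rho$ for the side of $G(N)$ containing the root. As in the tree-child argument, the goal is to show that the network $N^r$ obtained by applying the chain reduction with respect to $e_\rho$ (with $l=3$) can again be orchard-rooted at $e_\rho$. The one genuinely new ingredient is that, unlike tree-child and tree-based, the class of orchard networks has no convenient forbidden-substructure characterization, so the step ``deleting leaves cannot create a stack, a W-shape or a W-fence'' must be replaced by a direct argument that deleting \emph{chain} leaves keeps a network orchard.

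The crux is therefore the following sublemma: \emph{if $M$ is a directed orchard network and $\ell$ is a leaf whose parent $w$ is a tree vertex whose other child is also a tree vertex, then deleting $\ell$ and suppressing $w$ yields an orchard network $M^-$.} I would prove this by induction on $|V(M)|$. Since $M$ is orchard it admits a pickable cherry or reticulated cherry; because the two children of $w$ are a leaf and a tree vertex (so, in particular, not a leaf and not a reticulation), no pickable cherry or reticulated cherry can use $\ell$ as one of its two leaves or use $w$ as an internal vertex. Hence I may pick any such operation, and, acting on the shared part, it is also valid in $M^-$. The key observation is that after this pick the other child of $w$ is \emph{either} still a tree vertex \emph{or} a leaf, but never a reticulation: the only ways the neighbour $w'$ of $w$ can be suppressed by a single pick (as the parent of a picked cherry, or as the tree side of a picked reticulated cherry) leave $w$ adjacent to a surviving \emph{leaf}. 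In the first case the reduced pair still satisfies the hypothesis and the induction hypothesis applies; in the second case $\ell$ now sits in a cherry, so deleting it is exactly picking that cherry, and picking preserves being orchard. Either way $M^-$ reduces to an orchard network and is thus orchard.

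Granting the sublemma, the rest of the argument mirrors the proof of Lemma~\ref{lem:3chainreducibleTCSF}. Deleting, from the top, all but the bottommost leaf of each side of $G(N_d)$ reduces every side to a single leaf: each deleted leaf has a parent whose other child is the next tree vertex along the side, so the sublemma applies and the network stays orchard. Next, by the same orientability (no degree cut) argument used for tree-child, at most one of the two directed sides leaving $v_\rho$ ends in a reticulation, and I empty the other one, which is again legitimate since every leaf on it has a parent whose other child is a tree vertex (the successive tree vertices on the side and, finally, its non-reticulation endpoint). The resulting network $N'_d$ is orchard, and counting exactly as in the tree-child proof shows that every side other than $s_\rho$ now carries at most three leaves and that $s_\rho$ carries at most three, with $x_\rho$ first or last. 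Writing $N'$ for the undirected network underlying $N'_d$, the chain-reduced network $N^r$ is obtained from $N'$ by adding leaves and relabelling; since orchard networks are leaf-addable, $N^r$ can be orchard-rooted at $e_\rho$, establishing $3$-chain reducibility. I expect the sublemma and its induction --- especially the point that a single pick can never turn the other child of $w$ into a reticulation --- to be the only real obstacle, the leaf-counting and the choice of which side to empty being identical to the tree-child case.
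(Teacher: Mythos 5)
Your sublemma is sound and gives a clean substitute for the first half of the paper's argument: it shows that all leaves on a side of $G(N_d)$ ending in a tree vertex may be deleted, by commuting the deletion past the reduction sequence until the relevant neighbour of $w$ is suppressed and $\ell$ finds itself in a cherry. The genuine gap is at the sides of $G(N_d)$ that end in a reticulation. The bottommost leaf of such a side has a parent whose other child is a reticulation, so your sublemma never applies to it, and your reduction therefore keeps one leaf on \emph{each} of the two sides entering a reticulation $r$ (in addition to the leaf child of $r$). What is actually needed, and what the paper proves with a separate argument, is that for each such \emph{pair} of sides one may delete all but one leaf of the pair, the survivor lying on either side; this requires rerouting the reticulated-cherry pick at $r$ so that it uses the single surviving leaf as its tree-side leaf, with a case analysis depending on whether the original pick used a leaf of $s$ or of $s'$. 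Nothing in your proposal supplies this step.

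The place where the missing claim bites is the final count on the root side $s_\rho$. If an internal vertex of $P_{s_\rho}$ is a reticulation $r$ in $N_d$, then $s_\rho$ corresponds to the sides $s_1,s_2$ leaving $v_\rho$ together with a third side $s_3$ entering $r$; your reduction keeps the bottommost leaf of $s_1$, the leaf child of $r$, and the bottommost leaf of $s_3$, i.e.\ three leaves strictly between $e_\rho$ and one endpoint of $s_\rho$, hence four leaves on $s_\rho$ counting $x_\rho$. The chain reduction with $l=3$ leaves no leaves between $u$ and $e_\rho$ and at most $l-1=2$ leaves between $e_\rho$ and the other endpoint, for either choice of $u$, so the chain-reduced network cannot be obtained from your reduced network by adding leaves, and the concluding leaf-addability step fails. (As written, your argument would at best establish $4$-chain reducibility.) To repair it you need the pair-of-sides claim, which lets you empty $s_1$ entirely and place the pair's single surviving leaf on $s_3$.
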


\begin{proof}
Let $N$ be an \leooo{undirected} \blue{binary phylogenetic} network \blue{that can be} orchard \blue{rooted at a pendant edge $e_{\rho}$}, and let $N_d$ be an orchard orientation \blue{of $N$ rooted at $e_{\rho}$}. \blue{Since $N_d$ is orchard, it can be reduced to a single cherry by a sequence $S$ of cherry-picking operations.} Using $G(N_d)$, \blue{we construct a directed binary phylogenetic network $N'_d$ from $N_d$ as follows.} First, for each side $s=(u, v)$ of $G(N_d)$ in which $v$ is a tree vertex, delete all of the leaves of $N_d$ on $s$, and suppress the resulting vertices of in-degree one and out-degree one. \blue{Second}, for each pairs of sides $s=(u, r)$ and $s'=(u', r)$ of $G(N_d)$ in which $r$ is a reticulation, delete all of the leaves of $N_d$ on $s$ and $s'$ except one leaf whose parent is a parent of $r$, \blue{and suppress the resulting vertices of in-degree one and out-degree one}. This gives $N'_d$. \blue{We next show that $N'_d$ is orchard by showing that a modification of $S$ applied to $N'_d$ reduces $N'_d$ to a single cherry.}

\blue{Noting that $G(N_d)=G(N'_d)$, if we apply $S$ to $N'_d$, then the sequence halts because we have reached either (i) a side $s=(u, v)$ of $G(N_d)$ in which $v$ is a tree vertex and all of the leaves of $N_d$ on $s$ have been deleted, or (ii) two distinct sides $s=(u, r)$ and $s'=(u', r)$ of $G(N_d)$ in which $r$ is a reticulation and all of the leaves on $s$ and $s'$ have been deleted except one leaf, $x$ say, whose parent is also a parent of $r$ in $N_d$.} Without loss of generality, we may assume that $x$ is on $s$.

\blue{Assume we have reached (i).} \blue{Then, in applying $S$ to $N'_d$}, the vertex $v$ is suppressed and $u$ gets the child of $v$, say $x'$, as a child. At this point, \blue{in the analogous application of $S$ to $N_d$}, the vertex $u$ is the root of a pendant subtree, \blue{in which case} we may assume that leaves of this pendant subtree are systematically deleted via picking cherries until the only leaf of the pendant subtree that remains is $x'$. Hence, \blue{by omitting these cherry-picking operations from $S$}, we can continue to apply the resulting sequence to $N'_d$.

\blue{Now assume that we have reached (ii).} \blue{Then, applying $S$ to $N'_d$}, the child of $r$ is a leaf, $y$ say. At this point, \blue{in the analogous application of $S$ to $N_d$}, the next operation involves picking a reticulated cherry $R$, where $y\in R$. There are two possibilities \blue{depending on whether $x\in R$}. If $R=\{x,y\}$, then picking $\{x, y\}$ results in each of \blue{$u$ and $u'$} being the roots of pendant \red{phylogenetic} subtrees, \blue{in which case} we may assume that the leaves in the pendant \red{phylogenetic} subtree rooted at $u$ are systematically deleted via picking cherries until $x$ is a child of $u$ and, similarly, the leaves in the pendant \red{phylogenetic} subtree rooted at $u'$ are systematically deleted via picking cherries until $y$ is a child of $u'$. Therefore, by picking the reticulated cherry $\{x, y\}$, and \blue{omitting the operations involving the picking cherries of the two pendant \red{phylogenetic} subtrees from $S$}, we can continue to apply the resulting sequence to $N'_d$.

\blue{If} $R=\{y, z\}$, \blue{then} $z$ is a leaf of $N_d$ on side $s'$. \blue{Here}, picking $\{y, z\}$ again results in \blue{$u$ and $u'$} being the roots of pendant \red{phylogenetic} subtrees, and so we may assume that the leaves in each of these pendant \red{phylogenetic} subtrees are systematically deleted until $x$ is a child of $u$ and $z$ is a child of $u'$. Therefore, by replacing $\{x, y\}$ with $\{y, z\}$ in $S$ and omitting the operations involving the picking of cherries of the two pendant \red{phylogenetic} subtrees, we can continue to apply the resulting sequence to $N'_d$ but with $z$ replaced by $y$. \blue{A routine induction argument now shows that $N'_d$ is orchard.}

\blue{Let $N'$ denote the underlying undirected binary phylogenetic netwoork of $N'_d$. Using an argument analogous to that in} the proof that showed the class of binary tree-child networks is rooted $3$-chain reducible, \blue{each side of $G(N')$ contains at most three leaves}. \blue{Furthermore, if $N^r$ is the undirected binary phylogenetic network obtained from $N$ by applying a rooted $3$-chain reduction rooted at $e_{\rho}$, then $N^r$ can be obtained from $N'$ by adding leaves and, if necessary, relabelling leaves}. \blue{It follows by Lemma~\ref{lem:orchardLeafAddable} that, as $N'$ can be orchard rooted at $e_{\rho}$}, the class of binary orchard networks is rooted $3$-chain reducible.
\end{proof}

\blue{Since deciding if a directed binary phylogenetic network with $n$ vertices is orchard} takes $O(n)$ time~\cite{janssen2018cpn}, it follows by combining Proposition~\ref{prop:3conditionslChainRed} with Lemmas~\ref{lem:orchardLeafAddable}, \ref{lem:orchardblobdetermined}, and~\ref{lem:orchardreducible} that \blue{Theorem~\ref{the:combining} holds for} the class of orchard networks.

\subsection{Valid networks}

Recall that a \blue{directed binary phylogenetic} network is \emph{valid} if deleting any reticulation arc and suppressing its end-vertices results in a directed binary phylogenetic network. \blue{In particular, this implies that the resulting directed graph has no} parallel arcs \blue{and no} unlabelled out-degree-$0$ vertices. It is \blue{shown in} \cite{murakami2018reconstructing} that a \blue{directed binary phylogenetic} network $N$ is valid if and only if $N$ contains no stack \blue{reticulations} and \blue{no} camels. A \emph{camel} is \blue{two sibling reticulations} in which the common parent and one of the reticulations share a common parent. \blue{An illustration of a camel is shown in} Figure~\ref{fig:camel}. Using this characterization \blue{and taking the same approach as that taken for binary tree-child and binary stack-free networks}, it can be shown that the class of \blue{binary} valid networks is rooted $3$-chain reducible, leaf-addable, and blob-determined, and that deciding if a directed binary phylogenetic network $N$ is valid takes $O(n)$ time, where $n$ is the number of vertices of $N$. \blue{It now follows that Theorem~\ref{the:combining} holds for the class of binary valid networks.}

\begin{figure}[h]
\centering
 \begin{tikzpicture}
	 \tikzset{edge/.style={thick}}
     \tikzset{arc/.style={-Latex,thick}}
	 \begin{scope}[xshift=0cm,yshift=0cm]
    \draw[thick, fill, radius=0.06] (0,0) circle;
	\draw[arc] (-.5,.5) -- (0,0);
	\draw[arc] (0,0) -- (0,-.5);
    \draw[thick, fill, radius=0.06] (1,1) circle node[left] {$v$};
    \draw[thick, fill, radius=0.06] (2,0) circle node[right] {$w$};
    \draw[arc] (1,1) -- (0,0);
    \draw[arc] (1,1) -- (2,0);
    \draw[thick, fill, radius=0.06] (2,2) circle;
    \draw[arc] (2,2) -- (1,1);
    \draw[arc] (2,2) -- (2,0);
    \draw[arc] (2,0) -- (2,-.5);
    \draw[arc] (2,2.5) -- (2,2);
    \end{scope}
	\end{tikzpicture}
\caption{\label{fig:camel} A camel, \blue{where the common parent $v$ of the two reticulations and $w$, one of the two reticulations, share a common parent (also see \cite[Figure~3]{hayamizu2018})}.}
\end{figure}

\subsection{Reticulation-visible networks}
\label{reticulation-visible}

\blue{Lastly, in this section, we show that Theorem~\ref{the:combining} holds for the class of reticulation-visible networks.} Recall that a \blue{directed binary phylogenetic} network $N$ is \emph{reticulation-visible} if, for each reticulation $r$, there is a leaf $x$ such that every path from the root of $N$ to $x$ traverses $r$, in which case $r$ is said to \emph{be visible from} $x$. It is easy to see that the class of \blue{binary} reticulation-visible networks is leaf-addable and blob-determined. \blue{We now show that this class is rooted $3$-chain reducible}.



\begin{lemma}
The class of \blue{binary} reticulation-visible networks is \remiee{rooted} $3$-chain reducible.\label{lem:retvis-3chain}
\end{lemma}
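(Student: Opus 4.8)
The plan is to follow the proof of Lemma~\ref{lem:3chainreducibleTCSF} for tree-child networks, but to replace its ``no stack or W-shape is created'' arguments by arguments that track, for every reticulation, a leaf witnessing its visibility. This is the essential difference: reticulation visibility is a global condition, so after each leaf deletion I must exhibit a surviving witness for every reticulation.

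Concretely, I would start from a reticulation-visible orientation $N_d$ of $N$ rooted at the pendant edge $e_\rho$, with the root incident to $x_\rho$ and a tree vertex $v_\rho$, and let $s_\rho$ be the side of $G(N)$ containing the root. Since $N_d$ is binary, the internal vertices of a directed side of $G(N_d)$ are tree vertices (one of whose two children is a leaf), so a reticulation is never internal to a directed side. In particular, if a reticulation $r$ has a leaf child $x_r$ then $r$ is a vertex of $G(N_d)$ and $x_r$ is attached to it, hence $x_r$ lies on no side and is removed by none of the deletions below; as $r$ is visible from $x_r$, these reticulations automatically keep a witness. The first thing to prove is that deleting all but one leaf on each side of $G(N_d)$ preserves visibility. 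For a reticulation $r$ with a non-leaf child, visibility gives a witness $x$ on some directed side $s'$ with start vertex $w'$. Every root-to-$x$ path reaches $w'$ and then follows $s'$ down to $x$; as $r$ is not internal to $s'$ and not equal to $x$, domination of $x$ by $r$ forces $r=w'$ or $r$ to dominate $w'$, so $r$ dominates $w'$ and therefore every leaf on $s'$. Hence the single leaf we keep on $s'$ is again a witness, and since deleting a leaf only suppresses its tree-vertex parent, no other domination relation changes.

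Next I would show that we may additionally empty both directed sides $s_1,s_2$ leaving $v_\rho$. Any leaf $y$ on $s_1$ or $s_2$ has a unique root-to-$y$ path $\rho\to v_\rho\to\cdots\to y$, in which every vertex has indegree $1$; this path contains no reticulation, so no reticulation dominates $y$ and $y$ witnesses nothing. Deleting all leaves on $s_1$ and $s_2$ therefore removes no witness, and either endpoint of $s_\rho$ may be taken as the vertex $u$ of the chain reduction. Combining the two steps, the network $N'$ obtained from $N$ by keeping one leaf on each side other than $s_1,s_2$ and emptying $s_1,s_2$ inherits a reticulation-visible orientation rooted at $e_\rho$: every reticulation is witnessed either by a retained leaf child, or by the surviving leaf of a side other than $s_1,s_2$, since its witness cannot lie on $s_1$ or $s_2$ by the previous sentence.

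From here the argument would conclude exactly as in Lemma~\ref{lem:3chainreducibleTCSF}: each side of $G(N)$ other than $s_\rho$ corresponds to at most two directed sides of the reduced orientation, possibly meeting at a reticulation with a leaf child, so it carries at most three leaves in $N'$, and $s_\rho$ likewise carries at most three leaves with $x_\rho$ first or last. Thus the network produced by the chain reduction with respect to $e_\rho$ is obtained from $N'$ by adding leaves and relabelling, and since the class of reticulation-visible networks is leaf-addable it can be reticulation-visible rooted at $e_\rho$, giving $3$-chain reducibility. I expect the main obstacle to be precisely this global bookkeeping of witnesses, namely guaranteeing that reducing every side to a single leaf, and emptying the two root-incident sides, never strips the last witness from some reticulation. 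The two supporting facts that make this go through are that visibility from one leaf on a side forces visibility from all leaves on that side (so one retained leaf suffices), and that the indegree-$1$ path from the root to any leaf on $s_1$ or $s_2$ contains no reticulation (so those leaves are witness-free).
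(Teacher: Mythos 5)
Your proof is correct and follows essentially the same route as the paper's: the paper's (much terser) argument rests on exactly your two key facts, namely that visibility of a reticulation from one leaf on a side of $G(N_d)$ implies visibility from every leaf on that side, and that no reticulation is visible from a leaf on either side leaving $v_\rho$, after which it invokes the same endgame as Lemma~\ref{lem:3chainreducibleTCSF}. Your write-up simply supplies the witness-tracking details (including the observation that leaf children of reticulations are not on any directed side and hence survive) that the paper leaves implicit.
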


\begin{proof}
Let $N$ be an undirected \blue{binary phylogenetic} network \blue{that can be} reticulation-visible \blue{rooted at a pendant edge $e_{\rho}=\{v_{\rho}, x_{\rho}\}$, where $x_{\rho}$ is a leaf, and let} $N_d$ be a reticulation-visible orientation of $N$ \blue{rooted at $e_{\rho}$}. In a now familiar way, \blue{we next construct a directed binary phylogenetic network $N'_d$ from $N_d$ as follows}. \blue{Let $s_1$ and $s_2$ denote the two sides of $G(N_d)$ leaving $v_{\rho}$. First,} for each side of $G(N_d)$ \blue{that is neither $s_1$ nor $s_2$}, delete all except one of the leaves of $N_d$ on $s$ \blue{and suppress the resulting vertices of in-degree one and out-degree one}. \blue{At this stage of the construction, the resulting directed binary phylogenetic network} is reticulation-visible since, if a reticulation $r$ of $N_d$ is visible from a leaf $x$ \blue{that is} on side $s$ of $G(N_d)$, then $r$ is visible from each leaf on side $s$. \blue{Continuing the construction,} delete all leaves on \blue{$s_1$ and $s_2$}. As no reticulation \blue{of $N_d$} is visible from any of these leaves, \blue{the resulting directed binary phylogenetic network $N'_d$ is reticulation-visible}. \blue{Analogous} to the way in which the proof of Lemma~\ref{lem:3chainreducibleTCSF} \blue{is completed}, it can now be shown that the class of \blue{binary} reticulation-visible networks is rooted $3$-chain reducible.
\end{proof}

\blue{We next show that deciding if a directed binary phylogenetic network is reticulation-visible can be computed in polynomial time, thereby, in combination with the above, establishes Theorem~\ref{the:combining} for the class of reticulation-visible networks.}

\begin{lemma}
\blue{Let $N$ be a directed binary phylogenetic network. Deciding if $N$ is} reticulation-visible can be done in time $O(kn)$, \blue{where $k$ and $n$ are the number of reticulations and vertices of $N$, respectively}.
\label{visiblelast}
\end{lemma}

\begin{proof}
\remiee{To \blue{decide} if $N$ is reticulation-visible, \blue{it suffices to} check, for each reticulation $r$, whether or not, for each leaf $x$ of $N$, there is directed path from the root of $N$ to $x$ in the directed graph obtained from $N$ by \blue{deleting} $r$ \blue{and its incident arcs}. \blue{If there is no such path for some leaf $y$, then $r$ is visible from $y$; otherwise $r$ is not visible from any leaf.} In a digraph with $n$ vertices, determining which vertices can be reached from some fixed vertex, \blue{in this case the root of $N$}, can be done in $O(n)$ time. As \blue{$N$ has} $k$ reticulations, \blue{the lemma now follows}.}
\end{proof}

\subsection{Undirected tree-based networks}\label{sec:ubtbn}

\begin{figure}
\centering
 \begin{tikzpicture}
	 \tikzset{edge/.style={thick}}
     \tikzset{arc/.style={-Latex,thick}}
     	 \begin{scope}[xshift=0cm,yshift=0cm]
    \draw[thick, fill, radius=0.06] (5,-.5) circle node[below] {$y$};
    \draw[edge,dotted] (5.5,-.5) -- (5,-.5);
    \draw[thick, fill, radius=0.06] (5,.5) circle node[below] {$x$};
    \draw[edge,dotted] (5.5,.5) -- (5,.5);
    \draw[thick, fill, radius=0.06] (5.5,-.5) circle;
    \draw[thick, fill, radius=0.06] (5.5,.5) circle;
    \draw[edge,dotted] (5.5,-.5) -- (5.5,.5);
    \draw[thick, fill, radius=0.06] (6,1.5) circle;
    \draw[edge,dotted] (6,1.5) -- (5.5,.5);
    \draw[thick, fill, radius=0.06] (6,-1.5) circle;
    \draw[edge,dotted] (6,-1.5) -- (5.5,-.5);
    \draw[thick, fill, radius=0.06] (6.5,1) circle;
    \draw[edge] (6.5,1) -- (6,1.5);
    \draw[thick, fill, radius=0.06] (6.5,-1) circle;
    \draw[edge] (6.5,-1) -- (6,-1.5);
    \draw[thick, fill, radius=0.06] (8,1.5) circle;
    \draw[edge] (8,1.5) -- (6,1.5);
    \draw[thick, fill, radius=0.06] (7.5,1) circle;
    \draw[edge] (7.5,1) -- (8,1.5);
    \draw[thick, fill, radius=0.06] (8,-1.5) circle;
    \draw[edge] (8,-1.5) -- (6,-1.5);
    \draw[thick, fill, radius=0.06] (7.5,-1) circle;
    \draw[edge] (8,-1.5) -- (7.5,-1);
    \draw[edge] (9.5,0) -- (8,1.5);
    \draw[edge] (9.5,0) -- (8,-1.5);
    \draw[thick, fill, radius=0.06] (8.25,0) circle;
    \draw[thick, fill, radius=0.06] (9.5,0) circle;
    \draw[edge] (9.5,0) -- (8.25,0);
    \draw[edge] (6.5,1) -- (8.25,0);
    \draw[edge] (8.25,0) -- (6.5,-1);
    \draw[edge] (6.5,-1) -- (7.5,1);
    \draw[edge] (7.5,-1) -- (6.5,1);
    \draw[edge] (7.5,-1) -- (7.5,1);

         \end{scope}
	 \begin{scope}[xshift=6cm,yshift=0cm]
    \draw[thick, fill, radius=0.06] (5,-.5) circle node[below] {$y$};
    \draw[arc] (5.5,-.5) -- (5,-.5);
    \draw[thick, fill, radius=0.06] (5,.5) circle node[below] {$x$};
    \draw[arc] (5.5,.5) -- (5,.5);
    \draw[thick, fill, radius=0.06] (5.5,-.5) circle;
    \draw[thick, fill, radius=0.06] (5.5,.5) circle;
    \draw[arc,dotted] (5.5,-.5) -- (5.5,.5);
    \draw[thick, fill, radius=0.06] (6,1.5) circle;
    \draw[arc] (6,1.5) -- (5.5,.5);
    \draw[thick, fill, radius=0.06] (6,-1.5) circle;
    \draw[arc] (6,-1.5) -- (5.5,-.5);
    \draw[thick, fill, radius=0.06] (6.5,1) circle;
    \draw[arc,dotted] (6.5,1) -- (6,1.5);
    \draw[thick, fill, radius=0.06] (6.5,-1) circle;
    \draw[arc] (6.5,-1) -- (6,-1.5);
    \draw[thick, fill, radius=0.06] (8,1.5) circle;
    \draw[arc] (8,1.5) -- (6,1.5);
    \draw[thick, fill, radius=0.06] (7.5,1) circle;
    \draw[arc] (7.5,1) -- (8,1.5);
    \draw[thick, fill, radius=0.06] (8,-1.5) circle;
    \draw[arc,dotted] (8,-1.5) -- (6,-1.5);
    \draw[thick, fill, radius=0.06] (7.5,-1) circle;
    \draw[arc] (8,-1.5) -- (7.5,-1);
    \draw[fill] (8.81,0.09) rectangle (8.99,-0.09);
    \draw[arc,dotted] (9.5,0) -- (8,1.5);
    \draw[arc] (9.5,0) -- (8,-1.5);
    \draw[thick, fill, radius=0.06] (8.25,0) circle;
    \draw[thick, fill, radius=0.06] (9.5,0) circle;
    \draw[arc,dotted] (8.9,0) -- (8.25,0);
    \draw[arc] (8.9,0) -- (9.5,0);
    \draw[arc] (6.5,1) -- (8.25,0);
    \draw[arc] (8.25,0) -- (6.5,-1);
    \draw[arc] (6.5,-1) -- (7.5,1);
    \draw[arc] (7.5,-1) -- (6.5,1);
    \draw[arc,dotted] (7.5,-1) -- (7.5,1);
    \end{scope}
	\end{tikzpicture}
\caption{\label{fig:nottreebasedbutTBorientable} \remiee{An undirected \blue{binary phylogenetic} network that has a tree-based orientation, but is not tree-based. Left, the undirected \blue{binary phylogenetic} network, \blue{$N$ say}, from \cite{Francis2018}, in which all potential root edges are solid. \red{It follows from the discussion related to Figure~\ref{fig:nottreebased} that $N$ has no base-tree.}
Right, a tree-based orientation of \blue{$N$}, where the root is indicated with a square vertex. The base-tree of the orientation (solid arcs) does not \blue{induce} a base-tree of \blue{$N$}.}}
\end{figure}


In this subsection, we prove that \blue{the decision problem of} deciding whether an undirected \blue{binary phylogenetic} network has a tree-based orientation is NP-complete but can be solved with our FPT algorithms. To prove NP-hardness, we \remiee{use \blue{a reduction from the} NP-complete problem of deciding whether an undirected \blue{binary phylogenetic} network~$N$ is tree-based}, which is defined as follows~\cite{Francis2018}. Such a network~\blue{$N$} is {\em tree-based} if \blue{$N$} can be obtained from an undirected binary phylogenetic tree, \blue{called a {\em base-tree}}, by subdividing the edges of the tree, and adding edges between the subdividing vertices. While it can be decided in polynomial-time whether a directed binary phylogenetic network is tree-based, the \blue{analogous question for} undirected binary phylogenetic networks is NP-complete~\blue{\cite{Francis2018}}. Nevertheless, \blue{the FPT algorithms}, \blue{Algorithm~\ref{alg:binary_blob_orientation_TC_SF} and~\ref{alg:C_orientation}} of Section~\ref{sec:fptalgs}, can be used to decide this latter question. \blue{It is shown in}~\cite{Francis2018} that an undirected binary phylogenetic network $N$ is tree-based if and only if $N$ can be tree-based rooted at each of its pendant edges. \blue{Thus, as the class of directed binary tree-based networks is $\ell$-chain reducible, leaf-addable, and blob-determined}, we can run Algorithm~\ref{alg:binary_blob_orientation_TC_SF} or~\ref{alg:C_orientation} and simply check whether \blue{the set of} returned edges where the network can be tree-based rooted at \blue{includes each of the pendant edges}. Note that, if $N$ has a tree-based orientation, then $N$ is not necessarily tree-based. See Figure~\ref{fig:nottreebasedbutTBorientable} for such an example.

\begin{theorem}\label{thm:npcomplete}
The \blue{decision} problem {\sc Tree-Based Orientation} is NP-complete.
\end{theorem}

\begin{proof}
Given an orientation of an \blue{undirected binary phylogenetic} network, we can check in \revL{polynomial} time whether it is tree-based \revL{\cite{FrancisSteel2015}}. Thus the problem is in the class NP. For NP-hardness, we reduce from the problem of deciding whether an undirected \blue{binary phylogenetic} network $N$ is tree-based, which is NP-complete~\cite{Francis2018}. The reduction works as follows. Pick an arbitrary leaf $\ell$ of $N$, adjoin a copy of $K_4$ to $N$ by subdividing an edge of $K_4$ with a vertex $v$, and identifying $v$ and $\ell$. Let $N'$ denote the resulting undirected \blue{binary phylogenetic} network, which is our instance for {\sc Tree-Based Orientation}. This construction can be done in constant time given a leaf.

If $N$ is tree-based, then $N$ \leo{can be tree-based rooted} at $\ell$ \remiee{\cite[Theorem~4]{Francis2018}}. Observe that $K_4$ with an added leaf can be tree-based rooted at an edge incident to the pendant edge, \latest{see Figure~\ref{fig:K4}}.
Since the class of binary tree-based networks is blob-determined, we can combine these \blue{two} tree-based orientations to \blue{give} a tree-based orientation of $N'$.

\begin{figure}[h]
    \centering
    \begin{tikzpicture}
	 \tikzset{edge/.style={thick}}
     \tikzset{arc/.style={-Latex,thick}}
	 \begin{scope}[xshift=0cm,yshift=0cm]
    \draw[thick, fill, radius=0.06] (0,0) circle;
    \draw[thick, fill, radius=0.06] (.5,-.5) circle node[above right] {$\rho$};
    \draw[thick, fill, radius=0.06] (0,-1) circle;
    \draw[thick, fill, radius=0.06] (0,-2) circle;
    \draw[thick, fill, radius=0.06] (-1,-1) circle;
    \draw[thick, fill, radius=0.06] (1,-1) circle;
	\draw[arc] (0,0) -- (-1,-1);
    \draw[arc] (.5,-.5) -- (0,0);
    \draw[arc,dotted] (.5,-.5) -- (1,-1);
    \draw[arc] (-1,-1) -- (0,-1);
    \draw[arc,dotted] (-1,-1) -- (0,-2);
    \draw[arc] (0,-2) -- (1,-1);
    \draw[arc] (0,-1) -- (0,-2);
    \draw[arc,dotted] (0,0) -- (0,-1);
    \draw[arc] (1,-1) -- (2,-1);
    \draw[fill=lightgray] (3,-1) ellipse (1 and 1);
	\draw (3,-1) node {$N$};
	\draw (1.5,-2.5) node {$N'$};
    \end{scope}
	\end{tikzpicture}
   \caption{\latest{Illustration of the proof of Theorem~\ref{thm:npcomplete}, indicating a tree-based orientation of the part of~$N'$ that is not part of~$N$. The solid arcs form a base-tree. If~$N$ is tree-based, this orientation can be extended to a tree-based orientation of~$N'$.}
    \label{fig:K4}}
\end{figure}

Now suppose $N'$ has a tree-based orientation, and say it can be tree-based rooted at $e_\rho$. Then $e_\rho$ must be an edge of the subdivided $K_4$ \latest{(because otherwise the edge incident to the subdivided~$K_4$ would be directed towards the subdivided~$K_4$ which is impossible as the subdivided~$K_4$ does not contain any leaves)}. Hence $N$ can be tree-based rooted at $\ell$.
\end{proof}

\end{document}